\DeclareMathOperator{\Lie}{\mathcal{L}}
\DeclareMathOperator{\Stemp}{\mathcal{S}}
\DeclareMathOperator{\Atemp}{\mathcal{A}}
\newcommand{\SL}[3]{\text{$\sideset{^{#2}}{_{#1}^{#3}}\Stemp$}}
\newcommand{\A}[4]{\text{$\sideset{^{#2}_{#3}}{_{#1}^{#4}}\Atemp$}}
\DeclareMathOperator{\dLtemp}{\mathcal{G}}
\newcommand{\dL}[3]{\text{$\sideset{^{#2}}{_{#1}^{#3}}\dLtemp$}}
\newcommand{\dnx}[1]{\text{$\partial^{#1}_x$}}
\DeclareMathOperator{\W}{\mathcal{W}}
\theoremstyle{definition}
\newtheorem{theorem}{Theorem}
\newtheorem{proposition}{Proposition}
\newtheorem{corollary}{Corollary}
\newtheorem{lemma}{Lemma}
\newtheorem{assumption}{Assumption}
\newtheorem{definition}{Definition}
\newtheorem{problem}{Problem}
\theoremstyle{remark}
\newtheorem{remark}{Remark}
\begin{document}

This article has been accepted for publication by IEEE Transactions on Automatic Control.

\vspace{1cm}

The manuscript included in this file is the open access accepted version. 

\vspace{1cm}

This open access version is released on arXiv in accordance with the IEEE copyright agreement.

\vspace{1cm}

The final version will be available at IEEE Xplore.

\title{Stochastic Relative Degree and Path-wise Control of Nonlinear Stochastic Systems}
\author{Alberto Mellone, \IEEEmembership{Student Member, IEEE}, and Giordano Scarciotti, \IEEEmembership{Senior Member, IEEE}
\thanks{*This work is partially supported by Shanghai University under the ``Foreign Expert Program'' of Shanghai University (No. 21WZ0109).}%
\thanks{Alberto Mellone and Giordano Scarciotti are with the Department of Electrical and Electronic \mbox{Engineering}, Imperial College London, London, SW7 2AZ, UK. {\tt\small \{a.mellone18,g.scarciotti\}@imperial.ac.uk}}}

\maketitle

\begin{abstract}
We address the path-wise control of systems described by a set of nonlinear stochastic differential equations. For this class of systems, we introduce a notion of stochastic relative degree and a change of coordinates which transforms the dynamics to a stochastic normal form. The normal form is instrumental for the design of a state-feedback control which linearises and makes the dynamics deterministic. We observe that this control is \emph{idealistic}, \emph{i.e.} it is not practically implementable because it employs a feedback of the Brownian motion (which is never available) to cancel the noise. Using the idealistic control as a starting point, we introduce a hybrid control architecture which achieves \emph{practical} path-wise control. This hybrid controller uses measurements of the state to perform periodic compensations for the noise contribution to the dynamics. We prove that the hybrid controller retrieves the idealistic performances in the limit as the compensating period approaches zero. We address the problem of asymptotic output tracking, solving it in the idealistic and in the practical framework. We finally validate the theory by means of a numerical example. 
\end{abstract}

\begin{IEEEkeywords}
Stochastic systems, relative degree, normal form, feedback linearisation, output tracking.
\end{IEEEkeywords}

\section{Introduction}
A point of departure in the study of nonlinear deterministic systems is the definition of the relative degree of the system and, consequently, of a change of coordinates that is able to transform the differential equations in a so-called normal form that makes analysis and control easier. These ideas were first introduced in the seminal work \cite{Isidori1981}, where the authors solved the problem of static state-feedback non-interacting control. The theory of normal forms was later addressed in \cite{Zeitz1983} and \cite{Bestle1983} for the control and observation of time-varying nonlinear systems, and a systematic overview of normal forms was given in \cite{Krener1987}. The problem of feedback linearisation of single-input single-output and multi-input systems was introduced in \cite{Brockett1978} and in \cite{Jakubczyk1980}, respectively, and a systematic procedure to find the feedback-linearising control was provided in \cite{Su1982} and \cite{Hunt1983}. In \cite{Byrnes1984} the notion of zero dynamics was introduced and later employed in \cite{Byrnes1988} to tackle the problem of asymptotic stabilisation of nonlinear systems.

In this paper we introduce the notion of stochastic relative degree to develop a theory of path-wise feedback control for a general class of systems described by nonlinear stochastic differential equations. The advantage of using this stochastic framework lies in the fact that it allows to address uncertainties characterised by probabilistic properties which cannot be captured by the classical deterministic robust designs. Modelling in the stochastic framework is flexible and lends itself to mechanical systems (\emph{e.g.} the quarter-car model), electro-mechanical systems (\emph{e.g.} the suspended gyro), aerospace systems (\emph{e.g.} the satellite dynamics) and mathematical finance. A survey of these applications can be found in \cite[Section 1.9]{Damm2004} and references therein. Stochastic differential equations are also at the basis of methodological applications such as stochastic $H_\infty$ control (see \cite{Hinrichsen1998}, \cite{Gershon2013} and \cite{Hua2018}), filtering and optimal control (see \cite{Yong1999} and \cite{Oksendal2003}).

Some notions of normal forms for stochastic systems have been introduced in the literature. For example, in \cite{Arnold1998} and \cite{Arnold2003}, Stratonovich calculus was used to obtain a normal form for purely diffusive processes, while \cite{Gaeta1999} employed coordinate changes to introduce symmetries for stochastic differential equations. One of the first works suggesting the convenience of a normal form for control of stochastic systems is \cite{Ladhiri1996}, where the change of coordinates makes the dynamics quasi-linear.  In \cite{Pan2001} and \cite{Pan2002}, similar coordinate transformations are proposed in order to reduce the system dynamics to canonical forms that are amenable to specific control strategies. Specifically, strict-feedback noise-prone dynamics are obtained, which allow achieving optimal globally stabilising back-stepping controllers as proposed in \cite{Pan1999}.

The differences between the normal forms mentioned above and the one we introduce in this paper are both technical and in scope. From a technical viewpoint, we propose a coordinate projection and a feedback control that \emph{annihilate} the noise in the linearised coordinates. From an objective viewpoint, the stochastic relative degree and the normal form are mere instrumental means for the design of a practical hybrid controller which achieves output tracking in a path-wise fashion. In other words, the controller that we develop compensates for each \emph{specific} realisation of the stochastic disturbance.

More generally, by defining a new stochastic normal form, the goal of this paper is to address the path-wise control of stochastic systems described by a general class of nonlinear stochastic differential equations. We show that the implementation of feedback laws that perfectly linearise the system dynamics in a new set of coordinates comes with insurmountable causality issues, hence the attribute \emph{idealistic} with which we refer to these control laws. In fact, they employ a feedback of the noise, which is not practically available. However, with these idealistic controls at hand, we introduce hybrid nonlinear controllers that incorporate a causal estimator of the Brownian motion. This controller is practically realisable and, although achieving only approximate feedback linearisation and tracking, its performance can be arbitrarily improved by tuning an underlying parameter, retrieving the idealistic case as a limit behaviour.

While preliminary work has been published in \cite{Mellone2019ter}, \cite{Mellone2020bis}, \cite{Mellone2020ter}, in this paper we present several additional contributions. The main novelties introduced are as follows. 1) All the results are now proved, which provides a substantial theoretical contribution. 2) By leveraging the It\^o-Stratonovich equivalence, we are able to obtain sharper results in the theory of the stochastic normal form. 3) We provide a characterisation of the solvability of the feedback linearisation problem. 4) The theory of the practical control has been revised and made sharper. 5) An analysis of the control challenges arising when the input appears in the diffusion term of the equation has been added. 6) Practical asymptotic output tracking has been addressed and solved for the first time. 7) A number of technical results regarding uniform asymptotic stability of nonlinear time-varying stochastic systems have been added and proved. For the sake of readability, they have been gathered in the Appendix. 8) A comprehensive example illustrates the path-wise output tracking of a nonlinear stochastic system both in the idealistic and practical scenarios. 

The paper is organised as follows. In Section~\ref{sec:preliminaries} we recall some preliminary notions on stochastic systems. Section~\ref{section:srd_normal_form} introduces the stochastic relative degree and normal form. In Section~\ref{section:linearisation} we address the problem of feedback linearisation in the idealistic framework. In Section~\ref{sec:practical_linearisation} we propose a hybrid controller, which practically approximates the idealistic linearising control. Section~\ref{sec:stab_track} addresses the problem of asymptotic output tracking. In Section~\ref{sec:example} we validate the theory by means of a numerical example. Section~\ref{sec:conclusions} contains some concluding remarks. Finally, technical lemmas which are instrumental to prove some of the results of the paper have been collected and proved in the Appendix.

\textbf{Notation.} The symbol $\mathbb{Z}$ denotes the set of integer numbers, while $\mathbb{R}$ and $\mathbb{C}$ denote the fields of real and complex numbers, respectively; by adding the subscript ``$<0$'' (``$\ge 0$'', ``$0$'') to any symbol indicating a set of numbers, we denote that subset of numbers with negative (non-negative, zero) real part. Where convenient, the symbol $\dnx{n}$ is used as a shorthand for the operator $\partial^n/\partial x^n$, while $\alpha^{(n)}$ indicates the $n$-th time derivative of $\alpha$. The Lie derivative of the smooth scalar function $h(x)$ along the vector field $f(x)$ is denoted by $\Lie_f h(x)$. We use the recursive notation $\Lie_f^kh(x) = \Lie_f\Lie_f^{k-1}h(x)$, with $\Lie^0_fh(x) = h(x)$. Given two smooth vector fields $f(x)$ and $g(x)$, we define the operator $\text{ad}_fg(x) = (\partial_xg(x))f(x) - (\partial_x f(x))g(x)$, and, recursively, $\text{ad}_f^k g(x) = \text{ad}_f\text{ad}_f^{k-1}g(x)$ with $\text{ad}_f^{0}g(x) = g(x)$.  $(\nabla, \mathfrak{A}, \mathfrak{P})$ is a probability space given by the set $\nabla$, the $\sigma$-algebra $\mathfrak{A}$ defined on $\nabla$ and the probability measure $\mathfrak{P}$ on the measurable space $(\nabla, \mathfrak{A})$. A \textit{stochastic process} with state space $\mathbb{R}^n$ is a family $\{x_t,\,t\in\mathbb{R}\}$ of $\mathbb{R}^n$-valued random variables, \textit{i.e.} for every fixed $t\in\mathbb{R}$, $x_t(\cdot)$ is an $\mathbb{R}^n$-valued random variable and, for every fixed $w \in \nabla$, $x_{\cdot}(w)$ is an $\mathbb{R}^n$-valued function of time \cite[Section 1.8]{Arnold1974}. For ease of notation, we often indicate a stochastic process $\{x_t,\,t\in\mathbb{R}\}$ simply with $x_t$ (this is common in the literature, see \textit{e.g.} \cite{Arnold1974}). With a slight abuse of notation, any subscript different from the symbol ``$t$'' indicates the corresponding component of the vector $x_t$, \emph{e.g.} $x_i$ is the $i$-th component of the vector $x_t$. All mappings appearing as integrands in stochastic integrals are assumed to be integrable in the corresponding sense, namely It\^o's (see, \emph{e.g.}, \cite[Definition 3.1.4]{Oksendal2003}) or Stratonovich's (equivalent to deterministic integrability).

\section{Preliminaries}
\label{sec:preliminaries}
In this section we shortly recall the theory of generalised stochastic processes and define differential operators that will be used in the remainder of the paper.

Let $C^\infty_0(\mathbb{R})$ be the space of all infinitely differentiable functions on $\mathbb{R}$ with compact support \cite[Definition 1.2.1]{Hormander1983}. The following definitions characterise the notions of distribution (also known as generalised function), distributional derivative and generalised stochastic process.

\begin{definition}
	\cite[Definition 3.1]{Duistermaat2010} Let $X$ be an open subset of $\mathbb{R}$. A \emph{distribution on} $X$ is a linear form $\psi$ on $C^\infty_0(\mathbb{R})$ that is also continuous in the sense that
	\begin{equation}
		\lim_{j\rightarrow \infty} \psi(\varphi_j) = \psi(\varphi) \qquad \text{as} \qquad \lim_{j\rightarrow \infty}\varphi_j = \varphi \quad \text{in} \quad C^\infty_0(\mathbb{R}).  
	\end{equation}
\end{definition}

\begin{definition}
\cite[Definition 3.1.1]{Hormander1983} For any distribution $\psi$, its \emph{distributional derivative} $\dot{\psi}$ is defined as the distribution that satisfies
\begin{equation}
\dot{\psi}(\varphi) = - \psi (\dot{\varphi}),
	\quad \forall \varphi \in C_0^\infty(\mathbb{R}).
\end{equation}
\end{definition}
  Note that generalised functions have derivatives of all order, which are generalised functions as well. 
\begin{definition}
	\cite[Section 3.2]{Arnold1974} A \emph{generalised stochastic process} is a random generalised function in the sense that a random variable $\psi(\varphi)$ is assigned to every $\varphi\in C^\infty_0$, where $\psi$ is, with probability 1, a generalised function.
\end{definition}
We now look at the Brownian motion as a generalised stochastic process. Therefore, its distributional derivative is always defined \cite[Section 3.2]{Arnold1974}. In particular, the generalised stochastic process given by such a derivative has zero mean value and covariance function given by the generalised function $\delta(t-s),$ $t,s\in\mathbb{R}$, \emph{i.e.} the Dirac delta. Consequently, the derivative of the generalised Brownian motion is the generalised \emph{white noise} \cite[Section 3.2]{Arnold1974}. In the remainder, with a slight abuse of notation, we refer to generalised Brownian motion and generalised white noise omitting the attribute ``generalised" and we denote them by simply $\mathcal{W}_t$ and $\xi_t$, respectively, with $\xi_t = \dot{\mathcal{W}}_t$. It should be emphasised that the just mentioned time derivative is meant in the sense of distributions, and not as the limit of the difference quotient as the increment tends to zero, which instead applies to differentiable functions in the classical sense.

Consider the nonlinear single-input, single-output stochastic system expressed in the shorthand integral notation by
\begin{equation}
\label{eq:sytem_differentials}
\begin{aligned}
dx_t&= (f(x_t) + g(x_t)u)dt + (l(x_t) + m(x_t)u)d\W_t,\\
y_t &= h(x_t),
\end{aligned}
\end{equation}
with $x_t \in \mathbb{R}^n$, $u\in \mathbb{R}$, $y_t \in \mathbb{R}$ and $f:\mathbb{R}^n\rightarrow \mathbb{R}^n$, $g:\mathbb{R}^n\rightarrow \mathbb{R}^n$, $l:\mathbb{R}^n\rightarrow \mathbb{R}^n$, $m:\mathbb{R}^n\rightarrow \mathbb{R}^n$, $h:\mathbb{R}^n\rightarrow \mathbb{R}$ \emph{smooth} functions, \emph{i.e.} they admit continuous partial derivatives of any order. We assume that, for a fixed initial condition $x_{t=0}$, the solution of~\eqref{eq:sytem_differentials} is unique. Note that, in the light of the previous discussion, system~\eqref{eq:sytem_differentials} can be rewritten in the following differential notation
\begin{equation}
	\label{eq:system_derivatives}
	\dot{x}_t = f(x_t) + g(x_t)u + (l(x_t) + m(x_t)u)\xi_t, \quad y_t = h(x_t).
\end{equation}
Note that when $\xi_t$ is (generalised) white noise, as in this case, then the differential equation~\eqref{eq:system_derivatives} is equivalent to the integral equation~\eqref{eq:sytem_differentials} if the latter is interpreted in It\^o's sense \cite[Section 10.3]{Arnold1974}. Given the equivalence of the two representations in the framework of generalised stochastic processes, in the remainder of the paper equations~\eqref{eq:sytem_differentials} and \eqref{eq:system_derivatives} are used interchangeably, as convenient, to refer to the same nonlinear stochastic system. We refer the reader to \cite[Chapter 3]{Arnold1974} for a detailed discussion on the relation between Brownian motion and white noise, and the representations~\eqref{eq:sytem_differentials} and \eqref{eq:system_derivatives}.

\section{Stochastic Relative Degree and Normal Form}
\label{section:srd_normal_form}
In this section we introduce the concept of stochastic relative degree and show that a suitable coordinate transformation brings the system into a simpler form, which is convenient for analysis and control.

We first introduce three new operators, which are fundamental to systematically define repeated time derivatives of stochastic processes. The first one, which indicates the second derivative of $h$ along the vector fields $f$ and $g$, is defined as
\begin{equation}
\dL{f}{g}{}h(x)\!=\!g(x)^{\top} \dnx{2}[h] \;f(x)=\sum_{j=1}^{n} g_j(x) \sum_{i=1}^{n}\frac{\partial^2 h}{\partial x_j\partial x_i} f_i(x).
\end{equation}
Similarly to the Lie derivative, we use the notation $\dL{a}{b}{}\dL{f}{g}{}h(x)=b(x)^{\top}  \dnx{2}[\dL{f}{g}{}h]\; a(x),$ and $\dL{f}{g}{k}h(x)=g(x)^{\top} \dnx{2}[\dL{f}{g}{k-1}h]\; f(x)$,
to indicate the reiterated operations. The second operator $\SL{f}{l}{}h$, which we call the \emph{stochastic Lie derivative}, indicates the derivative of $h$ along the drift vector field $f$ and diffusion vector field $l$, namely
\begin{equation}
\SL{f}{l}{}h(\xi_t,x)=\Lie_f h(x) + \Lie_l h(x)\xi_t + \frac{1}{2}\dL{l}{l}{}h(x). 
\end{equation}
The reiterated application of this operator can be defined if the white noise does not appear explicitly. That is, if $\SL{f}{l}{}h(\xi_t,x) = \SL{f}{l}{}h(x)$ is a deterministic expression, we use the notation $\SL{f}{l}{2}h(\xi_t,x) = \SL{f}{l}{}\SL{f}{l}{}h(\xi_t,x)$ and, iteratively, if $\SL{f}{l}{k-1}h(\xi_t,x) = \SL{f}{l}{k-1}h(x)$ is deterministic, $\SL{f}{l}{k}h(\xi_t,x) = \SL{f}{l}{}\SL{f}{l}{k-1}h(\xi_t,x)$, with $\SL{f}{l}{0}h(x) = h(x)$ by definition. Finally, we define the third operator
\begin{equation}
\label{eq:first_order_control_term}
	\A{l}{m}{g}{}h(\xi_t,x) = \Lie_g h(x) + \Lie_m h(x) \xi_t + \dL{l}{m}{}h(x).
\end{equation}

Having defined the operators $\dL{}{}{}$, $\SL{}{}{}$ and $\A{}{}{}{}$, it is easy to see that, by using It\^o's formula, the first derivative of the output of system~\eqref{eq:system_derivatives} is given by
\begin{equation}
\label{eq:first_derivative}
y^{(1)}_t = \SL{f}{l}{}h(\xi_t,x_t) + \A{l}{m}{g}{}h(\xi_t,x_t)u + \frac{1}{2}\dL{m}{m}{}h(x_t)u^2.
\end{equation}

We now define the concept of stochastic relative degree and then point out the rationale of such a definition.

\begin{definition}
\label{definition:stochastic_relative_degree}
	(Stochastic Relative Degree)  System~\eqref{eq:system_derivatives} is said to have \emph{stochastic relative degree} $r$ at a point $\bar x$ if
	\begin{description}
        \item[(ND)] $\Lie_l\SL{f}{l}{k}h(x) = 0$ and $\Lie_m \SL{f}{l}{k}h(x) = 0$ for all $x$ in a neighborhood of $\bar x$ and for all $k\in \{0,...,r-2\}$.
		\item[(CD)] $\Lie_g\SL{f}{l}{k}h(x) +  \dL{l}{m}{}\SL{f}{l}{k}h(x)= 0$, $\Lie_m \SL{f}{l}{k}h(x) = 0$ and $\dL{m}{m}{}\SL{f}{l}{k}h(x) = 0$ for all $x$ in a neighborhood of $\bar x$ and all $k\in \{0,...,r-2\}$.
		\item[(RD)] $\Lie_g\SL{f}{l}{r-1}h(\bar x) +  \dL{l}{m}{}\SL{f}{l}{r-1}h(\bar x)\ne 0$ or $\Lie_m \SL{f}{l}{r-1}h(\bar x) \ne 0$ or $\dL{m}{m}{}\SL{f}{l}{r-1}h(\bar x) \ne 0$.
	\end{description}
\end{definition}
As it is formally proved in the next proposition, in Definition~\ref{definition:stochastic_relative_degree}, condition (ND) (which stands for noise decoupling) ensures that the noise $\xi_t$ does not appear in $y_t$ and its first $r-1$ derivatives. In the remainder we will omit the dependency of the operators $\SL{}{}{}$ and $\A{}{}{}{}$ on the white noise $\xi_t$ whenever this does not appear explicitly (for instance because of condition (ND)). Condition (CD) (for control decoupling) ensures that the control input $u$ does not appear in $y_t$ and its first $r-1$ derivatives. Finally, condition (RD) (for relative degree) ensures that the control $u$ appears in the $r$-th derivative of $y_t$, thus defining the relative degree of the system. These observations are formally gathered in the following result, the proof of which can be found in Appendix A.

\begin{proposition}
\label{proposition:stochastic_relative_degree}
Suppose that system~\eqref{eq:system_derivatives} has stochastic relative degree $r>0$ at $\bar x$. Then
\begin{equation}
\label{eq:kth_derivative}
	y^{(k)}_t = \SL{f}{l}{k}h(x_t) \qquad \forall k \in \{0,\dots,r-1\},
\end{equation}
\emph{i.e.} the first $r-1$ derivatives of $y_t$ do not depend explicitly on $\xi_t$  nor $u$. Moreover, if, at time $\bar{t}$, $x_{\bar t} =  \bar x$, then
\begin{multline}
\label{eq:r-th_derivative}
y^{(r)}_{t=\bar t}  = \SL{f}{l}{r}h(\xi_{\bar t},\bar x) + \A{l}{m}{g}{}\SL{f}{l}{r-1}h(\xi_{\bar t},\bar x)u(\bar t)+ \\ \frac{1}{2}\dL{m}{m}{}\SL{f}{l}{r-1}h(\bar x)u(\bar t)^2,
\end{multline}
where either $\A{l}{m}{g}{}\SL{f}{l}{r-1}h(\xi_{\bar t},\bar x)$ or $\dL{m}{m}{}\SL{f}{l}{r-1}h(\bar x)$ are nonzero. 
\end{proposition}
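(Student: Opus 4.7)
The plan is to prove equation~\eqref{eq:kth_derivative} by induction on $k$, and then obtain~\eqref{eq:r-th_derivative} by applying the same differentiation one further time without the cancellations afforded by conditions (ND) and (CD). The base case $k=0$ is immediate from $y_t = h(x_t)$, and for $k=1$ the claim follows by invoking (ND) and (CD) at $k=0$ in the already-established expression~\eqref{eq:first_derivative}: indeed $\Lie_l h = \Lie_m h = 0$ strips the $\xi_t$-term from $\SL{f}{l}{}h$, while $\Lie_g h + \dL{l}{m}{}h = 0$ and $\dL{m}{m}{}h = 0$ annihilate the coefficients of $u$ and $u^2$.

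For the inductive step, I would observe that whenever $y^{(k)}_t = \SL{f}{l}{k}h(x_t)$ with $\SL{f}{l}{k}h$ a deterministic smooth function of $x$, setting $\phi = \SL{f}{l}{k}h$ and applying It\^o's formula to $\phi$ along the dynamics~\eqref{eq:sytem_differentials} produces, by the same calculation that yields~\eqref{eq:first_derivative} with $h$ replaced by $\phi$, the template
\begin{equation}
y^{(k+1)}_t = \SL{f}{l}{}\phi(\xi_t,x_t) + \A{l}{m}{g}{}\phi(\xi_t,x_t)\,u + \tfrac{1}{2}\dL{m}{m}{}\phi(x_t)\,u^2.
\end{equation}
The $k$-th instance of (ND) gives $\Lie_l\phi = \Lie_m\phi = 0$, so that $\SL{f}{l}{}\phi(\xi_t,x)$ reduces to the deterministic $\SL{f}{l}{k+1}h(x)$; condition (CD) gives $\Lie_g\phi + \dL{l}{m}{}\phi = 0$ and $\dL{m}{m}{}\phi = 0$, annihilating the coefficients of $u$ and $u^2$. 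Hence $y^{(k+1)}_t = \SL{f}{l}{k+1}h(x_t)$, and the induction carries on up to $k = r-1$; en route, the hypotheses also legitimise the iterated notation $\SL{f}{l}{k}h$ by guaranteeing that each intermediate function is deterministic.

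To obtain~\eqref{eq:r-th_derivative}, I apply the template one more time, now with $\phi = \SL{f}{l}{r-1}h$. Since Definition~\ref{definition:stochastic_relative_degree} imposes (ND) and (CD) only up to $k = r-2$, no further cancellations are available; evaluating at $t = \bar t$ with $x_{\bar t} = \bar x$ yields exactly the expression~\eqref{eq:r-th_derivative}, and condition (RD) is precisely the statement that at least one of $\A{l}{m}{g}{}\SL{f}{l}{r-1}h(\xi_{\bar t}, \bar x)$ and $\dL{m}{m}{}\SL{f}{l}{r-1}h(\bar x)$ is nonzero at $\bar x$.

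The only delicate bookkeeping lies in deriving the template formula itself: one must expand the It\^o quadratic-variation correction as $\tfrac{1}{2}\dL{l+mu}{l+mu}{}\phi = \tfrac{1}{2}\dL{l}{l}{}\phi + \dL{l}{m}{}\phi\,u + \tfrac{1}{2}\dL{m}{m}{}\phi\,u^2$ and regroup the drift, noise-linear and control-linear contributions so as to recover the operators $\SL{f}{l}{}$ and $\A{l}{m}{g}{}$. In particular, the cross term $\dL{l}{m}{}\phi\,u$ is what joins $\Lie_g\phi$ inside $\A{l}{m}{g}{}\phi$, explaining the peculiar shape of condition (CD). Once this regrouping is carried out once and for all, the cancellations prescribed by (ND) and (CD) apply mechanically at every step, so the induction presents no genuine difficulty beyond careful tracking of the It\^o correction.
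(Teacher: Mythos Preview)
Your proof is correct and follows essentially the same inductive approach as the paper. The only structural difference is that the paper factors out into two auxiliary lemmas the equivalences between the raw conditions in (CD)/(RD) and the vanishing/nonvanishing of $\A{l}{m}{g}{}\SL{f}{l}{k}h$ (an affine function of $\xi_t$ is identically zero iff both coefficients vanish, and nonzero almost surely iff at least one coefficient is nonzero), whereas you carry out this translation inline; your explicit discussion of the It\^o quadratic-variation bookkeeping that produces the template formula is a useful addition that the paper leaves implicit.
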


 The previous result shows that, analogously to the deterministic case, the stochastic relative degree is equal to the order of the derivative of the output at time $\bar t$ in which the input $u(\bar t)$ explicitly appears. Two observations are in order: first, while the white noise does not appear in all the derivatives up to order $r-1$ because of condition (ND), it may or may not appear in the $r$-th derivative; second, differently from the deterministic case, the control $u$ appears linearly and/or quadratically in~\eqref{eq:r-th_derivative}.

\begin{remark}
\label{remark:standing_assumption}
     Condition (ND) is a type of \emph{disturbance decoupling} condition, as we suppose that the noise does not appear in $y_t$ and its successive $r-1$ derivatives. For a deterministic analogous, see \emph{e.g.}, \cite[Section 4.6]{Isidori1995}.  If $\Lie_l\SL{f}{l}{k}h(x) \ne 0$ for a $k<r-1$, the differentiation of $y_t$ up to the $r$-th time would require us to introduce successive derivatives of the white noise, which is theoretically and practically challenging. We exclude this possibility with (ND). 
\end{remark}

\begin{remark}
The equality $\Lie_m \SL{f}{l}{k}h(x) = 0$ appears in both conditions (ND) and (CD). This repetition is unnecessary. However, since $\Lie_m \SL{f}{l}{k}h(x)$ multiplies both $\xi_t$ and $u$ in the expression of the derivative $y_t^{(k+1)}$, we believe that for the sake of clarity it is beneficial to require it to be zero in both the noise and control decoupling conditions.
\end{remark}

\begin{remark}
	Consider a linear stochastic system, \emph{i.e.} system~\eqref{eq:system_derivatives} with $f(x_t) = Ax_t$, $g(x_t) \equiv B$, $l(x_t) = Fx_t$ and $m(x_t) \equiv G$, where $A\in\mathbb{R}^{n\times n}$, $B\in\mathbb{R}^{n\times 1}$, $F\in\mathbb{R}^{n\times n}$ and $G\in\mathbb{R}^{n\times 1}$. By applying Definition~\ref{definition:stochastic_relative_degree}, system~\eqref{eq:system_derivatives} in the linear case has stochastic relative degree $r$ if 
	\begin{enumerate}
		\item $CA^kB = 0$ and  $CA^kG = 0$ for all $k\in \{0,...,r-2\}$.
		\item $CA^{r-1}B \ne 0$ or $CA^{r-1}G \ne 0$.
	\end{enumerate} Note that the conditions are remarkably simple and reminiscent of the deterministic case. In fact, $\SL{f}{l}{k}h(x) = CA^k x$, \emph{i.e.} it is linear in $x$, for all $k\in \{0,...,r-1\}$, hence $\dL{l}{l}{}\SL{f}{l}{k}h(x) \equiv \dL{l}{m}{}\SL{f}{l}{k}h(x) \equiv \dL{m}{m}{}\SL{f}{l}{k}h(x) \equiv 0$ because the Hessian of a linear function with respect to $x$ is identically zero. Assuming that the relative degree of this system is, for instance, $r>2$, it follows that
	\begin{equation}
	\begin{aligned}
		y^{(k)}_t &= CA^{k}x_t, \qquad \qquad  \forall k\in \{0,...,r-1\}, \\
	    y^{(r)}_t &= CA^{r-1}(A + F\xi_t)x_t + CA^{r-1}(B+G\xi_t)u.
	\end{aligned}
	\end{equation}
\end{remark}

Having defined the notion of stochastic relative degree and having discussed its interpreation, we are now interested in finding a diffeomorphism $\Phi:\mathbb{R}^n \rightarrow \mathbb{R}^n$ that locally (\emph{i.e.} in a neighbourhood $\bar{U}$ of $\bar x\in U\subset \mathbb{R}^n$) transforms system~\eqref{eq:system_derivatives} in such a way that its dynamics is somewhat ``simpler". The diffeomorphism we are looking for is a direct consequence of the definition of the stochastic relative degree previously given. To simplify the exposition, we make the following assumption on the stochastic Lie derivatives of $y_t = h(x_t)$ along the drift vector $f$ and the diffusion vector $l$.
\begin{assumption}
	\label{assumption:independent_gradients}
	Let $r$ be the stochastic relative degree of system~\eqref{eq:system_derivatives} at $\bar x$. Then the row vectors
	\begin{equation}
		\dnx{} [h]_{x=\bar x}, \; \dnx{} [\SL{f}{l}{}h]_{x=\bar x},\; ... \;,\; \dnx{} [\SL{f}{l}{r-1}h]_{x=\bar x},
	\end{equation}
	are linearly independent.
\end{assumption}
Observe that if Assumption~\ref{assumption:independent_gradients} holds, then necessarily $r\le n$.

\begin{remark}
	For deterministic nonlinear systems, the linear independence of the gradients of the first $r-1$ successive derivatives of the output at $\bar x$ is a consequence of the relative degree being defined, see \emph{e.g}, \cite[Lemma 4.1.1]{Isidori1995}. In Section~\ref{subsection:sharper_results} we prove that this property holds in the present setting for the case in which $m\equiv 0$ near $\bar x$. However, a proof that a similar result holds in general ($m\ne 0$) is missing. For simplicity, we use Assumption~\ref{assumption:independent_gradients} at this stage to develop the theory of normal form for the most general class of systems, and we note that no counter-example has been found for which this property is not satisfied when $m\ne 0$ near $\bar x$.  
\end{remark}

\begin{proposition}
	\label{proposition:general_normal_form}
Suppose that system~\eqref{eq:system_derivatives} has stochastic relative degree $r$ at $\bar{x}$ and let Assumption~\ref{assumption:independent_gradients} hold. Set
\begin{equation}
	\phi_1(x)\!\! =\! \!h(x),\;\;
	\phi_2(x)\!\! =\! \!\SL{f}{l}{}h(x),\;\; \dots, \;\;
	\phi_r(x)\!\! =\!\! \SL{f}{l}{r-1}h(x).
\end{equation}
If $r<n$, then there exist smooth functions $\phi_{r+1}(x),...,\phi_n(x)$, with $\phi_j \in \mathbb{R}$ for all $j \in \{r+1,..,n\}$, such that the Jacobian of the mapping
\begin{equation}
\label{eq:diffeomorphism_general}
	\Phi(x) = \begin{bmatrix}
	\phi_1(x) & \phi_2(x) & \dots & \phi_n(x)
	\end{bmatrix}^\top
\end{equation}
is invertible at $\bar x$ almost surely, thus defining a coordinate transformation in a neighbourhood of $\bar x$. Then the state-space representation of system~\eqref{eq:system_derivatives} in the transformed state $z_t = \Phi(x_t)$ is
\begin{equation}
\begin{aligned}
	\dot{z}_i &= z_{i+1}, &i = 1,...,r-1,\\
	\dot{z}_r &= c(\xi_t,z_t) + b(\xi_t,z_t)u + a(z_t)u^2,	\\
	\dot{z}_{j} &= p_{j}(\xi_t,z_t) + q_{j}(\xi_t,z_t)u + s_{j}(z_t)u^2, \; &j = r+1,...,n,\\
	y_t &= z_1,
\end{aligned}
\label{eq:transformed_system}
\end{equation}
where the mappings\footnote{To keep the statement of the proposition concise, the mappings $a$, $b$, $c$, $p_j$, $q_j$, $s_j$ are defined in the proof.} $c$, $b$, $p_j$ and $q_j$ are affine in $\xi_t$.
\end{proposition}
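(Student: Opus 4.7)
The plan is to construct $\Phi$ in two stages and then differentiate each of its components, invoking Proposition~\ref{proposition:stochastic_relative_degree} for the first $r$ and It\^o's formula (as in~\eqref{eq:first_derivative}) for the rest. First I would observe that the gradients $\dnx{}[\phi_1]_{\bar x},\dots,\dnx{}[\phi_r]_{\bar x}$ are linearly independent by Assumption~\ref{assumption:independent_gradients}, so a standard linear-algebra completion (e.g.\ extract $n-r$ canonical coordinate functions $x_{i_1},\dots,x_{i_{n-r}}$ whose gradients, together with the first $r$ rows, form a basis of $\mathbb{R}^n$) yields $\phi_{r+1},\dots,\phi_n$ making $\partial_x\Phi(\bar x)$ nonsingular. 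Since $\bar x$ is deterministic, invertibility here is actually deterministic; the ``almost surely'' qualifier is needed only because in later use the point at which the Jacobian is evaluated will be a random variable $x_{\bar t}$, and continuity of $\det\partial_x\Phi$ together with the a.s.\ continuity of the sample paths yields invertibility in a neighbourhood of $\bar x$ with probability one. By the inverse function theorem $\Phi$ is then a local diffeomorphism on some $\bar U\ni\bar x$.

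Next I would compute the transformed dynamics component by component. For $i\in\{1,\dots,r-1\}$, Proposition~\ref{proposition:stochastic_relative_degree} gives $y^{(i-1)}_t=\SL{f}{l}{i-1}h(x_t)=\phi_i(x_t)$, so differentiating once more yields
\begin{equation}
\dot z_i=\tfrac{d}{dt}\phi_i(x_t)=y^{(i)}_t=\SL{f}{l}{i}h(x_t)=\phi_{i+1}(x_t)=z_{i+1},
\end{equation}
which is the first line of~\eqref{eq:transformed_system}. For $i=r$, equation~\eqref{eq:r-th_derivative} applied along the trajectory (not only at $\bar t$) yields
\begin{equation}
\dot z_r=\SL{f}{l}{r}h(\xi_t,x_t)+\A{l}{m}{g}{}\SL{f}{l}{r-1}h(\xi_t,x_t)u+\tfrac{1}{2}\dL{m}{m}{}\SL{f}{l}{r-1}h(x_t)u^2,
\end{equation}
so setting $c(\xi_t,z_t)$, $b(\xi_t,z_t)$, $a(z_t)$ to be these three coefficients composed with $\Phi^{-1}$ gives the second line; affineness of $c$ and $b$ in $\xi_t$ is immediate from the definitions of $\SL{}{}{}$ and $\A{}{}{}{}$ in which $\xi_t$ enters only linearly.

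Finally, for $j\in\{r+1,\dots,n\}$, I would apply It\^o's formula exactly as in~\eqref{eq:first_derivative} to the smooth scalar $\phi_j$, obtaining
\begin{equation}
\dot z_j=\SL{f}{l}{}\phi_j(\xi_t,x_t)+\A{l}{m}{g}{}\phi_j(\xi_t,x_t)u+\tfrac{1}{2}\dL{m}{m}{}\phi_j(x_t)u^2,
\end{equation}
and define $p_j,q_j,s_j$ as the three coefficients composed with $\Phi^{-1}$. Once again $p_j$ and $q_j$ are affine in $\xi_t$ by the structure of $\SL{}{}{}$ and $\A{}{}{}{}$. The output equation is trivial: $y_t=h(x_t)=\phi_1(x_t)=z_1$.

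The only nontrivial step is the completion in the first paragraph: in contrast to the deterministic feedback-linearisation literature where one seeks $\phi_j$ with $\Lie_g\phi_j=0$ to obtain autonomous zero dynamics, here no such constraint is imposed, and a bare linear-algebra completion suffices. The remaining manipulations are bookkeeping once Proposition~\ref{proposition:stochastic_relative_degree} and the single-step It\^o expansion~\eqref{eq:first_derivative} are available, so I anticipate no substantive obstacle beyond writing the coefficients explicitly and noting their affine dependence on $\xi_t$.
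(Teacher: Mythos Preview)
Your proposal is correct and follows essentially the same route as the paper: complete the $r$ independent gradients to a basis (the paper phrases this as choosing smooth $\phi_{r+1},\dots,\phi_n$ whose gradients at $\bar x$ complete those of $\phi_1,\dots,\phi_r$; your concrete choice of coordinate projections is a particular instance), invoke the inverse function theorem, and then compute $\dot z_i$ via Proposition~\ref{proposition:stochastic_relative_degree} for $i\le r$ and the single-step It\^o expansion~\eqref{eq:first_derivative} for $j>r$, reading off $c,b,a,p_j,q_j,s_j$ as the composed coefficients and noting their affine dependence on $\xi_t$ from the definitions of $\SL{}{}{}$ and $\A{}{}{}{}$. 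Your extra remark on the ``almost surely'' qualifier is a harmless elaboration not present in the paper's proof.
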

\begin{proof}
By Assumption~\ref{assumption:independent_gradients}, the matrix
\begin{equation}
	\begin{bmatrix}
	\dnx{} [h(x)]^\top & \dots & \dnx{} [\SL{f}{l}{r-1}h(x)]^\top
	\end{bmatrix}^\top
\end{equation}
has rank $r$ at $\bar{x}$. If $r<n$, let $\gamma_{r+1} (x),...,\gamma_{n} (x)$, with $\gamma_j \in \mathbb{R}^n$ for all $j \in \{r+1,..,n\}$, be any set of $n-r$ vectors such that
\begin{equation}
	\begin{bmatrix}
	\dnx{} [h(\bar{x})]^\top  \dots \; \dnx{} [\SL{f}{l}{r-1}h(\bar{x})]^\top\;\; \gamma_{r+1}(\bar{x}) \;  \dots \; \gamma_{n}(\bar{x})
	\end{bmatrix}^\top
\end{equation}
has rank $n$. Note that this is possible because there always exist $n-r$ linearly independent vectors $\gamma_{r+1} (\bar x),...,\gamma_{n} (\bar x)$ that complete the first $r$ linearly independent vectors $\dnx{} [h(\bar{x})]^\top,...,\dnx{} [\SL{f}{l}{r-1}h(x)]^\top$ to a basis of $\mathbb{R}^n$. Let $\phi_j(x)$ be any smooth function such that $\dnx{} [\phi_j(x)] = \gamma_j^\top(x)$ for $j=r+1,...,n$. Then $\Phi(x)$ as defined in~\eqref{eq:diffeomorphism_general} is a local diffeomorphism in a neighbourhood of $\bar x$ and, therefore, it defines a local change of coordinates $z_t = \Phi(x_t)$ for the stochastic system~\eqref{eq:system_derivatives}. Applying It\^o's lemma and since the system has relative degree $r$, the following holds
\begin{equation}
	\dot{z}_i = \SL{f}{l}{i}h(x_t) = \phi_{i+1}(x_t) = z_{i+1},\qquad i=1,...,r-1.
\end{equation}
Moreover,
\begin{equation}
	\dot{z}_r =\! \SL{f}{l}{r}h(x_t) + \A{l}{m}{g}{}\SL{f}{l}{r-1}h(x_t)u+  \frac{1}{2}\dL{m}{m}{}\SL{f}{l}{r-1}h(x_t)u^2.
\end{equation}
We now set
\begin{equation}
	\begin{aligned}
	c(\xi_t,z_t) &= \SL{f}{l}{r}h(\xi_t,\Phi^{-1}(z_t)),\\
	b(\xi_t,z_t) &= \A{l}{m}{g}{}\SL{f}{l}{r-1}h(\xi_t,\Phi^{-1}(z_t)),\\
	a(z_t) &= \frac{1}{2}\dL{m}{m}{}\SL{f}{l}{r-1}h(\Phi^{-1}(z_t)),
	\end{aligned}
\end{equation}
thus obtaining
\begin{equation}
	\dot{z}_r = c(\xi_t,z_t) + b(\xi_t,z_t)u + a(z_t)u^2.
\end{equation}
As for the remaining $n-r$ components of $z_t$, by applying It\^o's lemma to the functions $\phi_j(x_t)$ and setting
\begin{equation}
	\begin{aligned}
	p_j(\xi_t,z_t) &= \SL{f}{l}{}\phi_j(\xi_t,\Phi^{-1}(z_t)),\\
	q_j(\xi_t,z_t) &= \A{l}{m}{g}{}\phi_j(\xi_t,\Phi^{-1}(z_t)),\\
	s_j(z_t) &= \frac{1}{2}\dL{m}{m}{}\phi_j(\Phi^{-1}(z_t)),
	\end{aligned}
\end{equation}
yields
\begin{equation}
	\dot{z}_{j} = p_{j}(\xi_t,z_t) + q_{j}(\xi_t,z_t)u + s_{j}(z_t)u^2, \qquad j=r+1,...,n.
\end{equation}
The proof is completed by observing that $y_t = h(x_t) = z_1$ and that, by the definitions of the operators $\mathcal{S}$ and $\mathcal{A}$, the coefficients $c$, $b$, $p_j$ and $q_j$ are affine in $\xi_t$.
\end{proof}

Note that it might be possible to find smooth functions $\phi_{r+1},...,\phi_n$ such that the dynamics of the last $n-r$ transformed coordinates is independent of the input $u$, \emph{i.e.} $q_j(\cdot,z_t) \equiv 0$, $s_j(\cdot,z_t)\equiv 0$, for all $j \in \{r+1,...,n\}$, in a neighbourhood of $\Phi(\bar{x})$. This observation motivates the next definition.

\begin{definition}
	(Stochastic Normal Form) Let $x_t$ be the unique solution of~\eqref{eq:system_derivatives} and $z_t = \Phi(x_t)$ be a local diffeomorphism in a subset $U$ of $\mathbb{R}^n$ such that
	\begin{equation}
	\label{eq:normal_form}
	\begin{aligned}
	\dot{z}_i &= z_{i+1},\quad &i=1,...,r-1,\\
	\dot{z}_r &= c(\xi_t,z_t) + b(\xi_t,z_t)u + a(z_t)u^2,\\
	\dot{z}_j &= p_{j}(\xi_t, z_t), &j=r+1,...,n,\\
	y_t &= z_1.
	\end{aligned}
	\end{equation}
	System~\eqref{eq:normal_form} is said to be the \emph{stochastic normal form} of system~\eqref{eq:system_derivatives}.
\end{definition}
\begin{remark}
The fact that the coefficients $c$, $b$, $p_j$ and $q_j$ are affine in $\xi_t$ guarantees that system~\eqref{eq:transformed_system}, which is written in the differential notation (as in~\eqref{eq:system_derivatives}), can always be equivalently written in the integral notation (as in \eqref{eq:sytem_differentials}).
\end{remark}
For compactness, in the remainder we use the definitions  $p=\begin{bmatrix}p_{r+1} & \dots & p_n\end{bmatrix}^\top$, $q=\begin{bmatrix}q_{r+1} & \dots & q_n\end{bmatrix}^\top$ and $s=\begin{bmatrix}s_{r+1} & \dots & s_n\end{bmatrix}^\top$. Obviously, if the stochastic relative degree at $\bar x$ is equal to the order of the system, then the system admits a stochastic normal form in a neighbourhood $U$ of $\bar x$ (because $p$, $q$ and $s$ have dimension zero).
\begin{remark}
	For deterministic systems it can be proved (see, \emph{e.g.}, \cite[Proposition 4.1.3]{Isidori1995}) that functions $\phi_{r+1},...,\phi_n$ always exist such that a normal form exists when $r<n$. While a proof that this property holds in general in the present setting is missing, in Section~\ref{subsection:sharper_results} we prove that this property holds for the case in which $m\equiv0$ near $\bar x$.
\end{remark}

\begin{remark}
The notion of stochastic relative degree and normal form presented are consistent with It\^o's interpretation. This is without loss of generality, as all the results of this section can be obtained also in other formalisms, \emph{e.g.} Stratonovich's formalism \cite{Mellone2020quat}.
\end{remark}

\subsection{Sharper Results for $m\equiv 0$}
\label{subsection:sharper_results}
In the previous section we introduced the concept of relative degree and of normal form for a class of nonlinear stochastic systems in which the control input appears both in the drift and in the diffusion terms of the stochastic differential equation. This allowed us to provide as general as possible definitions. However, for the remainder of the article, it is beneficial to make the standing assumption that the control $u$ does not enter the diffusion term of the stochastic differential equation (\emph{i.e.} $m\equiv 0$) in a neighbourhood of $\bar x$. On the one hand, this sub-class of systems is more common in the literature (see \emph{e.g.} \cite{Pan2001}, \cite{Pan2002} and references therein). On the other hand, this assumption allows us to achieve sharper results in terms of nonlinear control of stochastic systems, both in the idealistic case \emph{i.e.} when the noise process is assumed available) and, in the practically implementable controller that we develop in Section~\ref{subsection:compensating_control}. A discussion of the general case (\emph{i.e.} $m\ne 0$) is given in Section~\ref{sec:noise_in_diffusion}. Therefore, for the time being, we assume $m(x_t) \equiv 0$ near $\bar x$ and we consider systems of the form
\begin{equation}
\label{eq:system_differentials_restricted}
    dx_t = (f(x_t) + g(x_t)u)dt + l(x_t)d\mathcal{W}_t, \qquad y_t =h(x_t),
\end{equation}
or, equivalently,
\begin{equation}
\label{eq:system_derivatives_restricted}
    \dot{x}_t = f(x_t) + g(x_t)u + l(x_t)\xi_t, \qquad y_t =h(x_t).
\end{equation}
In order to prove results in this section it is useful to introduce the \emph{Stratonovich equivalent} of system~\eqref{eq:system_differentials_restricted}, given by
\begin{equation}
\label{eq:system_differentials_restricted_Strat}
    dx_t = (f_S(x_t) + g(x_t)u)dt + l(x_t)\circ d\mathcal{W}_t, \quad y_t =h(x_t),
\end{equation}
where $f_S(x) = f(x) - \frac{1}{2}\frac{\partial l(x)}{\partial x}l(x),$ and the symbol $\circ$ denotes the fact that the stochastic integral is meant in Stratonovich's sense. The It\^o solution $x_t$ of~\eqref{eq:system_differentials_restricted} is identical to the Stratonovich solution of~\eqref{eq:system_differentials_restricted_Strat} (see, \emph{e.g.}, \cite{Wong1965}).

We now discuss what implications follow from restricting ourselves to the class of systems~\eqref{eq:system_derivatives_restricted}. Firstly, the operator $\A{l}{m}{g}{}$, reduces to $\Lie_g$, the Lie derivative along the only control vector field $g$. Secondly, as the control input does not appear in the diffusion term, the derivative of any function of the state $x_t$ is never quadratic in the control input $u$. In fact one can observe that $u^2$ multiplies the operator $\dL{m}{m}{}$, \emph{e.g.} in~\eqref{eq:first_derivative}, and that this operator is identically zero near $\bar x$ because so is the vector field $m$. Having said this, the definition of stochastic relative degree becomes simpler and for the sake of clarity it is useful to rewrite it.

\begin{definition}
\label{definition:stochastic_relative_degree_restricted}
	(Stochastic Relative Degree - $m \equiv 0$) System~\eqref{eq:system_derivatives_restricted} is said to have \emph{stochastic relative degree} $r$ at a point $\bar x$ if
	\begin{description}
        \item[(ND)] $\Lie_l\SL{f}{l}{k}h(x) = 0$ for all $x$ in a neighborhood of $\bar x$ and for all $k\in \{0,...,r-2\}$.
		\item[(CD)] $\Lie_g\SL{f}{l}{k}h(x) = 0$ for all $x$ in a neighborhood of $\bar x$ and all $k\in \{0,...,r-2\}$.
		\item[(RD)] $\Lie_g\SL{f}{l}{r-1}h(\bar x) \ne 0$.
	\end{description}
\end{definition}
By considering a class of systems of the form~\eqref{eq:system_derivatives_restricted}, we can achieve sharper results in terms of definition of a stochastic normal form. Indeed, we can now prove the claim previously assumed in Assumption~\ref{assumption:independent_gradients}. To this end, we first prove a technical result.

\begin{lemma}
\label{lemma:identity_of_lie_derivatives}
Let $r$ be the stochastic relative degree of system~\eqref{eq:system_derivatives_restricted} at $\bar x$. Then
\begin{equation}
\label{eq:identity_stochastic_lie_derivatives}
    \SL{f}{l}{k}h = \Lie_{f_S}^k h \qquad \forall k \in \{0,...,r-1\} \quad \text{near $\bar x$.}
\end{equation}
\end{lemma}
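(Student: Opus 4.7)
The plan is to proceed by induction on $k$, with the key ingredient being an algebraic identity that converts the It\^o-correction term $\tfrac12\dL{l}{l}{}\psi$ into a first-order drift correction whenever $\Lie_l\psi=0$.

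The base case $k=0$ is immediate since $\SL{f}{l}{0}h=h=\Lie_{f_S}^0 h$ by definition. For the inductive step, suppose $\SL{f}{l}{j}h=\Lie_{f_S}^j h$ for some $j$ with $0\le j\le r-2$. Expanding the definition of the stochastic Lie derivative,
\begin{equation}
\SL{f}{l}{j+1}h
=\Lie_f\SL{f}{l}{j}h+\bigl(\Lie_l\SL{f}{l}{j}h\bigr)\xi_t+\tfrac{1}{2}\dL{l}{l}{}\SL{f}{l}{j}h.
\end{equation}
The middle term vanishes by condition (ND) of Definition~\ref{definition:stochastic_relative_degree_restricted} since $j\le r-2$, so in particular the expression is deterministic and the next iterate is well defined.

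The crux is then the following Leibniz-style identity: for any smooth scalar $\psi$ with $\Lie_l\psi=0$ near $\bar x$, one has
\begin{equation}
\tfrac{1}{2}\dL{l}{l}{}\psi
=-\tfrac{1}{2}\bigl(\dnx{}[\psi]\bigr)\tfrac{\partial l}{\partial x}\,l.
\end{equation}
This is obtained by differentiating the scalar equation $\sum_i(\partial_{x_i}\psi)\,l_i=0$ with respect to $x_j$, multiplying by $l_j$, and summing over $j$; the resulting two terms are exactly $l^\top\dnx{2}[\psi]\,l$ and $(\dnx{}[\psi])(\partial l/\partial x)\,l$, which thus cancel. Applying this identity with $\psi=\SL{f}{l}{j}h=\Lie_{f_S}^j h$ (which satisfies $\Lie_l\psi=0$ precisely because $j\le r-2$ and (ND) holds) yields
\begin{equation}
\SL{f}{l}{j+1}h
=\Lie_f\Lie_{f_S}^j h-\tfrac{1}{2}\bigl(\dnx{}[\Lie_{f_S}^j h]\bigr)\tfrac{\partial l}{\partial x}\,l
=\Lie_{f_S}\Lie_{f_S}^j h=\Lie_{f_S}^{j+1}h,
\end{equation}
using the definition $f_S=f-\tfrac12(\partial l/\partial x)l$. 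This closes the induction and gives the claim for every $k\in\{0,\dots,r-1\}$.

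The whole argument is short; the only nontrivial step is spotting the algebraic identity above, which is nothing other than the Wong--Zakai correction seen from the output side: on functions annihilated by $\Lie_l$, the It\^o second-order term is exactly the drift correction that turns $f$ into $f_S$. Once this is recognised, no analytic difficulty remains.
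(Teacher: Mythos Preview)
Your proof is correct and follows essentially the same approach as the paper: both differentiate the identity $\Lie_l\psi=0$ (the paper writes this as $\partial_x(\Lie_l\SL{f}{l}{k}h)=0$) to obtain $l^\top\dnx{2}[\psi]=-\,\dnx{}[\psi]\,\tfrac{\partial l}{\partial x}$, then multiply by $l$ on the right and plug the resulting expression for $\tfrac12\dL{l}{l}{}\psi$ into the inductive step to recognise $\Lie_{f_S}$. The only difference is expository ordering---the paper states the algebraic identity upfront for all $k\in\{0,\dots,r-2\}$ before beginning the induction, whereas you derive it inside the inductive step---and your closing remark on the Wong--Zakai interpretation, which is a nice addition.
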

\begin{proof}
First observe that by (ND),
\begin{equation}
    \partial_x\left(\Lie_l\SL{f}{l}{k}h\right)= l^\top\frac{\partial^2 \SL{f}{l}{k}h}{\partial x^2} + \frac{\partial \SL{f}{l}{k}h}{\partial x}\frac{\partial l}{\partial x} = 0,
\end{equation}
hence
\begin{equation}
\label{eq:quadratic_terms_identity}
    l^\top\frac{\partial^2 \SL{f}{l}{k}h}{\partial x^2} = - \frac{\partial \SL{f}{l}{k}h}{\partial x}\frac{\partial l}{\partial x},
\end{equation}
for all $k$ in $\{0,...,r-2\}$ near $\bar x$. Then the claim follows by induction. In fact, \eqref{eq:identity_stochastic_lie_derivatives} trivially holds for $k=0$. Note that for some $0<k<r-1$ we have, in view of~\eqref{eq:quadratic_terms_identity}, that
\begin{align}
    \SL{f}{l}{k+1}h\! &=\! \Lie_f \!\!\SL{f}{l}{k}h + \frac{1}{2}l^\top \frac{\partial^2 \SL{f}{l}{k}h}{\partial x^2}l\\ &= \Lie_f\!\! \SL{f}{l}{k}h - \frac{1}{2}\frac{\partial \SL{f}{l}{k}h}{\partial x}\frac{\partial l}{\partial x}l
    = \Lie_{f - \frac{1}{2}\frac{\partial l}{\partial x}l}\SL{f}{l}{k}h
\end{align} Rewriting this equation using in the Lie derivative notation and assuming  by the inductive hypothesis that~\eqref{eq:identity_stochastic_lie_derivatives} holds for $0<k<r-1$ yields
\begin{equation}
    \SL{f}{l}{k+1}h = \Lie_{f-\frac{1}{2}\frac{\partial l}{\partial x}l}\Lie_{f_S}^k h = \Lie_{f_S}\Lie_{f_S}^k h = \Lie_{f_S}^{k+1} h.
\end{equation}
\end{proof}
The fact that the stochastic Lie derivatives of $h$ along $f$ and $l$ are identical to the Lie derivatives of $h$ along $f_S$ up to order $r-1$ is crucial to prove the rest of the results in this and the next section. In fact, this equivalence allows us to leverage the deterministic techniques, because the Stratonovich differentiation rule is formally identical to the deterministic one. We can then use the identities in Lemma~\ref{lemma:identity_of_lie_derivatives} to translate the results to the It\^o system~\eqref{eq:system_derivatives_restricted}.

We are now ready to prove the following.
\begin{lemma}
\label{lemma:gradients_independent}
Let $r$ be the stochastic relative degree of system~\eqref{eq:system_derivatives_restricted} at $\bar x$. Then the row vectors
	\begin{equation}
		\dnx{} [h]_{x=\bar x}, \; \dnx{} [\SL{f}{l}{}h]_{x=\bar x},\; ... \;,\; \dnx{} [\SL{f}{l}{r-1}h]_{x=\bar x},
	\end{equation}
	are linearly independent.
\end{lemma}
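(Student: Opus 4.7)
The plan is to reduce the claim to the classical deterministic relative-degree lemma via the Stratonovich-equivalent system. Lemma~\ref{lemma:identity_of_lie_derivatives} provides the crucial identification $\SL{f}{l}{k}h(x) = \Lie_{f_S}^{k}h(x)$ in a neighbourhood of $\bar x$ for every $k\in\{0,\ldots,r-1\}$. Differentiating in $x$, the row vectors $\dnx{}[\SL{f}{l}{k}h]_{x=\bar x}$ coincide with $\dnx{}[\Lie_{f_S}^{k}h]_{x=\bar x}$ for $k=0,\ldots,r-1$, so it suffices to establish the linear independence of the latter family.

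Next I would verify that the auxiliary deterministic system $\dot{x}=f_S(x)+g(x)u$, $y=h(x)$ has classical relative degree $r$ at $\bar x$. Using Lemma~\ref{lemma:identity_of_lie_derivatives}, condition (CD) of Definition~\ref{definition:stochastic_relative_degree_restricted} becomes $\Lie_g\Lie_{f_S}^{k}h(x)=0$ near $\bar x$ for $k=0,\ldots,r-2$, and condition (RD) becomes $\Lie_g\Lie_{f_S}^{r-1}h(\bar x)\ne 0$. These are precisely the deterministic relative-degree conditions for the pair $(f_S,g)$. The classical result (see, e.g., \cite[Lemma 4.1.1]{Isidori1995}) then guarantees the linear independence of $\dnx{}[\Lie_{f_S}^{0}h]_{x=\bar x},\ldots,\dnx{}[\Lie_{f_S}^{r-1}h]_{x=\bar x}$, which by the first paragraph is exactly the desired conclusion.

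The only substantive step is the reduction encoded by Lemma~\ref{lemma:identity_of_lie_derivatives}; from there everything follows by invoking a well-known deterministic fact, so I do not expect any real obstacle. Should a self-contained proof be preferred over citing the classical lemma, I would reproduce the standard argument: consider the $r\times r$ matrix whose $(i,j)$ entry is $\Lie_{\operatorname{ad}_{f_S}^{j-1}g}\Lie_{f_S}^{i-1}h(\bar x)$, and use an induction on $i+j$ together with (CD) and (RD) (translated through Lemma~\ref{lemma:identity_of_lie_derivatives}) to show it is antidiagonal with nonzero entries $\pm\Lie_g\Lie_{f_S}^{r-1}h(\bar x)$; hence the product of this matrix with the matrix of gradients has full rank $r$, which forces the gradients themselves to be linearly independent at $\bar x$.
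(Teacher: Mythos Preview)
Your proof is correct and follows essentially the same route as the paper: use Lemma~\ref{lemma:identity_of_lie_derivatives} to rewrite the (CD) and (RD) conditions as the classical relative-degree conditions for the pair $(f_S,g)$, invoke \cite[Lemma~4.1.1]{Isidori1995} to obtain the linear independence of the gradients $\dnx{}[\Lie_{f_S}^{k}h]_{x=\bar x}$, and then translate back via Lemma~\ref{lemma:identity_of_lie_derivatives}. The optional self-contained argument you sketch (the antidiagonal matrix of iterated Lie derivatives) is precisely the standard proof of the cited deterministic lemma, so it adds nothing new but is a fine fallback.
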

\begin{proof}
Lemma~\ref{lemma:identity_of_lie_derivatives} and the definition of stochastic relative degree of~\eqref{eq:system_derivatives_restricted} imply that the set of conditions $\Lie_g \Lie_{f_S}^k h = 0$ near $\bar x$ for all $k$ in $\{0,...,r-2\}$ and $\Lie_g \Lie_{f_S}^{r-1} h \ne 0$ at $\bar x$ hold. By \cite[Lemma 4.1.1]{Isidori1995}, this set of conditions in turn implies the linear independence of
    \begin{equation}
		\dnx{} [h]_{x=\bar x}, \; \dnx{} [\Lie_{f_S}h]_{x=\bar x},\; ... \;,\; \dnx{} [\Lie_{f_S}^{r-1}h]_{x=\bar x}.
	\end{equation}
	Using again Lemma~\ref{lemma:identity_of_lie_derivatives}, the claim follows.
\end{proof}

For systems of the form~\eqref{eq:system_derivatives_restricted} the result in Proposition~\ref{proposition:general_normal_form} can be specialised to the existence of a coordinate transformation $z_t = \Phi(x_t)$ such that the transformed dynamics is given by
\begin{equation}
\begin{aligned}
	\dot{z}_i &= z_{i+1}, &i = 1,...,r-1,\\
	\dot{z}_r &= c(\xi_t,z_t) + b(z_t)u,	\\
	\dot{z}_{j} &= p_{j}(\xi_t,z_t) + q_{j}(z_t)u, \; &j = r+1,...,n,
\end{aligned}
\label{eq:transformed_system_restricted}
\end{equation}
where the coefficients $c$ and $p_j$ are as before and
\begin{equation}
 	b(z_t)\! =\! \Lie_g\!\SL{f}{l}{r-1}h(\Phi^{-1}(z_t)), \quad
 	q_j(z_t) \!= \!\Lie_g\!\phi_j(\Phi^{-1}(z_t)).
\end{equation}
Observe that, unlike the general case in which $m\ne0$, since now the coefficients $q_j$ are simply the Lie derivatives of $\phi_j$ along $g$ for $j=r+1,\dots,n$, it is easy to show that it is always possible to find $\phi_{r+1}, \dots , \phi_{n}$ such that said coefficients are identically zero for $x$ in a neighbourhood of $\bar x$. Exploiting Lemma~\ref{lemma:identity_of_lie_derivatives}, the proof is analogous to the one reported in \cite[Proposition 4.1.3]{Isidori1995}. Thus, we can conclude that it is always possible to find a coordinate transformation $\Phi(x)$ in a subset $U$ of $\mathbb{R}^n$ such that the dynamics in the transformed state $z_t = \Phi(x_t)$ is
\begin{equation}
	\label{eq:normal_form_restricted}
	\begin{aligned}
	\dot{z}_i &= z_{i+1},\quad &i=1,...,r-1,\\
	\dot{z}_r &= c(\xi_t,z_t) + b(z_t)u,\\
	\dot{z}_j &= p_{j}(\xi_t, z_t), &j=r+1,...,n,\\
	y_t &= z_1,
	\end{aligned}
	\end{equation}
which is the normal form of system~\eqref{eq:system_derivatives_restricted}.

\begin{remark}
The notion of normal form which we propose introduces elements of significant novelty, compared to works on analogous topics like \cite{Ladhiri1996}, \cite{Pan2001} and \cite{Pan2002}. By condition (ND) in Definition~\ref{definition:stochastic_relative_degree_restricted}, the state is projected onto new coordinates, of which the first $r-1$ have a noise-free dynamics. Not only is this a fundamental difference with respect to past works, but it also allows us to introduce, as shown in Sections~\ref{section:linearisation} and \ref{sec:practical_linearisation}, control laws performing path-wise control of nonlinear stochastic systems. To the best of the authors' knowledge, this problem has not been systematically addressed in the literature.
\end{remark}

\section{Idealistic Linearisation via State Feedback}
\label{section:linearisation}
We now turn our attention to the problem of exact feedback linearisation. Specifically, in this section we formulate and solve an idealistic version of the problem, by assuming that the control input is a function of the noise $\xi_t$. Although unrealistic because it implies that the noise is available for feedback, this assumption is necessary to develop the theory of \emph{exact} linearisation via feedback. This theory is instrumental for the introduction of a practically implementable controller in Section~\ref{sec:practical_linearisation} which, by approximating the noise contribution to the dynamics via measurements of the state, achieves a \emph{practical} result. The feedback linearisation problem is formally defined as follows.

\begin{problem}
	\label{problem:exact_linearisation_via_state_feedback}
	(Exact Feedback Linearisation) Consider the nonlinear stochastic system
	\begin{equation}
	\label{eq:general_system_no_output}
		\dot{x}_t = f(x_t) + g(x_t)u + l(x_t)\xi_t.
	\end{equation}
	Given a point $\bar x$, the problem of \emph{exact feedback linearisation} consists in finding a neighbourhood $U$ of $\bar x$, a feedback law $u_t = k(\xi_t,x_t,v)$ affine in $\xi_t$, with $v\in \mathbb{R}$, defined on $U$ and a stochastic coordinate transformation $z_t = \Phi(x_t)$ defined on $U$ such that the closed-loop system
	\begin{equation}
		\dot{x}_t = f(x_t) + g(x_t)k(\xi_t,x_t,v) + l(x_t)\xi_t,
	\end{equation}
	in the coordinates $z_t = \Phi(x_t)$, is linear, deterministic and controllable.
\end{problem}

\begin{remark}
By requiring that the control $u_t = k(\xi_t,x_t,v)$ is an explicit function of the white noise, as in the statement of Problem~\ref{problem:exact_linearisation_via_state_feedback}, we may be enlarging the class of systems~\eqref{eq:system_derivatives_restricted} beyond what could be allowed. In the general case where $k$ is any nonlinear function of $\xi_t$, the resulting closed-loop system in the derivative notation, \emph{i.e.} \eqref{eq:system_differentials_restricted}, is not anymore equivalent to the one in the differential notation, \emph{i.e.} \eqref{eq:system_derivatives_restricted}. In fact, this equivalence is preserved only when the resulting closed-loop dynamics is affine in the white noise. Therefore, we hereby define an idealistic control law to be \emph{admissible} if the dynamics of the closed-loop system is affine in $\xi_t$. This is achieved easily by requiring that $k$ is affine in $\xi_t$ as done in the statement of Problem~\ref{problem:exact_linearisation_via_state_feedback}. Note that the idealistic controls that we introduce in this paper are all admissible by construction.
\end{remark}

\begin{proposition}
	\label{proposition:feedback_linearisation}
	Problem~\ref{problem:exact_linearisation_via_state_feedback} is solvable if and only if there exists a real valued function $h(x_t)$ such that system~\eqref{eq:general_system_no_output} with the output $y_t = h(x_t)$ has stochastic relative degree $n$ at $\bar x$.
\end{proposition}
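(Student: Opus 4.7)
The plan is to prove the two implications separately. The ``if'' direction (sufficiency) follows almost immediately from the normal form machinery of Section~\ref{subsection:sharper_results}, while the ``only if'' direction (necessity) is the substantive one and decomposes into three sub-steps: identifying a structural constraint on $l$, constructing a candidate output, and upgrading from deterministic to stochastic relative degree.

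For sufficiency, I would take any $h$ for which system~\eqref{eq:general_system_no_output} with output $y_t=h(x_t)$ has stochastic relative degree $n$ at $\bar x$. By the restricted results of Section~\ref{subsection:sharper_results}, the map $\Phi$ with components $\phi_i(x)=\SL{f}{l}{i-1}h(x)$ is a local diffeomorphism at $\bar x$ and places the dynamics in the normal form~\eqref{eq:normal_form_restricted} with no zero-dynamics block, namely $\dot z_i = z_{i+1}$ for $i<n$ and $\dot z_n = c(\xi_t,z_t)+b(z_t)u$ with $b(\Phi(\bar x))\ne 0$. The feedback $u=(v-c(\xi_t,z_t))/b(z_t)$ is admissible because $c$ is affine in $\xi_t$ (hence so is $u$), and substituting yields the Brunovsky chain $\dot z_i=z_{i+1}$, $\dot z_n=v$, which is linear, deterministic, and controllable; this solves Problem~\ref{problem:exact_linearisation_via_state_feedback}.

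For necessity, I assume a solution $u=\alpha(x,v)+\beta(x,v)\xi_t$ and a diffeomorphism $\Phi$ such that the closed loop is $\dot z = Az + Bv$ with $(A,B)$ controllable. Step one is to extract a structural constraint on $l$: applying It\^o's formula to $z=\Phi(x)$, the noise coefficient in the closed-loop $\dot z$ equation is $(\partial\Phi/\partial x)(l+g\beta)$. For the closed loop to be deterministic, this must vanish, and invertibility of $\partial\Phi/\partial x$ gives $l+g\beta\equiv 0$. Since $l$ and $g$ are independent of $v$, so is $\beta$, and defining $\rho:=-\beta$ I obtain the matched-noise identity $l(x)=g(x)\rho(x)$. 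Step two is to construct $h$: after the match, the closed loop becomes the deterministic system $\dot x=f+g\alpha$; linearity-in-$v$ of $\dot z=Az+Bv$ forces $\alpha(x,v)=\alpha_0(x)+\alpha_1(x)v$ with $\alpha_1(\bar x)\ne 0$, making $\dot x=(f+g\alpha_0)+g\alpha_1 v$ a deterministic feedback-linearisable system. By the classical deterministic theorem (\emph{e.g.}, \cite[Lemma~4.2.3]{Isidori1995}), there exists $h$ giving the pair $(f+g\alpha_0,g\alpha_1)$ deterministic relative degree $n$ at $\bar x$, and invariance of deterministic relative degree under regular feedback yields the same for $(f,g)$.

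Step three, the main obstacle, is to upgrade this to \emph{stochastic} relative degree $n$ for $(f,g,l)=(f,g,g\rho)$. The matched-noise identity $\Lie_l\phi=\rho\Lie_g\phi$ collapses condition (ND) into (CD) at every level $k$, so it suffices to propagate $\Lie_g\SL{f}{l}{k}h=0$ for $k<n-1$ and the non-vanishing at $k=n-1$. I would argue by induction on $k$: assuming (ND) and (CD) hold up to $k-1$, Lemma~\ref{lemma:identity_of_lie_derivatives} gives $\SL{f}{l}{k}h=\Lie_{f_S}^{k}h$, where $f_S=f-\tfrac{1}{2}(\partial l/\partial x)l$; the Stratonovich correction lies in the $g$-direction up to terms whose contribution to $\Lie_g\SL{f}{l}{k}h$ is annihilated by the inductive orthogonality relations $\Lie_g\Lie_f^{j}h=0$ for $j<k$, so the deterministic condition $\Lie_g\Lie_f^{k}h=0$ transfers to $\Lie_g\SL{f}{l}{k}h=0$. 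At $k=n-1$ the same calculation transfers the non-vanishing condition, giving (RD). The hard part is precisely this bookkeeping: tracking how the iterated Stratonovich correction interacts with the nested stochastic Lie derivatives, and exploiting the matched-noise structure $l\in\mathrm{span}\{g\}$ to close the induction cleanly.
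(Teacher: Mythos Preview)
Your sufficiency argument is correct and matches the paper's. The necessity argument, however, takes a genuinely different route from the paper's and has a gap in Step~3.

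The paper's necessity proof is a one-liner: stochastic relative degree is invariant under coordinate change and admissible feedback (this is verified on the Stratonovich equivalent~\eqref{eq:system_differentials_restricted_Strat}, where the chain rule is classical, and then transferred back via Lemma~\ref{lemma:identity_of_lie_derivatives}). Since the closed-loop linear system has stochastic relative degree $n$ with a suitable linear output, so does the open loop. Your Steps~1--2 are correct and yield the pleasant structural fact $l=g\rho$ (which indeed makes (ND) collapse into (CD)), but your Step~3 attempts to pass from \emph{deterministic} relative degree $n$ for $(f,g)$ to \emph{stochastic} relative degree $n$ for $(f,g,l)$ by an induction that does not close as written.

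Concretely, you claim the Stratonovich correction $f_S-f=-\tfrac{1}{2}(\partial l/\partial x)l$ ``lies in the $g$-direction up to terms whose contribution to $\Lie_g\SL{f}{l}{k}h$ is annihilated by $\Lie_g\Lie_f^{j}h=0$''. But with $l=g\rho$ one computes
\[
(\partial l/\partial x)l=\rho^2(\partial g/\partial x)g+\rho(\Lie_g\rho)g,
\]
and while the second term is in $\mathrm{span}\{g\}$, the first term $(\partial g/\partial x)g$ is generally \emph{not}. Already at $k=1$ the residual is $\tfrac{1}{2}\Lie_g(\rho^2\dL{g}{g}{}h)$, and unwinding this via $g^\top\partial_x^2 h=-(\partial_x h)(\partial_x g)$ produces terms like $(\partial_x h)(\partial_x g)^2 g$ and $(\partial_x h)[\partial_x((\partial_x g)g)]g$ that are not controlled by the relations $\Lie_g\Lie_f^{j}h=0$ alone. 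The bookkeeping you allude to does not obviously terminate.

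The simplest fix is to abandon the detour through $(f,g)$: once you have Step~1, note that the closed loop $(f+g\alpha_0,\,g\alpha_1,\,0)$ has stochastic relative degree $n$ with output $h=C\Phi$ trivially (it is deterministic), and then apply invariance of stochastic relative degree under the admissible inverse feedback $v=\alpha_1^{-1}(u-\alpha_0-\beta\xi_t)$ to pull this back to $(f,g,l)$. That is exactly the paper's mechanism, and your matched-noise identity $l=g\rho$ becomes a corollary rather than a load-bearing step.
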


\begin{proof}
    \emph{Sufficiency.} Suppose $y_t=h(x_t)$ is such that the stochastic relative degree of system~\eqref{eq:general_system_no_output} is $n$ at $\bar x$. Then by Proposition~\ref{proposition:general_normal_form}, there exists a change of coordinates
	\begin{equation}
		z_t = \Phi(x_t) = \begin{bmatrix}
		h(x_t) & \SL{f}{l}{}h(x_t) & \dots & \SL{f}{l}{n-1}h(x_t)
		\end{bmatrix}^\top,
	\end{equation}
	for all $x$ in a neighbourhood $U$ of $\bar x$, such that the normal form of system~\eqref{eq:general_system_no_output} is
	\begin{equation}
	\begin{aligned}
		\dot{z}_i &= z_{i+1},\quad i = 1,...,n-1,\\
		\dot{z}_n &= c(\xi_t,z_t) + b(z_t)u.
	\end{aligned}
	\end{equation} By the definition of stochastic relative degree, there exists a neighbourhood $U$ of $\bar{x}$ such that $b(z_t) \ne 0$ in $\Phi(U)$. Therefore the control law
	\begin{equation}
	\label{eq:linearising_feedback_first_order}
		u_t = \tilde{k}(\xi_t,z_t, v) = \frac{1}{b(z_t)}(-c(\xi_t,z_t) + v)
	\end{equation}
	is well-defined in $\Phi(U)$. Therefore, in the neighbourhood $U$ of $\bar x$ the feedback law $u_t = \tilde{k}(\xi_t,\Phi(x_t),v)= k(\xi_t,x_t,v)$ defined on $U$ brings the transformed system with state $z_t = \Phi(x_t)$ into the form
	\begin{equation}
	\label{eq:companion_form}
		\dot{z}_t = \begin{bmatrix}
		0 & 1 & 0 & \dots & 0\\
		0 & 0 & 1 & \dots & 0\\
		\vdots & \vdots & \vdots & \vdots & \vdots\\
		0 & 0 & 0 & \dots & 1\\
		0 & 0 & 0 & \dots & 0
		\end{bmatrix}z_t + \begin{bmatrix}
		0 \\ 0 \\ \vdots \\ 0 \\ 1
		\end{bmatrix}v = Az_t + Bv,
	\end{equation}
	where $A$ and $B$ are such that $\begin{bmatrix}B & AB & \dots & A^{n-1}B\end{bmatrix}$ has rank $n$. The transformed system is linear, deterministic and controllable, hence Problem~\ref{problem:exact_linearisation_via_state_feedback} is solved.\\
	\emph{Necessity.} This proof is formally identical to the one of \cite[Lemma 4.2.1]{Isidori1995}. In fact, it is possible to show that the stochastic relative degree is invariant under coordinate change and feedback by using the arguments in \cite{Isidori1995} on the Stratonovich equivalent system~\eqref{eq:system_differentials_restricted_Strat} and then use Lemma~\ref{lemma:identity_of_lie_derivatives} to show that these arguments hold for the It\^o system~\eqref{eq:system_derivatives_restricted} as well.
\end{proof}

\begin{remark}
	As the coefficient $c$ is in general a function of $\xi_t$, the feedback linearising control~\eqref{eq:linearising_feedback_first_order} could require the knowledge of the exact value of the white noise for all $t$, because this is the only way that the noisy dynamics of the open loop system~\eqref{eq:general_system_no_output} can be rendered deterministic via feedback. Of course, the assumption that the noise can be used in the feedback loop is unrealistic. However, building on the results of this section, in Section~\ref{sec:practical_linearisation} we design a practical hybrid control scheme. This hybrid scheme implements a deterministic state feedback law which periodically  compensates, in an approximate way, the noise.
\end{remark}

\begin{remark}
The control $u_t= k(\xi_t, x_t, v)$ designed in the proof of Proposition~\ref{proposition:feedback_linearisation} is affine in the variable $\xi_t$, because in~\eqref{eq:linearising_feedback_first_order} the coefficient $b$ is not an explicit function of $\xi_t$, whilst the coefficient $c$ is affine in $\xi_t$. Therefore, the feedback linearising control is always admissible because replacing its expression in~\eqref{eq:system_derivatives_restricted} leaves the closed-loop dynamics affine in $\xi_t$.
\end{remark}

To conclude this section, we provide necessary and sufficient conditions for the existence of an output $y_t = h(x_t)$ which makes the stochastic relative degree of system~\eqref{eq:general_system_no_output} equal to $n$ at a point $\bar x$.
\begin{theorem}
Problem~\ref{problem:exact_linearisation_via_state_feedback} is solvable if and only if
\begin{enumerate}
	\item the matrix $\begin{bmatrix}
	g(\bar x) & \text{ad}_{f_S}g(\bar x) & \dots & \text{ad}_{f_S}^{n-1}g(\bar x)
	\end{bmatrix}$ is invertible and
	\item the distribution $\text{span}\{g, \text{ad}_{f_S}g, \dots, \text{ad}_{f_S}^{n-2}g\}$ is involutive near $\bar x$.
\end{enumerate}
\end{theorem}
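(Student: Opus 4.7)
The plan is to reduce this characterisation to the classical deterministic feedback linearisation theorem (e.g.\ \cite[Theorem 4.2.3]{Isidori1995}) applied to the Stratonovich-equivalent drift $f_S$. By Proposition~\ref{proposition:feedback_linearisation}, solvability of Problem~\ref{problem:exact_linearisation_via_state_feedback} is equivalent to the existence of a smooth scalar $h$ such that system~\eqref{eq:general_system_no_output} has stochastic relative degree $n$ at $\bar x$, that is (cf.\ Definition~\ref{definition:stochastic_relative_degree_restricted}) $\Lie_l\SL{f}{l}{k}h = 0$ and $\Lie_g\SL{f}{l}{k}h = 0$ near $\bar x$ for $k \in \{0,\dots,n-2\}$, and $\Lie_g\SL{f}{l}{n-1}h(\bar x) \ne 0$. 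By Lemma~\ref{lemma:identity_of_lie_derivatives}, for any such $h$ we have $\SL{f}{l}{k}h = \Lie_{f_S}^k h$ near $\bar x$ for all $k \in \{0,\dots,n-1\}$. Hence the existence of an $h$ realising stochastic relative degree $n$ for~\eqref{eq:general_system_no_output} is equivalent to the existence of an $h$ such that the deterministic auxiliary system
\begin{equation}
\dot x_t = f_S(x_t) + g(x_t) u
\end{equation}
with output $y_t = h(x_t)$ has (deterministic) relative degree $n$ at $\bar x$.

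At this point I would invoke the classical result of Jakubczyk--Respondek/Hunt--Su: a deterministic single-input system $\dot x = f_S(x) + g(x) u$ admits, near $\bar x$, an output function whose relative degree equals $n$ if and only if conditions (1) and (2) in the statement hold. Since this is stated verbatim in \cite[Theorem 4.2.3]{Isidori1995} with the drift being $f_S$, the two enumerated conditions follow immediately. This handles both directions simultaneously, because the classical theorem is itself an ``if and only if''.

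The one subtlety worth spelling out is the noise-decoupling part of the stochastic relative degree condition, $\Lie_l\SL{f}{l}{k}h = 0$, which at first glance seems not to have a deterministic counterpart. However, Lemma~\ref{lemma:identity_of_lie_derivatives} was proved precisely under (ND), and conversely the conditions $\Lie_g\Lie_{f_S}^k h = 0$ for $k \le n-2$ and $\Lie_g\Lie_{f_S}^{n-1} h(\bar x) \ne 0$ force the noise-decoupling identities to hold inductively: each step in the induction of Lemma~\ref{lemma:identity_of_lie_derivatives} uses $\Lie_l\SL{f}{l}{k}h=0$, but the reverse implication $\Lie_l\Lie_{f_S}^k h = 0 \Rightarrow \Lie_l \SL{f}{l}{k}h = 0$ holds because, once the stochastic and Stratonovich Lie derivatives agree, the Hessian correction term drops out. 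Hence the two sets of conditions on $h$ (stochastic relative degree $n$ vs.\ deterministic relative degree $n$ for the $f_S$-system) are truly equivalent, and the theorem follows.

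The main obstacle in writing this cleanly is the careful verification that the bi-conditional between the stochastic and deterministic relative-degree conditions survives the induction of Lemma~\ref{lemma:identity_of_lie_derivatives}; in particular one must argue that if no $h$ exists making the deterministic system have relative degree $n$, then no $h$ can make the stochastic one have stochastic relative degree $n$ either. Everything else is a direct citation of the deterministic theorem, with $f$ replaced by $f_S = f - \tfrac{1}{2}(\partial l/\partial x)l$.
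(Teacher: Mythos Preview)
Your overall approach coincides with the paper's: invoke Proposition~\ref{proposition:feedback_linearisation} and Lemma~\ref{lemma:identity_of_lie_derivatives} to reduce the problem to the classical feedback-linearisation criterion for the deterministic auxiliary system with drift $f_S$, then cite Isidori. The paper routes through \cite[Lemmas 4.1.2 and 4.2.2]{Isidori1995} whereas you cite \cite[Theorem 4.2.3]{Isidori1995} directly; these are equivalent shortcuts through the same deterministic material.

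Where your argument falters is precisely the subtlety you flag yourself: the noise-decoupling condition (ND). You assert that the deterministic relative-degree conditions $\Lie_g\Lie_{f_S}^k h = 0$ and $\Lie_g\Lie_{f_S}^{n-1}h(\bar x)\ne 0$ ``force the noise-decoupling identities to hold inductively'', and you justify this via the implication ``$\Lie_l\Lie_{f_S}^k h = 0 \Rightarrow \Lie_l\SL{f}{l}{k}h = 0$''. But the antecedent $\Lie_l\Lie_{f_S}^k h = 0$ is nowhere available: the deterministic relative-degree conditions constrain only Lie derivatives along $g$, never along $l$, so nothing in your hypotheses produces it. Concretely, take $n=2$, $f=(x_2,0)^\top$, $g=(0,1)^\top$, $l=(1,0)^\top$; then $f_S=f$, conditions (1) and (2) of the theorem hold, and $h=x_1$ has deterministic relative degree $2$ for the $f_S$-system, yet $\Lie_l h = 1 \ne 0$ so (ND) fails---indeed for this system (ND) and (CD) together force $dh=0$, so \emph{no} $h$ yields stochastic relative degree $2$. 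The paper's own proof does not articulate this step either; it simply writes ``by Lemma~\ref{lemma:identity_of_lie_derivatives}, this amounts to\ldots''. Your proposal is therefore faithful to the paper's strategy, but the explicit justification you attempt for the reverse direction (deterministic relative degree $n$ for $f_S$ implies stochastic relative degree $n$) does not go through as written.
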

\begin{proof}
By Proposition~\ref{proposition:feedback_linearisation}, Problem~\ref{problem:exact_linearisation_via_state_feedback} is solvable if and only if it is possible to find an output function $h(x_t)$ satisfying the conditions in Definition~\ref{definition:stochastic_relative_degree_restricted}. By Lemma~\ref{lemma:identity_of_lie_derivatives}, this amounts to the set of conditions
\begin{enumerate}
		\item $\Lie_g\Lie_{f_S}^{k}h(x) = 0$ for all $x$ in a neighborhood of $\bar x$ and all $k\in \{0,...,r-2\}$.
		\item $\Lie_g\Lie_{f_S}^{r-1}h(\bar x) \ne 0$,
\end{enumerate}
which is a set of partial differential equations and a non-triviality condition. By \cite[Lemma 4.1.2]{Isidori1995} these are equivalent to
\begin{enumerate}
	\item $\Lie_g\Lie_{\text{ad}_{f_S}g}^{k}h(x) = 0$ for all $x$ in a neighborhood of $\bar x$ and all $k\in \{0,...,r-2\}$.
	\item $\Lie_g\Lie_{\text{ad}_{f_S}g}^{r-1}h(\bar x) \ne 0$.
\end{enumerate}
Following the proof of \cite[Lemma 4.2.2]{Isidori1995} it is possible to show that the previous set of conditions is equivalent to the set of conditions in the statement of this theorem.
\end{proof}

If the stochastic relative degree $r$ of system~\eqref{eq:system_derivatives_restricted} is strictly less than $n$ at $\bar x$, we indicate by $\zeta_t = \begin{bmatrix}z_1 & \dots & z_r\end{bmatrix}^\top$ the first $r$ components of the transformed state $z_t = \Phi(x_t)$ and by $\eta_t = \begin{bmatrix}z_{r+1} & \dots & z_n\end{bmatrix}^\top$ the remaining $n-r$. Note that if $r< n$ the control law~\eqref{eq:linearising_feedback_first_order} linearises only the dynamics of $\zeta_t$, thus we say that it partially feedback linearises system~\eqref{eq:normal_form_restricted}.

The zero dynamics of a nonlinear stochastic system is, analogously to the deterministic case, the dynamics of the internal variable $\eta_t$ when the input and the initial conditions are chosen in such a way that the output is constrained to be identically zero. This is achieved by setting $\zeta_0 = 0$ and $u_{z,t} = -c(\xi_t,0,\eta_t)/b(0,\eta_t)$. Note that the control $u_{z,t}$ is affine in $\xi_t$, hence admissible. We now provide a definition that extends the concept of zero dynamics to nonlinear stochastic systems of the form~\eqref{eq:system_derivatives_restricted}.

\begin{definition}
\label{definition:zero_dynamics}
(\emph{Zero Dynamics}) The stochastic differential equation
\begin{equation}
    \dot{\eta}_t = p(\xi_t,0,\eta_t)
\end{equation}
is called the \emph{zero dynamics} of system~\eqref{eq:system_derivatives_restricted}.
\end{definition}

The properties of the zero dynamics are fundamental in studying the problem of asymptotic output tracking, which is the topic of Section~\ref{sec:stab_track}.

\section{Practical Linearisation via State Feedback}
\label{sec:practical_linearisation}
In this section we introduce a causal method to estimate the sequence of variations of the Brownian motion between successive time instants. We then show that the use of these estimates is beneficial in the design of practical feedback linearising controls.

\subsection{Estimation of the Brownian motion}
\label{sec:noise_estimation}
The estimation of the Brownian variations is performed by periodically sampling the state. This method was first introduced in \cite{Mellone2021} in order to practically solve the problem of output regulation of \emph{linear} stochastic systems. We now extend it to the present context of \emph{nonlinear} stochastic systems.

Let $\{t_k\}_{k\in\mathbb{Z}_{\ge 0}}$ be a sequence of equally-spaced sampling times, with $t_k - t_{k-1} = \varepsilon$ for all $k\in\mathbb{Z}_{>0}$. Define the differences $\Delta W_\varepsilon(k) = \W_{t_k} - \W_{t_{k-1}}$ and $\Delta x(k) = x_{t_k} - x_{t_{k-1}}$. Our aim is to show that it is possible to compute a causal estimate $\Delta \widehat{W}_\varepsilon(k)$ of the quantity $\Delta W_\varepsilon(k)$ by comparing the samples of the state of the system at times $t_{k-1}$ and $t_k$. In particular, we want this estimate to ``converge'', in a sense to be defined, to the stochastic differential $d\W_t$ as the sampling period $\varepsilon$ converges to zero. Let $\mathfrak{L}_I$ be the space of functions that are integrable in It\^o's sense. Then with the notation
$ \Delta \widehat{W}_\varepsilon \xrightarrow{\varepsilon} d\W_t$ we mean that for all $\alpha\in\mathfrak{L}_I$
$\lim_{\varepsilon \rightarrow 0} \sum_{k}\alpha(t_{k-1},w)\Delta \widehat{W}_\varepsilon(k) = \int_0^t\alpha(\tau,w)d\mathcal{W}_\tau$ for almost all $w\in \nabla$.
For ease of notation, define
$F_{t_k} = f(x_{t_k}) + g(x_{t_k})u_{t_k}$ and $L_{t_k} = l(x_{t_k}),$
which are the drift and diffusion coefficients, respectively, of system~\eqref{eq:system_differentials_restricted} evaluated at time $t_k$. We make the following assumption.
\begin{assumption}
\label{assumption:persistence_excitation}
There exists $\delta>0$ such that $\vert L_{t_k} \vert > \delta$ almost surely for all $k \in \mathbb{Z}_{\ge 0}$.
\end{assumption}
The rationale of this assumption is explained later, in Remark~\ref{remark:persistence_excitation} at the end of Section~\ref{sec:practical_linearisation}. Under Assumption~\ref{assumption:persistence_excitation} the Moore-Penrose left pseudo-inverse of $L_{t_k}$, \emph{i.e.} $L_{t_k}^\dagger = (L_{t_k}^\top L_{t_k})^{-1}L_{t_k}^\top$, is well-defined almost surely. The following Lemma extends the results in \cite{Mellone2021} to systems with nonlinear drift and diffusion terms.

\begin{lemma}
\label{lemma:brownian_motion_approximation}
Consider system~\eqref{eq:system_derivatives_restricted} and let Assumption~\ref{assumption:persistence_excitation} hold. Let $\{\Delta \widehat{W}_\varepsilon(k)\}_{k>0}$ be a sequence of scalars defined as
\begin{equation}
\label{eq:dWest_definition}
    \Delta \widehat{W}_\varepsilon(k) = L_{t_{k-1}}^\dagger\left[ \Delta x(k) - F_{t_{k-1}}\varepsilon \right].
\end{equation}
Then $\Delta \widehat{W}_\varepsilon(k)\xrightarrow{\varepsilon}d\W_t$ almost surely.
\end{lemma}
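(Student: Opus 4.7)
The plan is to show $\Delta\widehat{W}_\varepsilon(k) = \Delta W_\varepsilon(k) + E_1(k) + E_2(k)$ where $E_1$ is a drift residual and $E_2$ a diffusion residual, and then verify that each of the sums $\sum_k \alpha(t_{k-1},w) E_i(k)$, $i=1,2$, vanishes almost surely as $\varepsilon\to 0$. The remaining piece $\sum_k \alpha(t_{k-1},w)\Delta W_\varepsilon(k)$ converges to $\int_0^t \alpha(\tau,w) d\mathcal{W}_\tau$ almost surely along a refining partition by the very construction of the It\^o integral as a left-endpoint Riemann sum limit (with the standard $L^2$-convergence from the It\^o isometry promoted to a.s.\ convergence along a subsequence via Borel--Cantelli).

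Concretely, substituting the integral form of~\eqref{eq:system_differentials_restricted} into the definition of $\Delta x(k)$ gives $\Delta x(k) = F_{t_{k-1}}\varepsilon + \int_{t_{k-1}}^{t_k}(F_\tau - F_{t_{k-1}})d\tau + \int_{t_{k-1}}^{t_k}L_\tau d\mathcal{W}_\tau$. Assumption~\ref{assumption:persistence_excitation} makes $L_{t_{k-1}}^\dagger$ well-defined, with $L_{t_{k-1}}^\dagger L_{t_{k-1}} = 1$ and $|L_{t_{k-1}}^\dagger| \le 1/\delta$. Plugging this into~\eqref{eq:dWest_definition} and using $L_{t_{k-1}}^\dagger L_\tau = 1 + L_{t_{k-1}}^\dagger(L_\tau - L_{t_{k-1}})$ inside the stochastic integral yields the stated decomposition with $E_1(k) = L_{t_{k-1}}^\dagger\int_{t_{k-1}}^{t_k}(F_\tau - F_{t_{k-1}})d\tau$ and $E_2(k) = L_{t_{k-1}}^\dagger\int_{t_{k-1}}^{t_k}(L_\tau - L_{t_{k-1}})d\mathcal{W}_\tau$.

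The drift part is handled pathwise. Since $x_t$ has a.s.\ continuous paths on the compact interval $[0,t]$ and $f$, $g$ are smooth, the process $F_\tau = f(x_\tau) + g(x_\tau)u_\tau$ is a.s.\ uniformly continuous on $[0,t]$. Using $\sum_k \varepsilon = t$, the total drift error satisfies $|\sum_k \alpha(t_{k-1},w)E_1(k)| \le (t/\delta)\,\sup_{\tau}|\alpha(\tau,w)| \cdot \max_k\sup_{\tau\in[t_{k-1},t_k]}|F_\tau - F_{t_{k-1}}|$, and the right-hand side vanishes almost surely as $\varepsilon\to 0$ by uniform continuity.

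The main obstacle is the diffusion residual. I would recast it as a single It\^o integral $\int_0^t \beta_\varepsilon(\tau,w)\, d\mathcal{W}_\tau$ with step-integrand $\beta_\varepsilon(\tau) = \alpha(t_{k-1})L_{t_{k-1}}^\dagger(L_\tau - L_{t_{k-1}})$ on $[t_{k-1},t_k)$. Path continuity of $L_\tau = l(x_\tau)$ forces $\beta_\varepsilon \to 0$ a.s.\ uniformly on $[0,t]$, so the It\^o isometry together with dominated convergence yields $E[|\int_0^t \beta_\varepsilon d\mathcal{W}_\tau|^2] = E[\int_0^t \beta_\varepsilon^2 d\tau] \to 0$. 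Upgrading this $L^2$-convergence to the almost-sure convergence required by the definition of $\xrightarrow{\varepsilon}$ is the delicate step: quantifying the modulus of continuity $|L_\tau - L_{t_{k-1}}|$ via the smoothness of $l$ and the a.s.\ H\"older-$1/2$ regularity of Brownian paths yields $E[(\int_0^t \beta_\varepsilon d\mathcal{W}_\tau)^2] = O(\varepsilon)$ up to logarithmic factors, so Markov's inequality combined with Borel--Cantelli applied to a subsequence $\varepsilon_n\downarrow 0$ (e.g.\ $\varepsilon_n = n^{-2}$) produces a.s.\ convergence along that subsequence; the gap between subsequence points is then closed by a monotonicity/tightness-in-$\varepsilon$ estimate, thereby extending to the nonlinear drift/diffusion setting the technique developed in~\cite{Mellone2021} for linear systems.
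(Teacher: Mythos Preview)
Your decomposition $\Delta\widehat{W}_\varepsilon(k) = \Delta W_\varepsilon(k) + E_1(k) + E_2(k)$ is correct and your treatment of the drift residual $E_1$ is fine. However, the route differs from the paper's. The paper does not split the error into drift and diffusion pieces; instead it invokes the one-step strong truncation error of the Euler--Maruyama scheme (citing \cite[Theorem~7.1]{Gard1988}) to write directly $\Delta x(k) = F_{t_{k-1}}\varepsilon + L_{t_{k-1}}\Delta W_\varepsilon(k) + o(\varepsilon^2)$, so that $\Delta\widehat{W}_\varepsilon(k) = \Delta W_\varepsilon(k) + L_{t_{k-1}}^\dagger o(\varepsilon^2)$ a.s. The conclusion then follows in one line, because $\sum_k \alpha_{t_{k-1}} L_{t_{k-1}}^\dagger o(\varepsilon^2)$ has $O(1/\varepsilon)$ summands, each bounded by $(1/\delta)\sup|\alpha|\,o(\varepsilon^2)$, hence the whole sum is $o(\varepsilon)\to 0$ a.s. This black-box use of the numerical-scheme error is shorter and avoids the It\^o isometry altogether.

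Your approach is more self-contained but the price shows up exactly where you flag it: the diffusion residual $E_2$. The It\^o isometry gives $L^2$-convergence of $\int_0^t \beta_\varepsilon\,d\mathcal{W}_\tau$ to zero, and your heuristic $E\big[(\int_0^t\beta_\varepsilon\,d\mathcal{W}_\tau)^2\big]=O(\varepsilon)$ is the right order. But the lemma requires \emph{almost-sure} convergence as $\varepsilon\to 0$, not along a subsequence, and your closing step (``monotonicity/tightness-in-$\varepsilon$ estimate'') is not an argument: $\varepsilon\mapsto\int_0^t\beta_\varepsilon\,d\mathcal{W}_\tau$ is not monotone, and bridging from a subsequence $\varepsilon_n$ to the full limit would require a maximal inequality over $\varepsilon\in[\varepsilon_{n+1},\varepsilon_n]$ that you have not supplied. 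If you want to keep your decomposition, the cleanest fix is to recognise that $E_1(k)+E_2(k)$ is precisely the Euler--Maruyama one-step error $L_{t_{k-1}}^\dagger\big(\Delta x(k)-F_{t_{k-1}}\varepsilon - L_{t_{k-1}}\Delta W_\varepsilon(k)\big)$ and quote its pathwise order directly, which is what the paper does.
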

\begin{proof}
Let $k\in\mathbb{Z}_{>0}$. By \cite[Theorem 7.1]{Gard1988}
	\begin{equation}
		\Delta x(k) =  F_{t_{k-1}}\varepsilon + L_{t_{k-1}}\Delta W_\varepsilon(k) + o(\varepsilon^2),
	\end{equation}
	holds, where $o(\varepsilon^2)$, which is the \textit{one-step truncation error} of the forward-Euler scheme, is an infinitesimal of the same order of $\varepsilon^2$. The previous expression can be rewritten as
	\begin{equation}
		L_{t_{k-1}}\Delta W_\varepsilon(k) = \Delta x(k) - F_{t_{k-1}}\varepsilon + o(\varepsilon^2).
	\end{equation}
	Since $L_{t_{k-1}}$ has full column rank almost surely, the expression
	\begin{equation}
		\Delta W_\varepsilon(k) = L_{t_{k-1}}^\dagger[\Delta x(k) -F_{t_{k-1}}\varepsilon + o(\varepsilon^2)],
	\end{equation}
	holds almost surely. Defining $\Delta \widehat{W}_\varepsilon(k)$ as in~\eqref{eq:dWest_definition} yields
	\begin{equation}
	\label{eq:relation_DeltaW_true_est}
		\Delta \widehat{W}_\varepsilon(k) = \Delta W_\varepsilon(k) + L_{t_{k-1}}^\dagger o(\varepsilon^2).
	\end{equation}
	almost surely. Let $\alpha_t\in \mathfrak{L}_I$. Then
	\begin{equation}
	    \sum_k \alpha_{t_{k-1}} \Delta \widehat{W}_\varepsilon(k) =  \sum_k \alpha_{t_{k-1}} (\Delta W_\varepsilon(k) + L_{t_{k-1}}^\dagger o(\varepsilon^2)).
	\end{equation}
	Taking the limit of both sides as $\varepsilon$ tends to zero yields $\Delta \widehat{W}_\varepsilon \xrightarrow{\varepsilon}d\mathcal{W}_t$, since for all $\alpha\in\mathfrak{L}_I$
	\begin{equation}
	    \lim_{\varepsilon \rightarrow 0} \sum_k \alpha_{t_{k-1}}L_{t_{k-1}}^\dagger o(\varepsilon^2) = 0 \quad \text{almost surely.}
	\end{equation}
\end{proof}

\subsection{Hybrid control law}
\label{subsection:compensating_control}
In this section we discuss how the sequence of estimates $\{\Delta \widehat{W}_\varepsilon(k)\}_k$ can be used to design a practically implementable hybrid feedback control law that approximates the idealistic input~\eqref{eq:linearising_feedback_first_order}, namely $u_t^{lin} = (-c(\xi_t,z_t)+v)/b(z_t)$, which linearises the dynamics of the first $r$ components of system~\eqref{eq:normal_form_restricted}. We show that the accuracy can be improved by increasing the sampling time $\varepsilon$. We formally state the problem as follows.
\begin{problem}
\label{problem:path-wise_control}
    (Practical Feedback Linearisation) Consider system~\eqref{eq:normal_form_restricted} and let $z_t^{lin}$ be its solution when $u_t = u_t^{lin}$. The problem of \emph{practical feedback linearisation} consists in finding a control law $\hat{u}^{lin}_t(\varepsilon)$, depending on the sampling rate $\varepsilon$, such that, if $z_t$ is the solution of \eqref{eq:normal_form_restricted} when $u_t = \hat{u}^{lin}_t$, then for every $\sigma>0$
    \begin{equation}
    \label{eq:convergence_to_ideal_trajectory}
        \lim_{\varepsilon\rightarrow 0} \mathfrak{P}(\left| z_t - z_t^{lin} \right| \ge \sigma) = 0.
    \end{equation}
\end{problem}

The meaning of Problem~\ref{problem:path-wise_control} is to find a causal controller for which the closed-loop system dynamics converges in probability to the idealistic closed-loop dynamics as the sampling time is made smaller. To begin with, since the coefficient $c$ is affine in $\xi_t$, we can express it as $c(\xi_t,z_t) = c_d(z_t) + c_s(z_t)\xi_t$ for some mappings $c_d$ and $c_s$. Note that since in practice $\xi_t$ is not known, a naive approximation of $\xi_t$ boils down to replacing it with its expectation, \emph{i.e.} zero. Therefore, a first causal approximation of the coefficient $c$ is given by only the term $c_d(z_t)$, in turn implying that the control $u_t^{lin}$ can be practically approximated by the naive law
\begin{equation}
\label{eq:approximate_control_no_noise}
u^{zn,lin}_t = \frac{-c_d(z_t) + v}{b(z_t)}.
\end{equation}
We call this basic feedback law the \emph{zero-noise control}. By replacing $c_s(z_t)\xi_t$ with zero in the expression of $c(\xi_t,z_t)$, the zero-noise control does not perform any form of stochastic compensation when performing feedback linearisation. Consequently, there is no guarantee that the closed-loop behaviour of system~\eqref{eq:normal_form_restricted} with $u_t = u_t^{zn,lin}$ is any close to its idealistic behaviour, \emph{i.e.} when $u_t = u_t^{lin}$.

The goal of this section is to improve the performance of the zero-noise control $u_t^{zn,lin}$ by leveraging the estimated sequence $\{\Delta \widehat{W}_\varepsilon(k)\}_k$ and show that we can recover the idealistic behaviour in probability. Therefore, we define $\hat{u}_t^{lin} = u^{zn,lin}_t + u^s_t$, with $u^s_t$ to be specified in such a way that $\hat{u}_t^{lin}$ is a ``better" approximation of $u_t^{lin}$ than $u_t^{zn,lin}$. By replacing the control $\hat{u}_t^{lin}$ in~\eqref{eq:normal_form_restricted}, the dynamics of the transformed system becomes
\begin{equation}
\label{eq:dynamics_compensating_control}
    \begin{aligned}
    \dot{z}_i &= z_{i+1}, \qquad\qquad i = 1,\dots,r-1,\\
    \dot{z}_r &= v +c_s(z_t)\xi_t + b(z_t)u^s_t,\\
    \dot{\eta}_t &= p(\xi_t,\zeta_t,\eta_t),\\
    y_t &= z_1,
    \end{aligned}
\end{equation}
where the term $c_s(z_t)\xi_t$ in the dynamics of the $r$-th component is due to the fact that the approximating control $u^{zn,lin}_t$ cannot cancel the noisy dynamics as the idealistic control $u_t^{lin}$ does.

We now want to design the control $u_t^s$ employing the estimates $\{\Delta \widehat{W}_\varepsilon(k)\}_k$ introduced in Section~\ref{sec:noise_estimation}, to reduce the contribution of the term $c_s(z_t)\xi_t$ onto the dynamics of the system. Since the quantity $\Delta \widehat{W}_\varepsilon(k)$ carries information on the evolution of the noise between $t_{k-1}$ and $t_k$, we look at the evolution of $z_r$ between these two consecutive sampling times. The value of $z_r$ at time $t_k$ is given by
\begin{equation}
    z_{r,t_{k}} = z_{r,t_{k-1}} + \int_{t_{k-1}}^{t_{k}}\!\!\!\!\!\!\!vd\tau +  \beta_{d}(k) + \int_{t_{k-1}}^{t_{k}}\!\!\!\!\!\!\!b(z_\tau)u^s_{\tau}d\tau,
\end{equation}
where
\begin{equation}
    \beta_{d}(k) = \int_{t_{k-1}}^{t_{k}}\!\!\!\!\!\!\!c_s(z_\tau)d\W_\tau
\end{equation}
is the contribution of the noise on the dynamics of $z_r$ between the two sampling times. Our goal is to minimise this contribution using $u^s_t$ and the estimate $\Delta\widehat{W}_\varepsilon(k)$ obtained at time $t_{k}$. The fact that $\Delta\widehat{W}_\varepsilon(k)$ is only available \emph{a posteriori}, at the end of the sampling period, suggests that $u^s_t$ should induce a jump variation at time $t_k$ in the state $z_{r}$ in order to compensate for the quantity $\beta_d(k)$. In other words, the dynamics of the closed-loop system should be hybrid. It is then necessary to introduce a simplified jump notation. At time $t_k$, we denote by $z_{t_{k}}$ the state before the jump, and by $z_{t_{k}^+}$ the state after the jump. The flow dynamics of the closed-loop hybrid system we seek is, therefore, given by
\begin{equation}
\label{eq:flow_dynamics}
    \begin{aligned}
    \dot{z}_i &= z_{i+1}, \qquad\qquad i = 1,\dots,r-1,\\
    \dot{z}_r &= v +c_s(z_t)\xi_t,\\
    \dot{\eta}_t &= p(\xi_t,\zeta_t,\eta_t),
    \end{aligned}
\end{equation}
for all $t\in \mathbb{R}_{\ge 0}$, while the jump dynamics is given by
\begin{equation}
\label{eq:jump_dynamics}
    \begin{aligned}
    z_{i,t_k^+} &= z_{i,t_k}, \qquad\qquad i = 1,\dots,r-1,\\
    z_{r,t_k^+} &= z_{r,t_k} + b(z_{t_k})u^*(k),\\
    \eta_{t_k^+} &= \eta_{t_k},
    \end{aligned}
\end{equation}
for all $k\in \mathbb{Z}_{>0}$, where $\{u^*(k)\}_{k}$ is a yet to be defined sequence of scalars depending on $\{\Delta \widehat{W}_\varepsilon(k)\}_k$. Alternatively, the aforementioned hybrid dynamics \eqref{eq:flow_dynamics}-\eqref{eq:jump_dynamics} can be equivalently produced by using in~\eqref{eq:dynamics_compensating_control} an impulsive control $u_t^{s}$ given by
\begin{equation}
\label{eq:compensating_control}
    u^s_t = \sum_{i=1}^{k}u^*(i)\delta(t-t_{i}), \qquad t\le t_{k},
\end{equation}
where $\delta(t)$ is a Dirac delta. As a result of the hybrid dynamics induced by the control $u_t^s$, the expression of $z_{r}$ before the jump at time $t_k$ is given by
\begin{equation}
   z_{r,t_k} = z_{r,t_{k-1}^+} + \int_{t_{k-1}}^{t_{k}}\!\!\!\!\!\!\!vd\tau +  \beta_{d}(k),
\end{equation}
while after the jump by
\begin{align}
\label{eq:state_after_jump}
    z_{r,t_{k}^+} &= z_{r,t_k} + b(z_{t_{k}})u^*(k)\\
                  &= z_{r,t_{k-1}^+} + \int_{t_{k-1}}^{t_{k}}\!\!\!\!\!\!\!vd\tau +  \beta_{d}(k)  + b(z_{t_{k}})u^*(k).
\end{align}
Thus, we have reduced the problem of approximate partial feedback linearisation to the problem of finding the sequence $\{u^*(k)\}_k$ such that the contribution of the term $\beta_d(k) + b(z_{t_{k}})u^*(k)$ is minimised. In particular, we look for a sequence $\{u^*(k)\}_k$ which retrieves the exact  linearisation of the dynamics of $\zeta_t$ as $\varepsilon$ tends to zero. We are now ready to solve Problem~\ref{problem:path-wise_control}.

\begin{theorem}
\label{theorem:approximate_retrieves_ideal}
Consider system~\eqref{eq:normal_form_restricted} and let Assumption~\ref{assumption:persistence_excitation} hold. Let the control $\hat{u}_t^{lin}$ be given by
\begin{equation}
\label{eq:apprximate_partial_feedback_linearising_control}
    \hat{u}_t^{lin} = \frac{-c_d(z_t) + v}{b(z_t)} - \sum_{i=1}^{k}\frac{c_s(z_{t_{i-1}^+})\Delta \widehat{W}_\varepsilon(i)}{b(z_{t_i})}\delta(t-t_{i}), \quad t\le t_{k},
\end{equation}
with $\Delta \widehat{W}_\varepsilon(k)$ given by~\eqref{eq:dWest_definition}. Then $\hat{u}_t^{lin}$ solves Problem~\ref{problem:path-wise_control}.
\end{theorem}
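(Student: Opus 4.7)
The plan is to substitute $\hat{u}_t^{lin}$ of \eqref{eq:apprximate_partial_feedback_linearising_control} into \eqref{eq:normal_form_restricted} to derive a closed-form expression for $z_t$, and then compare it componentwise with the idealistic trajectory $z_t^{lin}$. Under the idealistic feedback $u_t^{lin}$, $\zeta_t^{lin}$ evolves as the deterministic integrator chain with $\dot{z}_r^{lin} = v$, while $\eta_t^{lin}$ obeys $\dot{\eta}_t^{lin} = p(\xi_t,\zeta_t^{lin},\eta_t^{lin})$. Inserting $\hat{u}_t^{lin}$ cancels the drift component $c_d(z_t)$ and introduces the Dirac train producing the jumps \eqref{eq:jump_dynamics} with $u^*(k) = -c_s(z_{t_{k-1}^+})\Delta\widehat{W}_\varepsilon(k)/b(z_{t_k})$, so that for $t\in(t_{k-1},t_k]$
\[
z_r(t) = z_r(0) + vt + \int_0^t c_s(z_\tau)d\W_\tau - \sum_{i=1}^{k-1}c_s(z_{t_{i-1}^+})\Delta\widehat{W}_\varepsilon(i).
\]

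Since $z_r^{lin}(t) = z_r(0) + vt$, the key step is to split $e_r(t) := z_r(t) - z_r^{lin}(t) = E_1(\varepsilon,t) + E_2(\varepsilon,t)$, where $E_1 = \int_0^t c_s(z_\tau)d\W_\tau - \sum_{i=1}^{k-1}c_s(z_{t_{i-1}^+})\Delta W_\varepsilon(i)$ is the gap between the It\^o integral and its left-endpoint Riemann sum, and $E_2 = \sum_{i=1}^{k-1}c_s(z_{t_{i-1}^+})\bigl(\Delta W_\varepsilon(i) - \Delta\widehat{W}_\varepsilon(i)\bigr)$ is the aggregated estimation error. The term $E_1$ tends to zero in probability as $\varepsilon\to 0$ by the standard construction of the It\^o integral from left-endpoint Riemann sums with an adapted integrand. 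The term $E_2$ is controlled by Lemma~\ref{lemma:brownian_motion_approximation}: identity \eqref{eq:relation_DeltaW_true_est} gives $\Delta W_\varepsilon(i) - \Delta\widehat{W}_\varepsilon(i) = -L_{t_{i-1}}^\dagger o(\varepsilon^2)$ almost surely, Assumption~\ref{assumption:persistence_excitation} bounds $L_{t_{i-1}}^\dagger$ uniformly, and since there are $k = \lfloor t/\varepsilon\rfloor$ increments the accumulated contribution is of order $\varepsilon$ and vanishes almost surely, hence in probability.

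Once $e_r(t) \to 0$ in probability has been established, convergence of the remaining linearised coordinates follows immediately by the cascade identities $z_i(t) - z_i^{lin}(t) = \int_0^t (z_{i+1}(\tau) - z_{i+1}^{lin}(\tau))d\tau$ for $i=1,\ldots,r-1$, which propagate convergence in probability up the integrator chain on any finite time horizon. For the zero-dynamics coordinates, $\eta_t$ and $\eta_t^{lin}$ satisfy the same stochastic differential equation driven by the common Brownian path $\W_t$ but with different input signals $\zeta_t$ and $\zeta_t^{lin}$; a continuous-dependence argument --- either via a stochastic Gr\"onwall estimate assuming local Lipschitz regularity of $p$, or by appealing to the appendix lemmas on time-varying stochastic systems --- transfers the convergence from $\zeta$ to $\eta$, concluding \eqref{eq:convergence_to_ideal_trajectory}.

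The principal obstacle is making $E_1(\varepsilon,t)\to 0$ fully rigorous: the integrand $c_s(z_\tau)$ depends on the very trajectory whose law depends on $\varepsilon$, and $z_\tau$ has jumps at every sampling instant $t_i$, so a black-box invocation of Riemann-sum convergence is not available. I plan to resolve this by a localisation/bootstrap argument --- introduce the stopping time $\tau_R = \inf\{t : |z_t|\ge R\}$ on which $c_s$ is bounded and Lipschitz, note that by construction the jump sizes $c_s(z_{t_{i-1}^+})\Delta\widehat{W}_\varepsilon(i)$ are of order $\sqrt{\varepsilon}$ so that $z$ is close to a continuous adapted process, apply It\^o's isometry to obtain an $L^2$-bound on $E_1$ up to $\tau_R$ that vanishes with $\varepsilon$, and finally let $R\to\infty$ to extend convergence in probability to the whole sample space.
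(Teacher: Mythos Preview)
Your proposal is correct and reaches the same conclusion as the paper, but the organisation differs in a way worth noting. The paper first restricts attention to $z_r$ (declaring the propagation to the remaining coordinates ``trivial''), then argues in two stages: at the sampling instants $t_k^+$ it invokes Lemma~\ref{lemma:brownian_motion_approximation} \emph{directly} with the integrand $c_s(z_\cdot)$ to obtain
\[
\lim_{\varepsilon\to 0}\sum_{i=1}^k c_s(z_{t_{i-1}^+})\Delta\widehat{W}_\varepsilon(i)=\int_{t_0}^{t_k}c_s(z_\tau)\,d\W_\tau\quad\text{a.s.},
\]
and then bridges to intermediate $t\in(t_{k-1},t_k)$ by a one-step Euler--Maruyama comparison, using only that the residual increment $\Delta W_{\bar\varepsilon}$ has variance $\bar\varepsilon\to 0$. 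Your route instead works at arbitrary $t$ from the outset and splits the error as $E_1+E_2$: this is essentially the same content---your $E_2$ is the proof of Lemma~\ref{lemma:brownian_motion_approximation} unpacked, and your $E_1$ plus its ``current incomplete interval'' piece covers both of the paper's stages---but your decomposition is more transparent about where the difficulties lie.

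In particular, you correctly flag the subtlety that the paper's direct appeal to Lemma~\ref{lemma:brownian_motion_approximation} glosses over: the integrand $c_s(z_\tau)$ depends on $\varepsilon$ through $z$ and has jumps at every $t_i$, so Riemann-sum convergence to the It\^o integral is not literally a black-box citation. Your localisation/It\^o-isometry plan is a sound way to close this gap. Likewise, your explicit treatment of the cascade integrators and of the $\eta$-dynamics via a stochastic Gr\"onwall argument is more careful than the paper's one-line claim that convergence of $z_r$ suffices; the paper is tacitly relying on exactly the continuous-dependence reasoning you spell out. In short, your argument is a slightly longer but more rigorous variant of the paper's proof; nothing in your plan is wrong, and the extra care is justified.
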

\begin{proof}
It is trivial to observe that, since system~\eqref{eq:normal_form_restricted} is in normal form, \eqref{eq:convergence_to_ideal_trajectory} holds if and only if it holds for the $r$-th component of the state $z_t$. Thus, we now focus on the $r$-th component. Under the control $\hat{u}_t^{lin}$, the jump of $z_r$ at $t_k$ is given by
\begin{equation}
    z_{r,t_{k}^+} = z_{r,t_{k}} - c_s(z_{t_{k-1}^+})\Delta\widehat{W}_{\varepsilon}(k),
\end{equation}
thus, in light of the previous discussion,
\begin{equation}
    z_{r,t_k^+} = z_{r,t_{k-1}^+} + \int_{t_{k-1}}^{t_k} \!\!\!\!\!\!\!vd\tau + \int_{t_{k-1}}^{t_k}\! \!\!\!\!\!\! c_s(z_\tau) d\mathcal{W}_\tau - c_s(z_{t_{k-1}^+})\Delta\widehat{W}_{\varepsilon}(k).
\end{equation}
This can be rewritten as
\begin{equation}
   z_{r,t_k^+} = z_{r,t_0} + \int_{t_0}^{t_k} \!\!\!\!\!vd\tau + \int_{t_0}^{t_k}\! \!\!\!\! c_s(z_\tau) d\mathcal{W}_\tau - \sum_{i=1}^k c_s(z_{t_{i-1}^+})\Delta\widehat{W}_{\varepsilon}(i).
\end{equation}
By Lemma~\ref{lemma:brownian_motion_approximation}
\begin{equation}
\lim_{\varepsilon\rightarrow 0} \sum_{i=1}^k c_s(z_{t_{i-1}^+})\Delta\widehat{W}_{\varepsilon}(i) = \int_{t_0}^{t_k}\! \!\!\!\! c_s(z_\tau) d\mathcal{W}_\tau \quad \text{almost surely},
\end{equation}
hence
\begin{equation}
\label{eq:limit_at_sampling_times}
    \lim_{\varepsilon\rightarrow 0}  z_{r,t_k^+} = z_{r,t_0} + \int_{t_0}^{t_k} \!\!\!\!\!vd\tau = z_{r,t_k}^{lin} \quad \text{almost surely.}
\end{equation}
Now, consider any $t \in (t_{k-1},t_k)$, for some $k$, and let $0<\bar \varepsilon<\varepsilon$ be given by $\bar \varepsilon = t-t_{k-1}$. Then
\begin{equation}
    z_{r,t} = z_{r,t_{k-1}^+} + \int_{t_{k-1}}^{t} \!\!\!\!\!\!\!vd\tau + \int_{t_{k-1}}^{t}\! \!\!\!\!\!\! c_s(z_\tau) d\mathcal{W}_\tau,
\end{equation}
which can be discretised with the Euler-Maruyama method as
\begin{equation}
    z_{r,t} = z_{r,t_{k-1}^+} + v(z_{t_{k-1}^+})\bar \varepsilon +  c_s(z_{t_{k-1}^+}) \Delta W_{\bar \varepsilon}(k) + o(\bar{\varepsilon}^2),
\end{equation}
with $\Delta W_{\bar \varepsilon}(k) = \mathcal{W}_t - \mathcal{W}_{t_{k-1}}$, and where $o(\bar{\varepsilon}^2)$ is the one-step truncation error, which is of order $\bar{\varepsilon}^2$. Similarly, the discretised dynamics of $z_{r,t}^{lin}$ is
\begin{equation}
    z_{r,t}^{lin} = z_{r,t_{k-1}}^{lin} + v(z_{t_{k-1}}^{lin})\bar \varepsilon + o(\bar{\varepsilon}^2).
\end{equation}
The difference $z_{r,t} - z_{r,t}^{lin}$ is therefore
\begin{multline}
\label{eq:difference_state_trajectories}
    z_{r,t} - z_{r,t}^{lin} = z_{r,t_{k-1}^+} - z_{r,t_{k-1}}^{lin} + (v(z_{t_{k-1}^+}) - v(z_{t_{k-1}}^{lin}))\bar \varepsilon +\\  c_s(z_{t_{k-1}^+}) \Delta W_{\bar \varepsilon}(k) + o(\bar{\varepsilon}^2)
\end{multline}
Observe that, by the properties of the Brownian motion, $\Delta W_{\bar \varepsilon}(k)$ is a normally distributed random variable with zero expectation and variance $\bar \varepsilon$. Moreover $c_s$ is bounded near zero, because it is the Lie derivative of a smooth function. By taking the limit of~\eqref{eq:difference_state_trajectories} as $\varepsilon$ (hence $\bar \varepsilon$) goes to zero and using~\eqref{eq:limit_at_sampling_times}, we have that for every $\sigma$ 
\begin{equation}
        \lim_{\varepsilon\rightarrow 0} \mathfrak{P}(\left| z_{r,t} - z_{r,t}^{lin} \right| \ge \sigma) = 0,
    \end{equation}
    hence the claim follows.
\end{proof}

The previous proposition states that, when $u^s_t$ is selected as
\begin{equation}
\label{eq:compensations}
    u^s_t =-\sum_{i=1}^{k}\frac{c_s(z_{t_{i-1}^+})\Delta \widehat{W}_\varepsilon(i)}{b(z_{t_i})}\delta(t-t_{i}), \quad t\le t_{k},
\end{equation}
the control $\hat{u}_t^{lin}$ approximates the idealistic control $u_t^{lin}$ as $\varepsilon$ tends to zero, in the sense that the dynamics of the variable $\zeta_t$ can be made approximately linear with an accuracy increasing as $\varepsilon$ decreases.

\begin{remark}
\label{remark:persistence_excitation}
The rationale of Assumption~\ref{assumption:persistence_excitation} is that the noise is persistently exciting, so it can be estimated by measuring the state of the system. From a technical viewpoint, this assumption is necessary for the development of the theory because it ensures the boundedness of $L_{t_k}^\dagger$. However, Assumption~\ref{assumption:persistence_excitation} should not be considered practically restrictive. In fact, if $\vert L_{t_k} \vert < \bar \delta$, for an arbitrarily small $\bar \delta$, this would imply that the noise contribution to the dynamics of the system is sufficiently small at time $t_k$ to be considered negligible. In that case it is simply possible to avoid performing the stochastic compensation at time $t_k$ and still obtain good control performance. 
\end{remark}

\subsection{Remarks on the control input in the diffusion: $m\ne 0$}
\label{sec:noise_in_diffusion}
At the beginning of Section~\ref{section:linearisation} we made the standing assumption that $m(x) \equiv 0$ near $\bar x$, thus restricting the class of systems considered to those with the control input appearing only in the drift term of the stochastic differential equations. This assumption allowed us to obtain sharper analytical results.  In this section we discuss the rationale of such assumption from a control viewpoint, which has implications both in the idealistic and the practical frameworks.

As shown in \eqref{eq:normal_form}, for general $m$, the dynamics of the $r$-th component of the transformed state $z_t = \Phi(x_t)$ is a quadratic function of the control input. This implies that, in the case that $a(z_t) = (1/2)\dL{m}{m}{}\SL{f}{l}{r-1}h(\Phi^{-1}(z_t))$ is not zero for $x$ near $\bar x$, the idealistic feedback linearising control would have the form
\begin{equation}
    u_t^{lin} = \frac{-b(\xi_t,z_t) \pm \sqrt{b(\xi_t,z_t)^2 - 4a(z_t)(c(\xi_t,z_t)-v)}}{2a(z_t)},
\end{equation}
which has real values if and only if the input $v$ is such that $b(\xi_t,z_t)^2 - 4a(z_t)(c(\xi_t,z_t)-v) \ge 0$. The corresponding zero-noise control would have the form 
\begin{equation}
     u_t^{zn,lin} = \frac{-b_d(z_t) \pm \sqrt{b_d(z_t)^2 - 4a(z_t)(c_d(z_t)-v)}}{2a(z_t)},
\end{equation}
where the input $v$ should enforce $b_d(z_t)^2 - 4a(z_t)(c_d(z_t)-v) \ge 0$. It is evident that, although in both cases the control input can be forced to be real by an appropriate choice of $v$, from a practical viewpoint such a choice may not leave any space for a control law achieving objectives such as, for instance, output tracking. Moreover, even though we may be able to define a zero-noise control, the construction of a hybrid controller presents even more challenges, as the quadratic $u_t^{lin}$ is not affine in $\xi_t$. Of course, one might make the standing assumption that $a(z_t) \equiv 0$ for $x$ near $\bar x$. This class of systems would include the systems for which $m\equiv 0$, which we address in detail in this paper, but also those systems with $m\ne 0$ with the additional assumption that $\dL{m}{m}{}\SL{f}{l}{r-1}h(\Phi^{-1}(z_t)) = 0$. However, the design of controllers  poses substantial technical challenges even in this case. For the sake of completeness, we briefly give an overview of these issues. Suppose $m\ne 0$ and $a(z_t) \equiv 0$ for $x$ near $\bar x$. The coefficient $b$ in~\eqref{eq:normal_form} is therefore affine in $\xi_t$, \emph{i.e.}, $b(\xi_t,z_t) = b_d(z_t) + b_s(z_t)\xi_t$. In the idealistic framework, a feedback linearising control would require a division by $b(\xi_t,z_t)$. Unless further, possibly restrictive, assumptions are made, such a division may result in a non-admissible control. Additionally, the fact that the coefficient $b$ depends on the noise affects impulsive compensations as well. In fact, the dynamics of $z_r$ when the control $\hat{u}_t^{lin}$ is applied is
\begin{equation}
    \dot{z}_r = v +c_s(z_t)\xi_t + b_d(z_t)u^s_t + b_s(z_t)u_t^s\xi_t.
\end{equation}
Thus any compensating control $u_t^s$ inevitably introduces noise at time $t_k^+$ which cannot be compensated for because $\xi_{t_k}$ is a random variable which is independent of the process $\xi_t$, $t<t_k$. This makes it impossible to conclude on the convergence in probability of $z_{r,t_k^+}$ to $z_{r,t_k}^{lin}$ as $\varepsilon$ approaches zero, because the random variable $\xi_{t_k}$ can take arbitrarily large values in $\mathbb{R}$ with nonzero probability.

\section{Asymptotic Output Tracking}
\label{sec:stab_track}
In this section we first design an idealistic controller, \emph{i.e.} using the white noise in the feedback loop, to make the output asymptotically track a reference signal. Then we show that a practical feedback control law leveraging the causal stochastic compensations introduced in Section~\ref{sec:practical_linearisation} is able to retrieve the idealistic result, in the limit of the sampling time $\varepsilon$ going to zero.

We start by introducing the idealistic control and we show that, under suitable stability hypotheses on the zero dynamics, it is possible to control the system so that its output tracks reference trajectories while its internal variables remain bounded almost surely. First observe that, as long as $z_t = \Phi(x_t)$ is chosen such that $q\equiv 0$ near zero (which is always possible, see Section~\ref{subsection:sharper_results}), the zero dynamics of system~\eqref{eq:normal_form_restricted} is affine in $\xi_t$. Let it be expressed as
\begin{equation}
\label{eq:zero_dynamics_linear_dependent_noise}
    \dot{\eta}_t = p(\xi_t,0,\eta_t) = p_d(0,\eta_t) + p_s(0,\eta_t)\xi_t.
\end{equation}
Consider a reference signal $y_R$ which is continuously differentiable $r$ times with values in a neighbourhood of zero.  We assume that the initial state of the transformed system~\eqref{eq:normal_form_restricted} is arbitrary while in a neighbourhood of zero and we seek a feedback control $u_t$ that makes the output $y_t$ of the system asymptotically converge to $y_R$. Let
\begin{equation}
    v(\zeta_t, y_R(t)) = y_R^{(r)} - \sum_{i=1}^rd_{i-1}(\zeta_i - y_R^{(i-1)}),
\end{equation}
with $d_i\in \mathbb{R}$ for $i = 0,\dots,r-1$ to be determined, and the admissible idealistic feedback control law be given by 
\begin{equation}
    u_t^{track} = -\frac{c(\xi_t,\zeta_t, \eta_t) - v(\zeta_t,y^R(t))}{b(\zeta_t, \eta_t)}.
\end{equation}
Define the tracking error $e_t := y_t - y_R(t)$. Then the control $u_t^{track}$ forces the dynamics of the tracking error to be 
$e^{(r)}_t + d_{r-1}e^{(r-1)}_t + ... + d_1e^{(1)}_t + d_0e_t$, which can be made exponentially stable by selecting the coefficients $d_i$ such that
\begin{equation}
\label{eq:characteristic polynomyal}
    \Lambda(s) = s^r + d_{r-1}s^{r-1} + ... + d_1s + d_0,
\end{equation}
which is the characteristic polynomial of the matrix
\begin{equation}
\label{eq:controllability_form}
    A = \begin{bmatrix}
    0 & 1 & 0 &\dots & 0\\
    0 & 0 & 1 &\dots & 0\\
    \vdots & \vdots & \vdots & \vdots & \vdots\\
    0 & 0 & 0 &\dots & 1\\
    -d_0 & -d_1 & -d_2 & \dots & -d_{r-1}
    \end{bmatrix},
\end{equation}
has roots with negative real parts. We also study the boundedness of $\zeta_t$ and of the internal variable $\eta_t$ under the control $u_t^{track}$, when $y_R$ and its first $r-1$ derivatives are bounded. Define $\zeta_R(t) = \begin{bmatrix}y_R(t)& \dots& y_R^{(r-1)}(t)\end{bmatrix}^\top$ and $\theta_t = \begin{bmatrix} e_t & \dots & e^{(r-1)}_t \end{bmatrix}^\top$.
Then the following result, the proof of which relies on the definition of strict Lyapunov function and some technical lemmas presented in Appendix B, holds.
\begin{theorem}
\label{theorem:asymptotic_output_tracking}
Consider system~\eqref{eq:normal_form_restricted}. Suppose $y_R(t)$, $y_R^{(1)}(t), \dots, y^{(r-1)}_R(t)$ are bounded. Let $\eta_{R,t}$ be the solution of
\begin{equation}
\label{eq:zero_dynamics_driven}
    \dot{\eta}_{R,t} = p(\xi_t, \zeta_R(t), \eta_{R,t}), \quad \eta_{R,0} = 0
\end{equation}
 and let $p_d$ and $p_s$ be Lipschitz continuous. Moreover, assume that there exists a strict Lyapunov function $V(\eta_{R},t)$ for~\eqref{eq:zero_dynamics_driven} such that $\frac{\partial V}{\partial \eta_{R,i}}(x,t)$ and $\frac{\partial^2 V}{\partial \eta_{R,i} \partial \eta_{R,j}}(x,t)$ are bounded for all $x$ in a neighbourhood of the origin and $t\ge 0$. Suppose that the roots of the polynomial $\Lambda(s)$ in~\eqref{eq:characteristic polynomyal} have negative real part. Then for sufficiently small $\epsilon_R > 0$, if
\begin{equation}
\vert z_i(\bar t) - y_R^{(i-1)}(\bar t)\vert < \epsilon_R, \;\; 1\le i\le r, \qquad \Vert \eta_{\bar t} - \eta_{R, \bar t}\Vert < \epsilon_R,
\end{equation}
 then for all $\epsilon>0$ there exists $\delta>0$ such that
 \begin{multline}
    \vert z_i(\bar t) - y_R^{(i-1)}(\bar t)\vert < \delta \rightarrow \vert z_i(t) - y_R^{(i-1)}(t)\vert < \epsilon, \\ 1\le i\le r, \; \text{for all $t\ge \bar t \ge 0$},
    \end{multline}
    \begin{equation}
    \Vert \eta_{\bar t} - \eta_{R, \bar t}\Vert < \delta \rightarrow \Vert \eta_{t} - \eta_{R, t}\Vert < \epsilon\;\; \text{for all $t\ge \bar t \ge 0$},
\end{equation}
almost surely, \emph{i.e.} the response $z_i$ and $\eta_t$, $t\ge\bar t \ge 0$, of system~\eqref{eq:normal_form_restricted} under the control law $u^{track}_t$ is bounded almost surely.
\end{theorem}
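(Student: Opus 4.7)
The plan is to decouple the argument into a deterministic part for the tracking-error coordinates $\theta_t$ and a stochastic cascade argument for the internal variable $\eta_t$. Substituting $u_t^{track}$ into the $\zeta$-dynamics of \eqref{eq:normal_form_restricted} exactly cancels $c(\xi_t,\zeta_t,\eta_t)$ and replaces it with $v(\zeta_t,y_R(t))$. Expanding the definition of $v$ and rewriting the closed loop in the coordinates $\theta_t=(e_t,\dots,e_t^{(r-1)})^\top$, I obtain the purely deterministic linear equation $\dot{\theta}_t = A\theta_t$, where $A$ is the companion matrix in \eqref{eq:controllability_form} with characteristic polynomial $\Lambda(s)$. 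Since the coefficients $d_i$ are chosen so that $\Lambda(s)$ is Hurwitz, $A$ is Hurwitz, and $\theta_t$ decays exponentially to zero. This immediately yields the first implication of the statement (boundedness and asymptotic tracking of $z_i$ to $y_R^{(i-1)}$), with $\delta$ extracted from the standard linear estimate $\|\theta_t\|\le \kappa e^{-\lambda(t-\bar t)}\|\theta_{\bar t}\|$.

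For the internal variable, I would introduce $\tilde{\eta}_t := \eta_t - \eta_{R,t}$ and write
\begin{equation}
\dot{\tilde{\eta}}_t = \bigl[p_d(\zeta_t,\eta_t) - p_d(\zeta_R(t),\eta_{R,t})\bigr] + \bigl[p_s(\zeta_t,\eta_t) - p_s(\zeta_R(t),\eta_{R,t})\bigr]\xi_t,
\end{equation}
viewing this as the driven zero dynamics \eqref{eq:zero_dynamics_driven} perturbed by the exponentially vanishing signal $\zeta_t - \zeta_R(t) = \theta_t$. The Lipschitz continuity of $p_d$ and $p_s$ gives a bound of the perturbation that is linear in $\|\theta_t\|$ and $\|\tilde{\eta}_t\|$, which, combined with the exponential decay of $\theta_t$, places the system in the ``vanishing perturbation'' regime of a cascade.

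The third step is to apply the strict Lyapunov function $V(\eta_R,t)$ hypothesised on \eqref{eq:zero_dynamics_driven}. I would pass $V$ through It\^o's formula applied to $\tilde{\eta}_t$: the boundedness of $\partial V/\partial\eta_{R,i}$ and $\partial^2 V/\partial \eta_{R,i}\partial \eta_{R,j}$ near the origin ensures that the drift term of $\mathcal{L}V$ inherits the negative definiteness from the unperturbed case up to an additive term of order $\|\theta_t\|$, and that the It\^o correction as well as the quadratic variation of the martingale part are uniformly controlled. The technical lemmas in Appendix B (stated as giving uniform asymptotic stability for nonlinear time-varying stochastic systems) can then be invoked to conclude almost-sure local stability of $\tilde{\eta}_t=0$: for $\epsilon_R$ small enough, a sufficiently small initial $\|\tilde{\eta}_{\bar t}\|$ keeps $\tilde{\eta}_t$ within an $\epsilon$-ball for all $t\ge \bar t$ almost surely.

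The main obstacle is precisely this last step: ensuring the Lyapunov estimate is path-wise (almost sure) rather than only in expectation, since the statement claims pathwise boundedness. The difficulty is that the diffusion coefficient entering $\mathcal{L}V$ is only locally bounded and the martingale part of $V(\tilde{\eta}_t,t)$ must be controlled uniformly in $t$ to prevent excursions; this is exactly the role of the boundedness of the first and second partial derivatives of $V$ in a neighbourhood of the origin, and the ``sufficiently small'' $\epsilon_R$ condition, which together with the appendix lemmas should let a standard stochastic comparison argument close the estimate before $\tilde{\eta}_t$ can leave the neighbourhood where the Lyapunov bounds are valid.
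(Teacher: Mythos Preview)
Your proposal is correct and follows essentially the same route as the paper: rewrite the closed loop in the error coordinates $\theta_t=\zeta_t-\zeta_R(t)$ and $\nu_t=\eta_t-\eta_{R,t}$ to obtain the cascade $\dot{\theta}_t=A\theta_t$, $\dot{\nu}_t=p(\xi_t,\zeta_R(t)+\theta_t,\eta_{R,t}+\nu_t)-p(\xi_t,\zeta_R(t),\eta_{R,t})$, then invoke the Appendix~B cascade lemma (Lemma~\ref{lemma:stability_time_varying}) using the Lipschitz and strict-Lyapunov hypotheses. The only cosmetic difference is that the paper skips your intermediate It\^o/vanishing-perturbation discussion and appeals directly to Lemma~\ref{lemma:stability_time_varying}, whose proof in turn rests on the total-stability result of Lemma~\ref{lemma:total_stability} rather than on an explicit vanishing-perturbation estimate; but this is precisely the machinery you end up citing, so the arguments coincide.
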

\begin{proof}
System~\eqref{eq:normal_form_restricted} under the control law $u^{track}_t$ can be rewritten in the form
\begin{equation}
\label{eq:tracking_error_system}
        \dot{\theta}_t = A\theta_t, \quad
        \dot{\eta}_t = p(\xi_t, \zeta_R(t) + \theta_t, \eta_t),
\end{equation}
with $A$ given by~\eqref{eq:controllability_form}, which has characteristic polynomial $\Lambda(s)$. Therefore $\theta_t$ has asymptotically stable dynamics. Let $\nu_t = \eta_t - \eta_{R,t}$ and $P(\xi_t, \nu_t, \theta_t, t) = p(\xi_t, \zeta_R(t) + \theta_t, \eta_{R,t} + \nu_t) - p(\xi_t, \zeta_R(t), \eta_{R,t})$. Note that the system
\begin{equation}
\label{eq:time_varying_closed_loop_system}
    \dot{\nu}_t = P(\xi_t, \nu_t, \theta_t, t),\quad
    \dot{\theta}_t = A\theta_t 
\end{equation}
is in the form~\eqref{eq:time_varying_corollary} (see Appendix B). Moreover, system~\eqref{eq:time_varying_closed_loop_system} additionally satisfies the hypotheses of Lemma~\ref{lemma:stability_time_varying} in Appendix B because of the assumption of Lipschitz continuity of $p_d$ and $p_s$ and of the existence of $V$ as in the statement. Then, by Lemma~\ref{lemma:stability_time_varying}, $(0,\eta_{R,t})$ is an almost surely uniformly stable solution of~\eqref{eq:tracking_error_system} and the claimed estimates follow. 
\end{proof}
The previous theorem solves the idealistic local asymptotic output tracking problem, \emph{i.e.} the output $y_t=z_1$ asymptotically converges to $y_R$ whilst the state $z_t$ remains bounded almost surely.

We now focus on practical output tracking. While it can be proved (see Proposition~\ref{proposition:zero_noise_stabilisation} in Appendix C) that under some technical assumptions a zero-noise control is sufficient to asymptotically stabilise the equilibrium at the origin, a controller not performing any sort of compensation for the stochastic disturbances does not guarantee asymptotic tracking. This is because if the system tracks a non-zero reference, then the states will not converge to zero and so the noise will enter the dynamics in a persistent fashion. Hence the noise might drive the states away from the desired trajectory, possibly inducing instability. On the contrary, the control law $\hat{u}_t^{track} = u_t^{zn,track} + u^s_t$, with
\begin{equation}
    u_t^{zn,track} = -\frac{c_d(\zeta_t, \eta_t) - v(\zeta_t,y^R(t))}{b(\zeta_t, \eta_t)}
\end{equation}
and $u_t^s$ given by~\eqref{eq:compensations} is able to prevent this when $\varepsilon$ tends to zero by approximately compensating for the Brownian-induced disturbances, as shown in the following result.

\begin{corollary}
\label{corollary:tracking_with_compensation}
 Consider system~\eqref{eq:normal_form_restricted} and suppose that Assumption~\ref{assumption:persistence_excitation} and the assumptions in Theorem~\ref{theorem:asymptotic_output_tracking} hold. Then the control law $\hat{u}_t^{track}$ is such that $y_t$ converges to $y_R$ in probability and the state $z_t$, $t\ge\bar t \ge 0$, of system~\eqref{eq:normal_form_restricted} is bounded in probability in the limit as $\varepsilon$ approaches zero.
\end{corollary}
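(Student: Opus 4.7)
The plan is to reduce this to the combination of Theorem~\ref{theorem:approximate_retrieves_ideal} (which handles the effect of the impulsive compensations) and Theorem~\ref{theorem:asymptotic_output_tracking} (which handles the idealistic tracking behaviour). Observe that the control $\hat{u}_t^{track}$ has exactly the same structure as $\hat{u}_t^{lin}$ in~\eqref{eq:apprximate_partial_feedback_linearising_control}: the only difference between the idealistic laws $u_t^{track}$ and $u_t^{lin}$ is the choice of the auxiliary input $v$, which for tracking is $v(\zeta_t,y_R(t)) = y_R^{(r)} - \sum_{i=1}^r d_{i-1}(\zeta_i - y_R^{(i-1)})$ rather than a free variable. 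The practical law subtracts the same zero-noise approximation $c_d(z_t)/b(z_t)$ and applies the same hybrid impulsive compensation $u_t^s$ derived from the estimates $\Delta \widehat W_\varepsilon(k)$.

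First, I would denote by $z_t^{track}$ the solution of~\eqref{eq:normal_form_restricted} under $u_t^{track}$ and by $z_t$ the solution under $\hat{u}_t^{track}$. Replaying the argument of Theorem~\ref{theorem:approximate_retrieves_ideal} with this new choice of $v$, the $r$-th component of the state after the jump at $t_k$ admits exactly the representation
\begin{equation}
z_{r,t_k^+} = z_{r,t_0} + \int_{t_0}^{t_k} v(\zeta_\tau,y_R(\tau))\,d\tau + \int_{t_0}^{t_k} c_s(z_\tau)\,d\mathcal{W}_\tau - \sum_{i=1}^k c_s(z_{t_{i-1}^+})\Delta\widehat{W}_\varepsilon(i),
\end{equation}
and Lemma~\ref{lemma:brownian_motion_approximation} yields $\sum_i c_s(z_{t_{i-1}^+})\Delta \widehat W_\varepsilon(i) \to \int c_s(z_\tau) d\mathcal{W}_\tau$ almost surely. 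Combining this with the Euler--Maruyama estimate for points $t\in(t_{k-1},t_k)$ as in the proof of Theorem~\ref{theorem:approximate_retrieves_ideal}, and using that $v(\zeta_t,y_R(t))$ is Lipschitz in $\zeta_t$, I obtain $\lim_{\varepsilon\to 0}\mathfrak{P}(\lvert z_t - z_t^{track}\rvert \ge \sigma)=0$ for every $\sigma>0$, on any finite horizon.

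Next, Theorem~\ref{theorem:asymptotic_output_tracking} applied to $z_t^{track}$ gives, under the stated Lipschitz and Lyapunov assumptions, that $y_t^{track}\to y_R$ and that the full state $z_t^{track}$ remains bounded almost surely, provided the initial mismatch lies in the $\epsilon_R$-neighbourhood. In particular, the error variables $\theta_t^{track}$ and $\nu_t^{track} = \eta_t^{track} - \eta_{R,t}$ are uniformly bounded with probability one. Chaining the two convergences: for any $\sigma>0$ and any $t$, $\mathfrak{P}(\lvert y_t - y_R(t)\rvert \ge \sigma) \le \mathfrak{P}(\lvert y_t - y_t^{track}\rvert \ge \sigma/2) + \mathfrak{P}(\lvert y_t^{track}-y_R(t)\rvert \ge \sigma/2)$, and both terms can be made arbitrarily small by first taking $t$ large (for the second term, via Theorem~\ref{theorem:asymptotic_output_tracking}) and then $\varepsilon$ small (for the first term). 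Boundedness in probability of $z_t$ follows analogously from the almost sure boundedness of $z_t^{track}$ and the convergence in probability of $z_t$ to $z_t^{track}$.

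The main obstacle I anticipate is handling the two limits $t\to\infty$ and $\varepsilon\to 0$ rigorously: the convergence in probability of $z_t$ to $z_t^{track}$ in Theorem~\ref{theorem:approximate_retrieves_ideal} is stated pointwise in $t$ and the constants implicit in the Euler--Maruyama argument may depend on the horizon. To close this gap cleanly I would exploit the almost sure boundedness of $z_t^{track}$ to restrict all trajectories of interest to a compact set on which $c_s$, $c_d$, $b$ and $p$ are Lipschitz, so that the comparison estimate becomes uniform on $[\bar t,\infty)$, and then invoke the exchange of limits through the triangle inequality above. The rest is a routine application of the continuity of solutions with respect to the driving noise on this compact set.
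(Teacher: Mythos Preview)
Your approach is essentially the same as the paper's: invoke Theorem~\ref{theorem:asymptotic_output_tracking} for the idealistic trajectory $z_t^{track}$ and Theorem~\ref{theorem:approximate_retrieves_ideal} for the convergence in probability of $z_t$ to $z_t^{track}$, then combine. The paper's own proof is in fact terser than yours---it simply states those two invocations and writes ``the claim follows''---so the additional care you take with the triangle-inequality chaining and the discussion of the $t\to\infty$ versus $\varepsilon\to 0$ limit exchange goes beyond what the paper provides, but is entirely in the same spirit.
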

\begin{proof}
By Theorem~\ref{theorem:asymptotic_output_tracking} the control $u_t^{track}$ makes the output $y_t$ of~\eqref{eq:normal_form_restricted} asymptotically converge to $y_R$ while keeping the internal states bounded. Let $z_{t}^{track}$ and $z_{t}$ be the the state of system~\eqref{eq:normal_form_restricted} when $u_t = u_t^{track}$ and $u_t = \hat{u}_t^{track}$ are applied, respectively. Then, by Theorem~\ref{theorem:approximate_retrieves_ideal}, for every $\sigma > 0$,
\begin{equation}
        \lim_{\varepsilon\rightarrow 0} \mathfrak{P}(\left| z_t - z_t^{track} \right| \ge \sigma) = 0,
    \end{equation}
and the claim follows.
\end{proof}

\section{Illustrative Example}
\label{sec:example}
In this section we illustrate the validity of the theory by means of a numerical example. Consider the following nonlinear stochastic system in the form~\eqref{eq:system_derivatives_restricted} with
\begin{equation}
    f(x_t) = \begin{bmatrix} s_2(1+x_1)\\
     -2\tan x_2\\
     f_3(x_t)\end{bmatrix}, \; g(x_t) = \begin{bmatrix}
     e^{x_3}\\ 0 \\ e^{x_3}
     \end{bmatrix}, \; l(x_t) = \begin{bmatrix}
        x_1\vspace{1mm}\\ 
    \frac{-2x_1}{c_2}\vspace{1mm}\\
    x_1^2
     \end{bmatrix},
\end{equation}
with $f_3(x_t) = 2x_3 +x_1s_2 - \frac{2s_2x_1^2}{c_2^2}$ and $s_i$ and $c_i$ denoting $\sin{x_i}$ and $\cos{x_i}$ respectively. Let the output of the system be $y_t = x_1 + s_2 - x_3$. We are interested in analysing the system around the origin, \emph{i.e.} we set $\bar x = 0$. The goal is to bring the system to its normal form and to perform asymptotic output tracking. The first step is determining the stochastic relative degree of the system at zero. We set $z_1 = y_t =x_1 + s_2 - x_3$ and we compute its derivative applying It\^o's formula (we omit the procedure for brevity), thus obtaining $dz_1 = (-2x_3 - s_2)dt$. As neither the input $u$ nor the noise appear in this expression, the stochastic relative degree, if defined, is higher than one at the origin. We set $z_2 = -2x_3 - s_2$ and, by computing its derivative, we obtain
\begin{equation}
    dz_2 = \left(2s_2 - 4x_3 - 2x_1s_2 + \frac{6x_1^2 s_2}{c_2^2}  -2e^{x_3}u\right)dt + 4x_1d\mathcal{W}_t.
\end{equation}
The system has therefore stochastic relative degree $r = 2$ at the origin. Moreover,
\begin{align}
    \tilde{c}_d(x_t) &= c_d(\Phi(x_t)) = 2s_2 - 4x_3 - 2x_1s_2 + \frac{6x_1^2 s_2}{c_2^2},\\
    \tilde{c}_s(x_t) &= c_s(\Phi(x_t)) = -2e^{x_3},\\
    \tilde{b}(x_t) &= b(\Phi(x_t)) = 4x_1.\\
\end{align}
Setting $z_3 = x_1 - x_3$ makes the coordinate change $z = \Phi(x)$ a diffeomorphism in a neighbourhood of the origin, with
\begin{equation}
    \Phi(x) = \begin{bmatrix}x_1 + s_2 - x_3 \\ -s_2-2x_3 \\ x_1-x_3\end{bmatrix}.
\end{equation}
In this new set of coordinates the dynamics of the system is given by
\begin{align}
\label{eq:example_normal_form}
    \dot{z}_1 & =z_2,\\
    \dot{z}_2 & = c_d(z_t) + c_s(z_t)\xi_t + b(z_t)u,\\
    \dot{z}_3 &= s_2 - 2x_3 + \frac{2x_1^2s_2}{c_2^2} + 2x_1\xi_t = \tilde{p}(\xi_t, \Phi(x_t)),
\end{align}
which is in the stochastic normal form. The zero dynamics of the system is obtained by equating $z_1 = 0$ and $z_2 = 0$, $z_3 = \eta_t$, which yields $x_1 = (3/2)\eta_t$, $s_2 = -\eta_t$. Replacing these in the third equation in~\eqref{eq:example_normal_form} we get the zero dynamics as follows
\begin{equation}
    \dot{\eta}_t = p(\xi_t, 0, \eta_t) = -2\eta_t + \frac{9\eta_t^3}{2(\eta_t^2 - 1)} + 3\eta_t\xi_t.
\end{equation}
Its first approximation around the origin is $\dot{\eta}_t = -2\eta_t + 3\eta_t\xi_t = A\eta_t + F\eta_t\xi_t$, which is asymptotically stable almost surely because $A - F^2/2 < 0$. Therefore the zero dynamics of the system is almost surely asymptotically stable. We now choose a reference signal of the form $y_R(t) = \beta + \alpha \cos(\omega t)$ and illustrate that $u_t^{track}$ achieves asymptotic output tracking. To this end, we first performed a simulation in the idealistic scenario in which the noise is used in the feedback law. We selected the coefficients $d_0 = 12$ and $d_1 = 7$ in the input $v$ so the characteristic polynomial $\Lambda(s)$ in \eqref{eq:characteristic polynomyal} has roots in $-3$ and $-4$, thus guaranteeing asymptotic stability of the linearised sub-system in the coordinates $\zeta_t$. The reference signal $y_R$ is characterised by $\beta = 0.1$, $\alpha = 0.01$ and $\omega = 5$. The nonlinear stochastic differential equations were integrated using the Euler-Maruyama scheme with period $\Delta t = 10^{-6}$. In Figure~\ref{fig:ideal_control} we show the time history of the state in the coordinates $z_t$ when $u_t^{track}$ is applied. Observe that the first two components (blue and red lines) display, as expected, linear and deterministic behaviours, while the internal variable $z_3$ has noisy dynamics. Because of the properties of the zero dynamics, $z_3$ stays bounded under $u_t^{track}$. Moreover, the component $z_1$ (blue line), which by the definition of normal form is the output $y_t$ of the system, asymptotically converges to the reference signal $y_R$ (purple/dashed line).

\begin{figure}
      \centering
      \includegraphics[width=\linewidth,height=6cm]{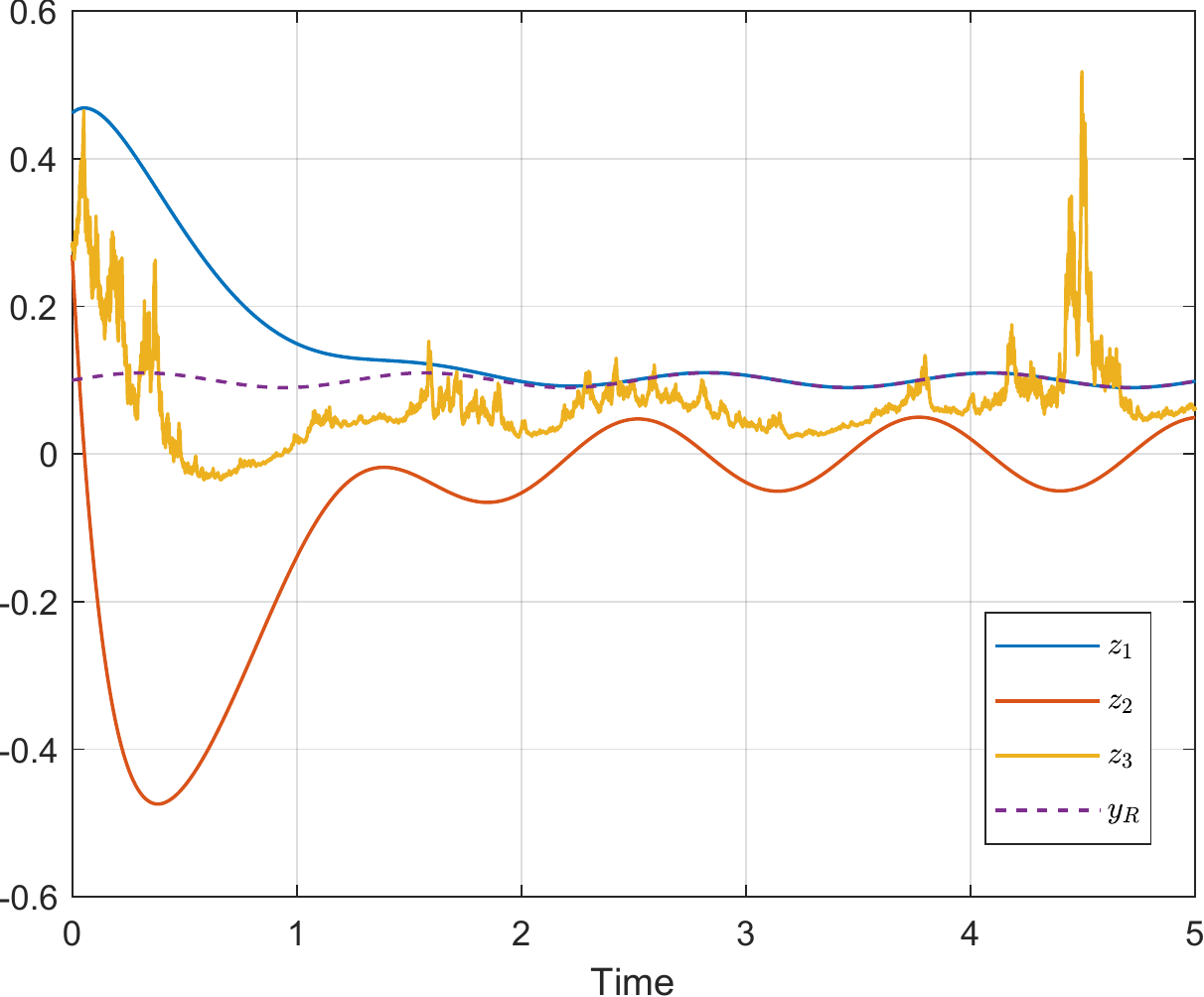}
    \caption{Time history of the state $z_t$ under the control $u_t^{track}$, achieving idealistic output tracking.}
    \label{fig:ideal_control}
\end{figure}

We now turn our attention to practically realisable controllers. To this end, we first discuss the numerical implementation of the simulations in the cases where stochastic compensations are performed with a period $\varepsilon$. The continuous-time dynamics was integrated, as usual, using the Euler-Maruyama numerical scheme with fixed period $\Delta t = 10^{-6}$. The stochastic compensations were performed with a period $\varepsilon$, which must necessarily satisfy $\varepsilon > \Delta t$. We performed simulations with values of $\varepsilon$ of $10^{-3}$, $10^{-4}$ and $10^{-5}$, in order to illustrate the limit behaviour of the solutions as $\varepsilon$ decreases. Note that it could be possible to select smaller $\varepsilon$ as long as $\Delta t$ is decreased accordingly.

We illustrate in Figure~\ref{fig:z_2_tracking} that when the control law with compensations is employed and $\varepsilon$ is decreased, the trajectory of the state under the idealistic control is retrieved. In Figure~\ref{fig:z_2_tracking} we consider again the output tracking setting and display the time history of the component $z_r = z_2$ under the controls $u_t^{zn,track}$ (blue line), $\hat{u}_t^{track}$, with varying $\varepsilon$ ($10^{-3}$ (red line), $10^{-4}$ (yellow line), $10^{-5}$ (purple line)), and $u_t^{track}$ (green line). Observe that, since the control $\hat{u}_t^{track}$ improves the noise compensations as $\varepsilon$ decreases, the trajectories of $z_2$ under the compensated control tend to the idealistic trajectory.

\begin{figure}
    \centering
    \includegraphics[width=\linewidth]{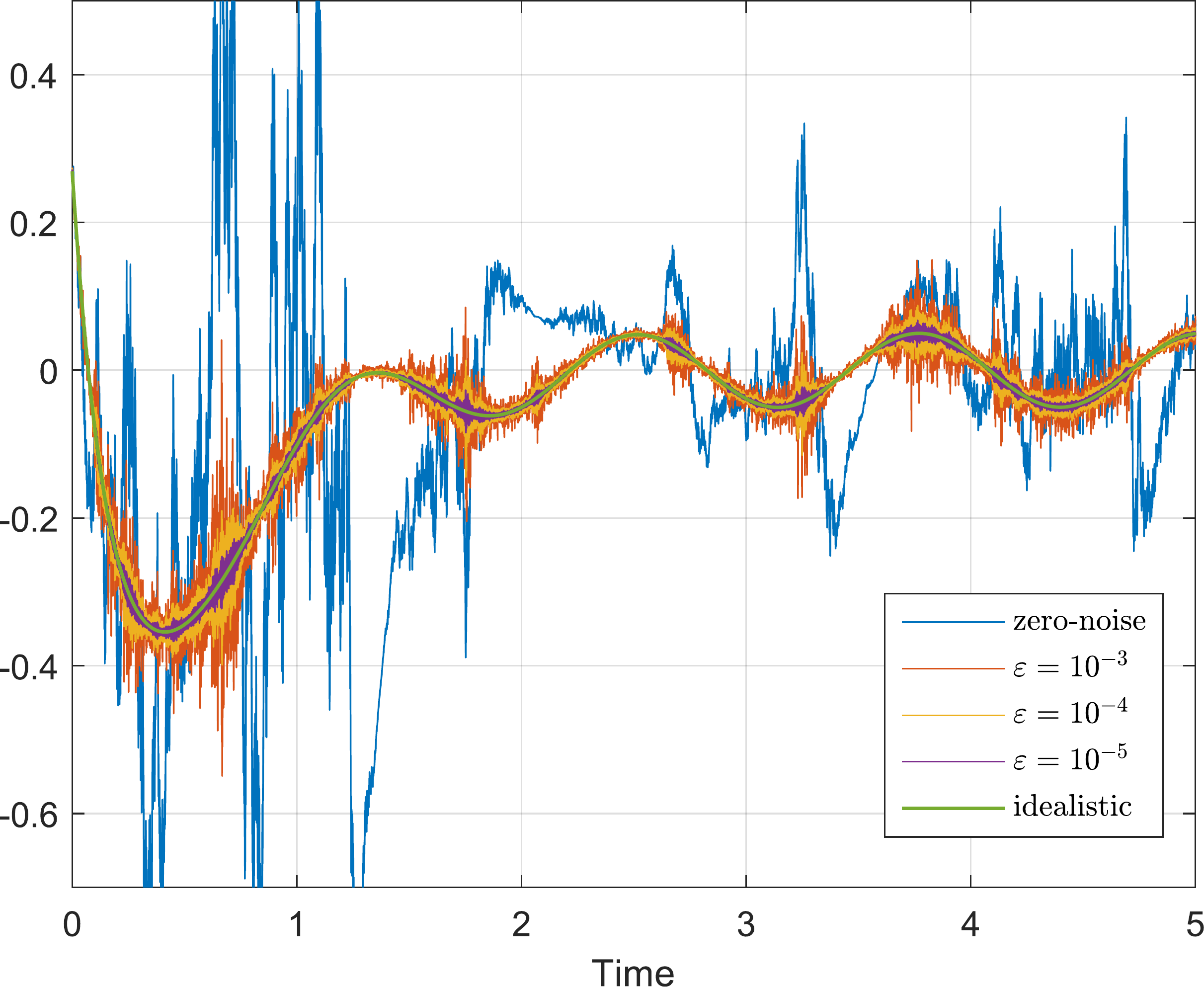}
    \caption{Time history of the component $z_2$ under the controls $u_t^{zn,track}$ (blue line), $\hat{u}_t^{track}$ with $\varepsilon = 10^{-3}$ (red line), $\varepsilon = 10^{-4}$ (yellow line), $\varepsilon = 10^{-5}$ (purple line), and $u_t^{track}$.}
    \label{fig:z_2_tracking}
\end{figure}

 To conclude, Figure~\ref{fig:comparison_tracking} shows a comparison between the time histories of the tracking error $z_1 - y_R = y_t - y_R$ when the controls applied are, respectively, $u_t^{zn,track}$ (blue line) or $\hat{u}_t^{track}$ with decreasing values of $\varepsilon$, namely $10^{-3}$ (red line), $10^{-4}$ (yellow line), $10^{-5}$ (purple line). While the hybrid controller produces, for any of the values of $\varepsilon$, strikingly better asymptotic performances than the zero-noise control, observe that the tracking error is made smaller and smaller as $\varepsilon$ is decreased. This is in line with the fact that the component $z_1 = y_t$ under $\hat{u}_t^{track}$ tends, as $\varepsilon$ approaches zero, to $z_1$ under the control $u_t^{track}$, which in turn asymptotically approaches $y_R$. 
 
 \begin{figure}
    \centering
    \includegraphics[width=\linewidth]{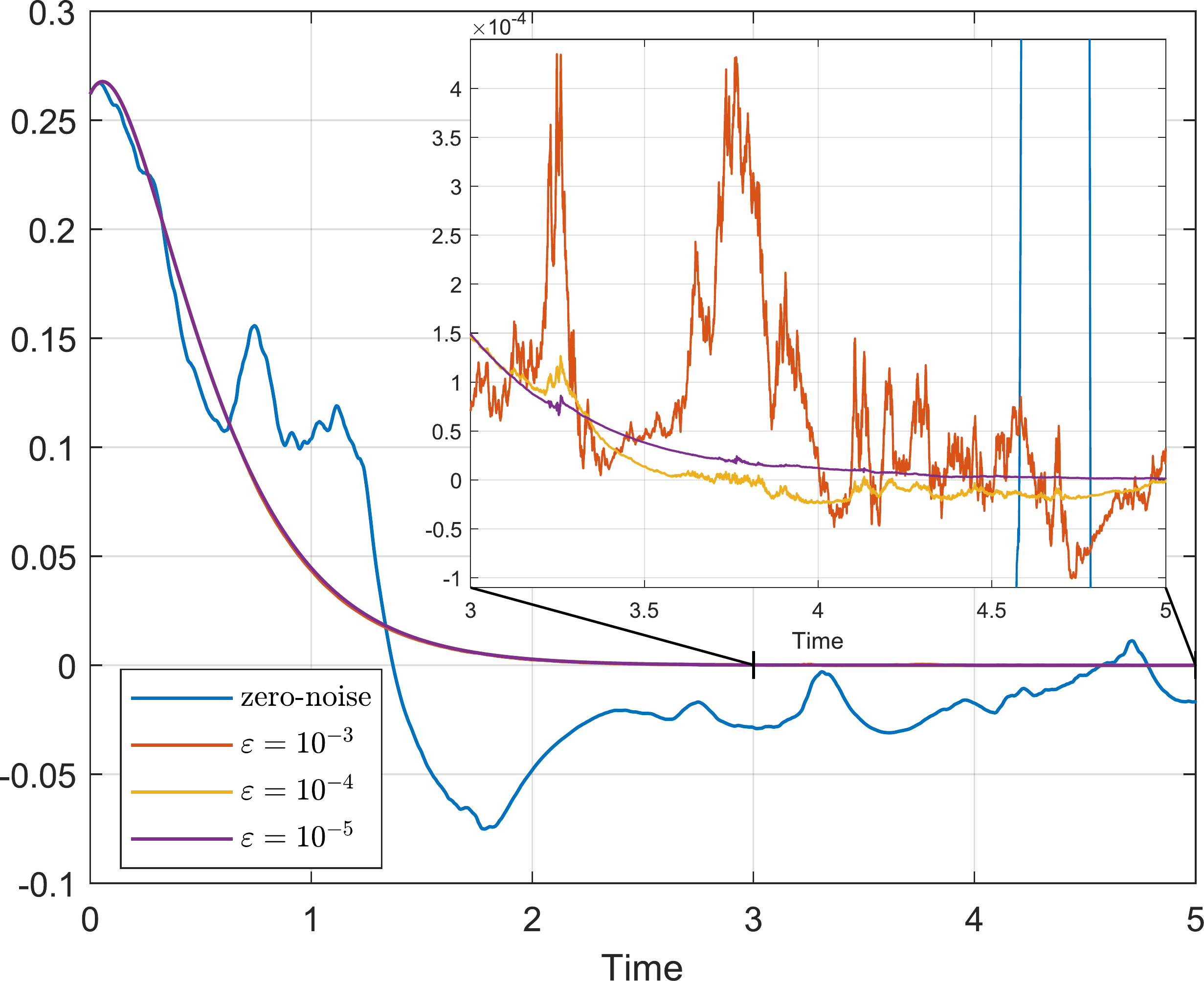}
    \caption{Time history of the tracking error $y_t - y_R$ under the controls $u_t^{zn,track}$ (blue line) and $\hat{u}_t^{track}$ with $\varepsilon = 10^{-3}$ (red line), $\varepsilon = 10^{-4}$ (yellow line) and $\varepsilon = 10^{-5}$ (purple line). }
    \label{fig:comparison_tracking}
\end{figure}

\section{Conclusions}
\label{sec:conclusions}
In this paper we have addressed the path-wise control of nonlinear stochastic systems. First we have introduced a notion of stochastic relative degree and normal form. Then, leveraging these, we have presented a feedback linearising controller. We have observed that this controller is not causal, hence referred to as idealistic, because it requires a feedback of the noise. To overcome this limitation, we have introduced a hybrid control architecture that estimates the Brownian motion from measurements of the state and uses these estimates to periodically compensate, in approximate way, for the noise. We have proved that the performances of the idealistic control law are retrieved when compensations at a frequency tending to infinity are performed. Finally, we have solved the problem of asymptotic output tracking, both in the idealistic and practical framework, and we have provided an illustrative example. 

\section*{Appendix}
\subsection{Technical lemmas and proof of Proposition~\ref{proposition:stochastic_relative_degree}}
\begin{lemma}
\label{lemma:operators_equal_to_zero}
    Let $x\in U\subset \mathbb{R}^n$ and $k\in \{0,...,r-2\}$. Then $\A{l}{m}{g}{}\SL{f}{l}{k}h(\xi_t,x) = 0$ for all $\xi_t \in \mathbb{R}$ if and only if $\Lie_g\SL{f}{l}{k}h(x) +  \dL{l}{m}{}\SL{f}{l}{k}h(x)= 0$ and $\Lie_m \SL{f}{l}{k}h(x) = 0$.
\end{lemma}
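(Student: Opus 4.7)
The plan is simply to unfold the definition of $\A{l}{m}{g}{}$ applied to $\SL{f}{l}{k}h$ and then exploit the fact that the resulting expression is affine in $\xi_t$.

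First I would note that, since the lemma concerns $k\in\{0,\dots,r-2\}$, condition (ND) of Definition~\ref{definition:stochastic_relative_degree} guarantees that $\SL{f}{l}{k}h(x)$ is a deterministic function of $x$ alone (this is exactly the setting in which the reiterated stochastic Lie derivative was defined in the text immediately after~\eqref{eq:first_order_control_term}). Therefore the operator $\A{l}{m}{g}{}$ can be applied to $\SL{f}{l}{k}h$ directly using its definition, yielding
\begin{equation}
\A{l}{m}{g}{}\SL{f}{l}{k}h(\xi_t,x) = \Lie_g\SL{f}{l}{k}h(x) + \Lie_m\SL{f}{l}{k}h(x)\,\xi_t + \dL{l}{m}{}\SL{f}{l}{k}h(x).
\end{equation}

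The second step is to observe that the right-hand side is an affine function of the scalar variable $\xi_t\in\mathbb{R}$, with constant term $\Lie_g\SL{f}{l}{k}h(x) + \dL{l}{m}{}\SL{f}{l}{k}h(x)$ and slope $\Lie_m\SL{f}{l}{k}h(x)$. Since the functions $1$ and $\xi_t$ are linearly independent on $\mathbb{R}$, the affine expression vanishes for every $\xi_t\in\mathbb{R}$ if and only if both coefficients vanish, that is,
\begin{equation}
\Lie_g\SL{f}{l}{k}h(x)+\dL{l}{m}{}\SL{f}{l}{k}h(x)=0 \quad \text{and} \quad \Lie_m\SL{f}{l}{k}h(x)=0,
\end{equation}
which is precisely the claim of the lemma.

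There is no real obstacle here: the only thing one must be careful about is the legitimacy of writing $\A{l}{m}{g}{}\SL{f}{l}{k}h$ in the first place, which is justified by condition (ND) in the relevant range of $k$. The rest is the trivial identification of coefficients of an affine polynomial in $\xi_t$.
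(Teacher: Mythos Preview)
Your proof is correct and follows essentially the same approach as the paper: unfold the definition of $\A{l}{m}{g}{}\SL{f}{l}{k}h$ as an affine expression in $\xi_t$ and identify coefficients. The only cosmetic difference is that the paper phrases necessity via the randomness of $\xi_t$ (``if $\Lie_m h(x)\ne 0$ then $\A{l}{m}{g}{}h(\xi_t,x)\ne 0$ almost surely''), whereas you argue directly from the linear independence of $1$ and $\xi_t$ on $\mathbb{R}$; your formulation is arguably cleaner and your explicit invocation of condition (ND) to justify that $\SL{f}{l}{k}h$ is deterministic is a nice touch the paper leaves implicit.
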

\begin{proof}
    Without loss of generality assume $k = 0$. Then the expression of $\A{l}{m}{g}{}\SL{f}{l}{k}h(\xi_t,x)$ for $k=0$ is given in~\eqref{eq:first_order_control_term}. The sufficiency is trivial. As for the necessity, observe that if $\Lie_m h(x) \ne 0$ then the randomness induced by the white noise implies that $\A{l}{m}{g}{}h(\xi_t,x) \ne 0$ almost surely. Therefore $\Lie_m h(x) = 0$ is a necessary condition for $\A{l}{m}{g}{}h(\xi_t,x)$ to be zero for all $\xi_t \in \mathbb{R}$. As a consequence, also $\Lie_g h(x) +  \dL{l}{m}{}h(x)= 0$ is a necessary condition for $\A{l}{m}{g}{}h(\xi_t,x)$ to be zero for all $\xi_t\in\mathbb{R}$.
\end{proof}

\begin{lemma}
\label{lemma:operators_different_than_zero}
    Let $\bar x \in \mathbb{R}^n$. Then $\A{l}{m}{g}{}\SL{f}{l}{r-1}h(\xi_t,\bar x) \ne 0$ almost surely if and only if $\Lie_g\SL{f}{l}{r-1}h(\xi_t,\bar x) +  \dL{l}{m}{}\SL{f}{l}{r-1}h(\xi_t,\bar x)\ne 0$ or $\Lie_m \SL{f}{l}{r-1}h(\xi_t,\bar x) \ne 0$.
\end{lemma}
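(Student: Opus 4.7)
The plan is to expand $\A{l}{m}{g}{}\SL{f}{l}{r-1}h(\xi_t,\bar x)$ using the definition~\eqref{eq:first_order_control_term} of the operator $\mathcal{A}$ applied to $\SL{f}{l}{r-1}h$ and exploit the fact that the resulting expression is affine in $\xi_t$. Concretely, let
\begin{equation}
\alpha = \Lie_g\SL{f}{l}{r-1}h(\bar x) + \dL{l}{m}{}\SL{f}{l}{r-1}h(\bar x), \qquad \beta = \Lie_m\SL{f}{l}{r-1}h(\bar x),
\end{equation}
so that $\A{l}{m}{g}{}\SL{f}{l}{r-1}h(\xi_t,\bar x) = \alpha + \beta\xi_t$. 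The goal is then to show that $\alpha + \beta\xi_t \ne 0$ almost surely if and only if $(\alpha,\beta)\ne(0,0)$.

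For necessity I would argue by contraposition: if $\alpha = 0$ and $\beta = 0$, then the expression is identically zero, which contradicts the hypothesis of being nonzero almost surely.

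For sufficiency I would distinguish two cases. If $\beta = 0$ and $\alpha\ne 0$, the expression equals the nonzero constant $\alpha$ with probability one. If $\beta\ne 0$, then $\alpha + \beta\xi_t$ vanishes only on the event $\{\xi_t = -\alpha/\beta\}$; since $\xi_t$ is (generalised) white noise, the event that it takes a prescribed real value at a given instant has probability zero, whence $\alpha + \beta\xi_t \ne 0$ almost surely.

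The only delicate point is the last sentence, because $\xi_t$ is rigorously a generalised stochastic process rather than a pointwise random variable. I would handle this by invoking the interpretation adopted in Section~\ref{sec:preliminaries}: the statement $\alpha + \beta\xi_t \ne 0$ almost surely is understood as the underlying generalised process not being identically zero with probability one, which in the affine case reduces to asking that not both coefficients vanish. Alternatively, the same conclusion can be drawn by appealing, as in Lemma~\ref{lemma:operators_equal_to_zero}, to the fact that the randomness carried by $\xi_t$ forces the constant and the $\xi_t$ parts of an affine expression to be independent, so the sum is zero almost surely only when both summands are zero.
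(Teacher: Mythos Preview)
Your proposal is correct and follows essentially the same approach as the paper: both expand $\A{l}{m}{g}{}\SL{f}{l}{r-1}h(\xi_t,\bar x)$ as the affine expression $\alpha+\beta\xi_t$ and argue that it is nonzero almost surely precisely when $(\alpha,\beta)\ne(0,0)$, invoking the randomness of $\xi_t$ for the sufficiency direction. Your version is slightly more explicit in splitting the cases $\beta=0$ and $\beta\ne 0$ and in flagging the distributional nature of $\xi_t$, but the underlying argument is the same.
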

\begin{proof}
    Observe that $\A{l}{m}{g}{}\SL{f}{l}{r-1}h(\xi_t,\bar x) =  \Lie_g \SL{f}{l}{r-1}h(\bar x)  + \Lie_m \SL{f}{l}{r-1}h(\bar x)\xi_t + \dL{l}{m}{} \SL{f}{l}{r-1}h(\bar x)$. Then the necessity is trivial. As for the sufficiency, observe that if  $\Lie_g \SL{f}{l}{r-1}h(\bar x) +\dL{l}{m}{} \SL{f}{l}{r-1}h(\bar x) \ne 0$ or  $\Lie_m \SL{f}{l}{r-1}h(\bar x) \ne 0$ then the randomness induced by the white noise implies that $\Lie_g \SL{f}{l}{r-1}h(\bar x) +\dL{l}{m}{} \SL{f}{l}{r-1}h(\bar x) \ne  -\Lie_m \SL{f}{l}{r-1}h(\bar x)\xi_t$ almost surely, hence the claim follows.
\end{proof}

\noindent\textbf{Proof of Proposition \ref{proposition:stochastic_relative_degree}.}
We prove the first part of the proposition by induction. Equation~\eqref{eq:kth_derivative} trivially holds for $k=0$. Now, suppose that it holds for any $k\in\{1,\dots,r-2\}$. Then
\begin{multline}
\label{eq:k_1_derivative}
    y_t^{(k+1)} = \SL{f}{l}{k+1}h(\xi_t,x_t) + \A{l}{m}{g}{}\SL{f}{l}{k}h(\xi_t,x_t)u +\\ \frac{1}{2}\dL{m}{m}{}\SL{f}{l}{k}h(\xi_t,x_t)u^2.
\end{multline}
The first term on the right-hand side expands to
\begin{multline}
    \SL{f}{l}{k+1}h(\xi_t,x_t)=\Lie_f \SL{f}{l}{k}h(x_t) + \Lie_l \SL{f}{l}{k}h(x_t)\xi_t +\\ \frac{1}{2}\dL{l}{l}{}\SL{f}{l}{k}h(x_t),
\end{multline}
which, by the first equality in condition (ND), reduces to
\begin{equation}
    \SL{f}{l}{k+1}h(x_t)=\Lie_f \SL{f}{l}{k}h(x_t) + \frac{1}{2}\dL{l}{l}{}\SL{f}{l}{k}h(x_t),
\end{equation}
\emph{i.e.} $\SL{f}{l}{k+1}h(x_t)$ is independent of $\xi_t$ for all $k\in \{0,\dots,r-2\}$. Going back to~\eqref{eq:k_1_derivative}, by Lemma~\ref{lemma:operators_equal_to_zero}, the first two equalities in condition (CD) are equivalent to $\A{l}{m}{g}{}\SL{f}{l}{k}h(\xi_t,x_t) = 0$. Moreover, by the last equality in condition (CD), $\dL{m}{m}{}\SL{f}{l}{k}h(\xi_t,x_t) = 0$. In conclusion, $y_t^{(k+1)} = \SL{f}{l}{k+1}h(x_t)$ for all $k\in \{0,\dots,r-2\}$, which proves that conditions (ND) and (RD) imply~\eqref{eq:kth_derivative}.

To prove~\eqref{eq:r-th_derivative}, observe that by Lemma~\ref{lemma:operators_different_than_zero}, the first two inequalities in condition (RD) are equivalent to $\A{l}{m}{g}{}\SL{f}{l}{r-1}h(\xi_t,\bar x) \ne 0$. Therefore condition (RD) implies that either  $\A{l}{m}{g}{}\SL{f}{l}{r-1}h(\xi_t,\bar x) \ne 0$ or $\dL{m}{m}{}\SL{f}{l}{r-1}h(\bar x)\ne 0$, thus making the control $u(\bar t)$ appear in the expression of the $r$-th derivative of $y_t$. \hfill $\square$

\subsection{Almost sure stability of perturbed stochastic systems}
Consider the stochastic time-varying system
\begin{equation}
\label{eq:time_varying_not_driven}
    \dot{x}_t = f_d(x_t,t) + f_s(x_t,t)\xi_t = \tilde{f}(\xi_t,x_t,t),
\end{equation}
for which we introduce the following Lipschitz assumption \cite{Khalil2002}.
\begin{assumption}
\label{assumption:lipschitz}
$f_d$ and $f_s$ are locally Lipschitz continuous for all $t\ge 0$.

\end{assumption}
We now introduce a concept of almost sure total stability for nonlinear time-varying stochastic system. Consider the following extension of the definition of strict Lyapunov function for stationary stochastic systems given in \cite{Bardi2005} to the case of time-varying stochastic systems.
\begin{definition}
\label{def:strict_Lyapunov_function}
 Consider the autonomous system~\eqref{eq:time_varying_not_driven}. A smooth function $V: U\times \mathbb{R}_{\ge 0} \rightarrow \mathbb{R}$, where $U$ is a bounded domain, is said to be a \emph{strict Lyapunov function} for system~\eqref{eq:time_varying_not_driven} if
\begin{itemize}
    \item $W_1(x) \le V(x,t) \le W_2(x)$ for all $x\in U$, $t\ge 0$, $V(0,\cdot) = 0$ and $W_1$ and $W_2$ are continuous positive definite functions on $U$;
    \item there exists a positive definite function $W_3(x)$ such that $\frac{\partial V}{\partial t} (x,t)+ \Lie_{f_d}V(x,t) + \frac{1}{2}\dL{f_s}{f_s}{}V(x,t) < -W_3(x)$ and $\Lie_{f_s}V(x,t) = 0$ for all $x\in U$, $t\ge 0$.
\end{itemize}
\end{definition}

It is possible to show that the existence of a strict Lyapunov function for system~\eqref{eq:system_derivatives} is sufficient to conclude the almost sure (uniform) asymptotic stability of its equilibrium point (as defined, \emph{e.g.}, in \cite{Kozin1963},\cite{Khasminskii1967}). Namely, the following result is an extension of~\cite[Theorem 2.5]{Bardi2005}. 

\begin{lemma}
\label{lemma:uniform_stability}
Under Assumption~\ref{assumption:lipschitz}, if there exists a strict Lyapunov function for system~\eqref{eq:time_varying_not_driven}, then the equilibrium point at the origin of system~\eqref{eq:time_varying_not_driven} is almost surely uniformly asymptotically stable.
\end{lemma}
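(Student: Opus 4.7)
The plan is to exploit the key observation built into the definition of strict Lyapunov function: the identity $\Lie_{f_s}V(x,t)=0$ forces the diffusion contribution in It\^o's expansion of $V(x_t,t)$ to vanish, reducing the problem essentially to a deterministic time-varying Lyapunov argument carried out path-wise almost surely. More concretely, I would first apply It\^o's formula to $V(x_t,t)$ along solutions of~\eqref{eq:time_varying_not_driven}, which gives
\begin{equation}
dV(x_t,t)=\Bigl[\tfrac{\partial V}{\partial t}(x_t,t)+\Lie_{f_d}V(x_t,t)+\tfrac{1}{2}\dL{f_s}{f_s}{}V(x_t,t)\Bigr]dt+\Lie_{f_s}V(x_t,t)\,d\W_t .
\end{equation}
The third condition in Definition~\ref{def:strict_Lyapunov_function} annihilates the stochastic integrand, while the remaining bracket is strictly bounded above by $-W_3(x_t)$. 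Hence the stochastic differential collapses to the path-wise inequality
\begin{equation}
V(x_t,t)-V(x_{t_0},t_0)\le -\int_{t_0}^t W_3(x_\tau)\,d\tau ,\quad \text{almost surely,}
\end{equation}
for every $t\ge t_0\ge 0$, on the interval of existence.

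With this almost sure monotonicity in hand, the argument reduces to a classical time-varying Lyapunov proof applied sample-path-wise. First I would use $W_1(x)\le V(x,t)\le W_2(x)$ together with the fact that $W_1,W_2$ are continuous and positive definite on the bounded domain $U$ to construct, in the standard way, class-$\mathcal{K}$ functions sandwiching $V$ near the origin; Assumption~\ref{assumption:lipschitz} guarantees that trajectories exist and depend measurably on $w\in\nabla$. Uniform stability then follows because for any $\varepsilon>0$ one can choose $\delta>0$ such that $W_2(x)<\min_{\Vert y\Vert=\varepsilon}W_1(y)$ whenever $\Vert x\Vert<\delta$; the monotone descent of $V$ along sample paths keeps the state inside the $\varepsilon$-ball with probability one, uniformly in $t_0$. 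Asymptotic attractivity is obtained from the strict-descent bound: since $W_3$ is positive definite and $\int_{t_0}^\infty W_3(x_\tau)d\tau<\infty$ almost surely, a standard contradiction argument (Barb\u{a}lat-type, or the argument in \cite[Thm.~2.5]{Bardi2005} combined with the classical treatment in \cite{Khalil2002} for time-varying systems) forces $x_t\to 0$ almost surely as $t\to\infty$, with convergence uniform in the initial time $t_0$.

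The main obstacle I anticipate is the time-varying nature of $V$, which is not covered by \cite[Thm.~2.5]{Bardi2005} in its original form. The issue is that the upper bound $W_2(x)$ must be independent of $t$ (which is already imposed in Definition~\ref{def:strict_Lyapunov_function}); were this not the case the uniformity in $t_0$ would fail. Given the assumption that $W_1,W_2,W_3$ are time-independent, the classical extension of the Lyapunov theorem (see \cite{Khalil2002}) transfers verbatim to the present setting because the random evolution of $V(x_t,t)$ is, thanks to $\Lie_{f_s}V\equiv 0$, pathwise non-increasing and dominated by a deterministic-looking inequality. The secondary technical point is to argue almost sure (rather than in-probability) existence of the trajectories on $[t_0,\infty)$ under Assumption~\ref{assumption:lipschitz}, but this is standard: the Lyapunov bound itself prevents finite-time escape from $U$ on the event of interest, so solutions are globally defined almost surely and the conclusion follows.
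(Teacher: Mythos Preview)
Your proposal is correct and follows essentially the same approach as the paper: both amount to computing the (time-varying) infinitesimal generator of $V$ via It\^o's formula, using $\Lie_{f_s}V\equiv 0$ to kill the martingale term, and then running the Lyapunov argument of \cite{Bardi2005}. The paper's proof is a single sentence stating that one replaces the time-invariant generator in the proof of \cite[Theorem~2.6]{Bardi2005} by its time-varying counterpart; you have simply unpacked what that entails.
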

The proof follows by replacing in the proof of \cite[Theorem 2.6]{Bardi2005} the time-invariant infinitesimal generator of $V$ with its time-varying version.

\begin{remark}
By dropping the condition that $\Lie_{f_s}V(x,t) = 0$ in Definition~\ref{def:strict_Lyapunov_function}, the results can be reformulated in the context of the weaker stability in probability. See \cite[Theorem 11.2.8]{Arnold1974} for an equivalent of Lemma~\ref{lemma:uniform_stability} in this context.
\end{remark}
Consider now a persistently perturbed version of system~\eqref{eq:time_varying_not_driven}, namely
\begin{equation}
\label{eq:time_varying_driven_differentials}
    dx_t = (f_d(x_t,t) + p_d(x_t,t))dt + (f_s(x_t,t) + p_s(x_t,t))d\mathcal{W}_t
\end{equation}
where $p_d$ and $p_s$ are Lipschitz. In the following we formalise the idea that if $p_d$ and $p_s$ are small enough perturbations, then the stability properties of~\eqref{eq:time_varying_not_driven} are analogous to those of~\eqref{eq:time_varying_driven_differentials}. In other words, we provide an extension to stochastic systems of the concept of total stability presented, \emph{e.g.}, in~\cite[Definition 56.1]{Hahn1967}. To this end, let $\tilde{p}(\xi_t,x_t,t) = p_d(x_t,t) + p_s(x_t,t)\xi_t$.

\begin{definition}
\label{definition:total_stability}
Consider the stochastic systems~\eqref{eq:time_varying_not_driven} and the perturbed system
\begin{equation}
\label{eq:time_varying_driven}
\begin{aligned}
    \dot{x}_t  &= \tilde{f}(\xi_t, x_t, t) + \tilde{p}(\xi_t,x_t,t)\\
               &= f_d(x_t,t) + p_d(x_t,t)\! + (f_s(x_t,t) + p_s(x_t,t))\xi_t.
\end{aligned}
\end{equation}
Suppose that there exists a solution $x_{p,t}(\bar t, \bar x)$ of~\eqref{eq:time_varying_driven}. The equilibrium $x=0$ of~\eqref{eq:time_varying_not_driven} is said to be \emph{totally stable almost surely} if for each $\epsilon > 0$ there exist positive $\delta_1(\epsilon)$, $\delta_2(\epsilon)$ and $\delta_3(\epsilon)$ such that $\Vert x_{p,t}(\bar t, \bar x)\Vert < \epsilon$ for all $t\ge \bar t$ almost surely provided that $\Vert \bar x \Vert < \delta_1$, $\Vert p_d(x,t) \Vert < \delta_2$ and $\Vert p_s(x,t) \Vert < \delta_3$ for all $(x,t)$ satisfying $t \ge \bar t$ and $\Vert x \Vert \le \epsilon$.
\end{definition}

The following lemma provides some properties that a strict Lyapunov function must satisfy in order for system~\eqref{eq:time_varying_not_driven} to have an almost surely totally stable equilibrium at zero.

\begin{lemma}
\label{lemma:total_stability}
Consider systems~\eqref{eq:time_varying_not_driven} and \eqref{eq:time_varying_driven}. Suppose that Assumption~\ref{assumption:lipschitz} holds. If there exists a strict Lyapunov function $V$ for system~\eqref{eq:time_varying_not_driven} such that
\begin{itemize}
    \item $\frac{\partial V}{\partial x_i}(x,t)$ and $\frac{\partial^2 V}{\partial x_i \partial x_j}(x,t)$ are bounded for all $x\in U$ and $t\ge 0$,
    \item $\Lie_{p_s}V(x,t) = 0$ for all $x\in U$ and $t\ge 0$,
\end{itemize}
then system~\eqref{eq:time_varying_not_driven} has an almost surely totally stable equilibrium at $x=0$.
\end{lemma}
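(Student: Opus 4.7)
The plan is to reuse the strict Lyapunov function $V$ given for~\eqref{eq:time_varying_not_driven} as a Lyapunov function for the perturbed system~\eqref{eq:time_varying_driven}. The pivotal observation is that the diffusion of~\eqref{eq:time_varying_driven} is $f_s + p_s$, so the two hypotheses $\Lie_{f_s}V = 0$ (built into Definition~\ref{def:strict_Lyapunov_function}) and $\Lie_{p_s}V = 0$ combine to give $\Lie_{f_s+p_s}V = 0$. Applying It\^o's formula to $V(x_{p,t}(\bar t,\bar x),t)$ along trajectories of~\eqref{eq:time_varying_driven} therefore yields
\begin{equation}
dV = \left(\frac{\partial V}{\partial t} + \Lie_{f_d+p_d}V + \frac{1}{2}\dL{f_s+p_s}{f_s+p_s}{}V\right) dt,
\end{equation}
with no martingale part, so $V$ evolves \emph{pathwise} as an ordinary differential equation along every sample path.

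Next I would expand the drift by bilinearity of $\dLtemp$ and isolate the perturbation. The result reads
\begin{equation}
\frac{\partial V}{\partial t} + \Lie_{f_d}V + \frac{1}{2}\dL{f_s}{f_s}{}V + R(x,t) \le -W_3(x) + R(x,t),
\end{equation}
where $R(x,t) = \Lie_{p_d}V + \dL{f_s}{p_s}{}V + \frac{1}{2}\dL{p_s}{p_s}{}V$ and the inequality comes from the strict Lyapunov property. The boundedness of $\partial V/\partial x_i$ and $\partial^2 V/(\partial x_i \partial x_j)$ on $U$, together with the local boundedness of $f_s$ (a consequence of Assumption~\ref{assumption:lipschitz} on a compact sub-domain of $U$), produces a constant $C>0$ such that $|R(x,t)| \le C(\|p_d\| + \|p_s\| + \|p_s\|^2)$ uniformly on the relevant domain.

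Given $\epsilon > 0$ with $\{\|x\| \le \epsilon\} \subset U$, I would invoke the continuous positive-definite functions $W_1, W_2, W_3$ to carry out the standard nested-sublevel-set construction: pick $r \in (0,\epsilon)$ with $\sup_{\|x\|\le r}W_2(x) < \inf_{\|x\|=\epsilon}W_1(x)$, set $\kappa = \inf_{r \le \|x\| \le \epsilon} W_3(x) > 0$, choose $\delta_2(\epsilon), \delta_3(\epsilon)$ so small that $C(\delta_2 + \delta_3 + \delta_3^2) < \kappa$, and finally choose $\delta_1(\epsilon)$ so that the ball $\{\|x\| < \delta_1\}$ lies inside the inner sublevel set. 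By the pathwise-ODE identity above, any sample path that enters the shell $\{r \le \|x\| \le \epsilon\}$ sees strict decrease of $V$ and therefore cannot reach the outer boundary $\{\|x\| = \epsilon\}$. This gives $\|x_{p,t}(\bar t, \bar x)\| < \epsilon$ for all $t \ge \bar t$ almost surely, matching Definition~\ref{definition:total_stability}.

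The main obstacle is securing the \emph{almost sure} (rather than in-probability) conclusion uniformly in time, and this rests entirely on the hypothesis $\Lie_{p_s}V = 0$: without it, $V(x_{p,t},t)$ would inherit a nontrivial martingale term and only a supermartingale-style stability-in-probability statement would be available. With noise decoupling in place, however, the argument reduces to a purely deterministic comparison along each sample path, closely paralleling the proof of Lemma~\ref{lemma:uniform_stability}.
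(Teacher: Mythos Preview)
Your proposal is correct and follows essentially the same approach as the paper: compute the It\^o derivative of $V$ along the perturbed system, observe that the noise-decoupling conditions $\Lie_{f_s}V=\Lie_{p_s}V=0$ kill the martingale term, bound the residual $R(x,t)$ via the assumed boundedness of $\partial V/\partial x_i$ and $\partial^2 V/(\partial x_i\partial x_j)$, and then run the classical total-stability sublevel-set argument. The paper's proof is terser---it simply cites \cite[Theorem 56.3]{Hahn1967} for that last step---whereas you spell out the nested-sublevel construction explicitly, but the substance is the same.
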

\begin{proof}
The proof follows from \cite[Theorem 56.3]{Hahn1967} by noticing that the time derivatives of $V$ for~\eqref{eq:time_varying_not_driven} and \eqref{eq:time_varying_driven} differ by the term
\begin{equation}
  \frac{\partial V}{\partial x}p_d +  (f_s + p_s)^\top\frac{\partial^2V}{\partial x^2}(f_s + p_s), 
\end{equation}
which, by Assumption~\ref{assumption:lipschitz} and by the hypotheses of this lemma, can be made so small that the time derivative of $V$ for~\eqref{eq:time_varying_driven} is negative. The rest of the proof is analogous to that of \cite[Theorem 56.3]{Hahn1967}.
\end{proof}

Finally, the following lemma gives important stability properties of nonlinear time-varying stochastic systems driven by almost surely stable systems.
\begin{lemma}
\label{lemma:stability_time_varying}
Consider the stochastic system
\begin{equation}
\label{eq:time_varying_corollary}
    \begin{aligned}
        \dot{y}_t &= \tilde{a}(\xi_t,y_t,z_t,t) = \tilde{a}_d(y_t,z_t,t) + \tilde{a}_s(y_t,z_t,t)\xi_t,\\
        \dot{z}_t &= \tilde{b}(\xi_t,z_t) = \tilde{b}_d(z_t) +\tilde{b}_s(z_t)\xi_t,
    \end{aligned}
    \end{equation}
    and assume the following.
    \begin{enumerate}
        \item[A1)] $(y,z) = (0,0)$ is an equilibrium of~\eqref{eq:time_varying_corollary} and $\tilde{a}_d$ and $\tilde{a}_s$ are Lipschitz continuous for all $t\ge 0$ in a neighbourhood of zero.
        \item[A2)] There exists a strict Lyapunov function $V(x,t)$ for the first equation of system~\eqref{eq:time_varying_corollary} such that $\frac{\partial V}{\partial x_i}(x,t)$ and $\frac{\partial^2 V}{\partial x_i \partial x_j}(x,t)$ are bounded for all $x\in U$ and $t\ge 0$.
        \item[A3)] The equilibrium $z = 0$ of $\dot{z}_t = \tilde{b}(\xi_t,z_t)$ is stable almost surely.
    \end{enumerate}
    Then the equilibrium $(y,z) = (0,0)$ of~\eqref{eq:time_varying_corollary} is (uniformly) stable almost surely.
\end{lemma}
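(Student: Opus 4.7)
The plan is to combine Lemma~\ref{lemma:total_stability} applied to the $y$-subsystem with assumption A3 applied to the $z$-subsystem, exploiting the cascade structure of~\eqref{eq:time_varying_corollary}. First I would rewrite the $y$-dynamics as a perturbation of its nominal version obtained by freezing $z = 0$, namely
\begin{equation*}
\dot{y}_t = \tilde{a}(\xi_t,y_t,0,t) + p_d(y_t,z_t,t) + p_s(y_t,z_t,t)\xi_t,
\end{equation*}
with $p_d(y,z,t) := \tilde{a}_d(y,z,t) - \tilde{a}_d(y,0,t)$ and $p_s(y,z,t) := \tilde{a}_s(y,z,t) - \tilde{a}_s(y,0,t)$. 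By A1 these perturbations vanish at $z=0$ and are locally Lipschitz in $z$ uniformly in $t$, hence there exists $L>0$ such that $\|p_d(y,z,t)\| \le L\|z\|$ and $\|p_s(y,z,t)\| \le L\|z\|$ in a neighbourhood of the origin.

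Second, I would use A3 to render these perturbations arbitrarily small along trajectories: for every $\mu>0$, almost-sure stability of $z=0$ yields $\delta_z(\mu)>0$ such that $\|z_{\bar t}\|<\delta_z$ implies $\|z_t\|<\mu$ almost surely for all $t\ge \bar t \ge 0$. Consequently $\sup_{t\ge \bar t}\|p_d(y_t,z_t,t)\|$ and $\sup_{t\ge \bar t}\|p_s(y_t,z_t,t)\|$ can both be made smaller than any prescribed threshold.

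Third, I would invoke Lemma~\ref{lemma:total_stability} on the $y$-subsystem, for which by A2 a strict Lyapunov function $V$ with bounded first and second partials is available. This yields, for each $\epsilon > 0$, thresholds $\delta_y,\delta_2,\delta_3>0$ such that $\|y_{\bar t}\|<\delta_y$ together with $\|p_d\|_\infty<\delta_2$, $\|p_s\|_\infty<\delta_3$ imply $\|y_t\|<\epsilon$ almost surely for all $t\ge \bar t$. Choosing $\mu = \min\{\delta_2,\delta_3\}/L$, invoking A3 to obtain $\delta_z(\mu)$, and setting $\delta := \min\{\delta_y,\delta_z(\mu),\epsilon\}$ would deliver the joint almost-sure (uniform) stability estimate for the equilibrium $(0,0)$.

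The main obstacle is that Lemma~\ref{lemma:total_stability} requires $\Lie_{p_s}V(x,t) = 0$, and since $V$ depends only on $y$ one has $\Lie_{p_s}V = (\partial V/\partial y)\,p_s(y,z,t)$, which need not vanish identically. To handle this I would refine the proof of Lemma~\ref{lemma:total_stability} using the full expression of the infinitesimal generator applied to $V$ along~\eqref{eq:time_varying_corollary}: the drift contribution of the perturbation is bounded by $\|\partial V/\partial y\|\cdot L\|z\|$, while the Itô correction introduces an extra term bounded by $\|\partial^2 V/\partial y^2\|\cdot L^2\|z\|^2$ plus a cross term $\|\partial^2 V/\partial y^2\|\cdot \|\tilde{a}_s(y,0,t)\|\cdot L\|z\|$. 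Since A2 guarantees that all these second-partial factors are bounded, and since the strict Lyapunov inequality supplies a strictly negative reserve $-W_3(y)$, these perturbative terms can all be dominated by the reserve on a neighbourhood of the origin once $\mu$ is chosen sufficiently small, thereby preserving the negativity of the generator of $V$ along the perturbed trajectory and yielding the desired conclusion.
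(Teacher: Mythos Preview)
Your approach is essentially the same as the paper's: decompose the $y$-equation into the nominal part $\tilde{a}(\xi_t,y_t,0,t)$ plus the perturbation $\tilde{p}$ induced by $z_t$, bound $p_d$ and $p_s$ via the Lipschitz assumption A1, invoke total stability (Lemma~\ref{lemma:total_stability}) for the $y$-subsystem, and close with A3 to make the perturbation arbitrarily small. The paper carries out exactly these steps with the same choice of $\delta$'s.

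The one place where you diverge is the treatment of the hypothesis $\Lie_{p_s}V=0$ in Lemma~\ref{lemma:total_stability}. The paper simply asserts that ``by Assumption A2 the first and second conditions of Lemma~\ref{lemma:total_stability} are satisfied'' and applies the lemma directly. The intended reading of A2 is that $V$ is a strict Lyapunov function for the \emph{first equation} of~\eqref{eq:time_varying_corollary}, i.e.\ the condition $\Lie_{f_s}V=0$ in Definition~\ref{def:strict_Lyapunov_function} holds with $f_s=\tilde{a}_s(y,z,t)$ for every $z$ in a neighbourhood of the origin. Under that reading, $\Lie_{p_s}V=\Lie_{\tilde{a}_s(\cdot,z,\cdot)}V-\Lie_{\tilde{a}_s(\cdot,0,\cdot)}V=0-0=0$ automatically, and no extra work is needed. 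Your alternative---reopening the proof of Lemma~\ref{lemma:total_stability} and absorbing the additional It\^o correction terms into the reserve $-W_3(y)$---is a legitimate and more self-contained route, but it is not the one the paper takes; the paper relies on the stronger interpretation of A2 to invoke Lemma~\ref{lemma:total_stability} as a black box.
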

\begin{proof}
Set $\tilde{f}(\xi_t,y_t,t) = \tilde{a}(\xi_t,y_t,0,t)$ and $\tilde{p}(\xi_t, y_t,t) = \tilde{a}(\xi_t,y_t,z_t(\bar t, \bar{z}),t) - \tilde{a}(\xi_t,y_t,0,t)$, with $z_t(\bar t, \bar{z})$ solution of $\dot{z}_t = \tilde{b}(\xi_t,z_t)$ satisfying $z_{\bar{t}} = \bar z$. Then the first equation in~\eqref{eq:time_varying_corollary} has the form~\eqref{eq:time_varying_driven}. If $\Vert z_{t} \Vert < \epsilon_z$ almost surely for all $t>\bar t$, then, by Assumption A1 there exist $\mu_1>0$ and $\mu_2>0$ such that
\begin{equation}
\begin{gathered}
    \Vert p_d(y, t) \Vert = \Vert a_d(y,z_t(\bar t, \bar{z}),t) - a_d(y,0,t) \Vert < \mu_1 \epsilon_z,\\
    \Vert p_s(y, t) \Vert = \Vert a_s(y,z_t(\bar t, \bar{z}),t) - a_s(y,0,t) \Vert < \mu_2 \epsilon_z,
\end{gathered}
\end{equation}
almost surely for all $y$ in a neighbourhood of zero and all $t\ge \bar t$. Observe that by Assumption A2 the first and second conditions of Lemma~\ref{lemma:total_stability} are satisfied for the first equation in~\eqref{eq:time_varying_corollary}. Therefore, the first equation in~\eqref{eq:time_varying_corollary} is totally stable at $y=0$. By Definition~\ref{definition:total_stability}, for all $\epsilon_y$, there exist $\delta_1>0$, $\delta_2>0$ and $\delta_3>0$ such that $\Vert \bar y \Vert < \delta_1$, $\Vert p_d(y,t)\Vert < \delta_2$ and $\Vert p_s(y,t)\Vert<\delta_3$, for all $(y,t)$ such that $\Vert y \Vert < \epsilon_y$ and $t\ge \bar t \ge 0 $, imply
\begin{equation}
    \Vert y_t(\bar{t}, \bar{y}) \Vert < \epsilon_y \quad \text{almost surely for all $t\ge \bar t$}.
\end{equation}
Finally, by Assumption A3, it is possible to find a $\delta_z$ such that $\Vert \bar z \Vert < \delta_z$ yields $\Vert z_t \Vert < \bar \delta = \min \{\delta_2, \delta_3\}/\max\{\mu_1,\mu_2\}$ for all $t\ge \bar t$, which, for what stated earlier, makes $\Vert p_d(y, t) \Vert < \mu_1\bar \delta \le \delta_2$ and $\Vert p_s(y, t) \Vert < \mu_2\bar \delta \le \delta_3$ be satisfied almost surely. Therefore, for a bounded initial condition $(\bar y, \bar z)$ in a neighbourhood of the equilibrium $(0,0)$, the trajectory $(y_t,z_t)$ is bounded almost surely in a neighbourhood of $(0,0)$, which concludes the proof. 
\end{proof}

\subsection{Almost sure stability under zero-noise control}
\begin{lemma}
\label{lemma:stability_from_internal_dynamics}
Consider the system
\begin{equation}
\label{eq:system_general_internal_dynamics}
\dot{y}_t = \hat{A} y_t + \gamma(\xi_t,y_t,z_t) \quad
\dot{z}_t = \hat{f}(\xi_t,y_t,z_t),
\end{equation}
with $\gamma$ and $\hat{f}$ affine in $\xi_t$ and suppose that $\gamma(\xi_t, 0, z)=0$ for $z$ in a neighbourhood of zero and that
$\frac{\partial \gamma}{\partial y}(\xi_t,0,0)=0.$
If $\dot{z}_t = f(\xi_t,0,z_t)$ has an almost surely asymptotically stable equilibrium at $z=0$ and the eigenvalues of $A$ all have negative real part, then system~\eqref{eq:system_general_internal_dynamics} has an almost surely asymptotically stable equilibrium at $(y,z) = (0,0)$.
\end{lemma}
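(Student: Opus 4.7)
The plan is to exploit the vanishing structure of $\gamma$ to reduce the problem to a local perturbation of the exponentially stable $\dot{y}_t = \hat{A}y_t$ coupled with the a.s.\ stable nominal $\dot{z}_t = \hat{f}(\xi_t,0,z_t)$, and then to close a coupled total-stability argument. First I would verify that $(y,z)=(0,0)$ is an equilibrium: the condition $\gamma(\xi_t,0,z)=0$ evaluated at $z=0$ gives $\gamma(\xi_t,0,0)=0$, so the $y$-equation vanishes at the origin, and $\hat{f}(\xi_t,0,0)=0$ because $z=0$ is an equilibrium of the nominal $z$-equation.

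Next I would apply a Hadamard-type factorisation twice. Since $\gamma(\xi_t,0,z)\equiv 0$ and $\gamma$ is smooth and affine in $\xi_t$, one can write $\gamma(\xi_t,y,z)=\Gamma(\xi_t,y,z)y$ with $\Gamma$ smooth and affine in $\xi_t$. The hypothesis $\partial\gamma/\partial y(\xi_t,0,0)=0$ forces $\Gamma(\xi_t,0,0)=0$, so for any $\mu>0$ there is a neighbourhood $\mathcal{B}$ of $(0,0)$ on which both the drift and diffusion components of $\Gamma$ have norm at most $\mu$. The $y$-equation then takes the multiplicative-perturbation form $dy_t = (\hat{A}+\Gamma_d(y_t,z_t))y_t\,dt + \Gamma_s(y_t,z_t)y_t\,d\mathcal{W}_t$, a small perturbation of the exponentially stable $\dot{y}_t=\hat{A}y_t$. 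A symmetric factorisation $\hat{f}(\xi_t,y,z)-\hat{f}(\xi_t,0,z) = H(\xi_t,y,z)y$ makes the $z$-equation a perturbation of its nominal form that vanishes at $y=0$.

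With $P\succ 0$ solving $\hat{A}^\top P+P\hat{A}=-I$, a routine It\^o computation on $V_y(y)=y^\top Py$ yields, on $\mathcal{B}$ with $\mu$ sufficiently small, $\Lie V_y \le -\tfrac{1}{2}\Vert y\Vert^2$ uniformly in $z$. Suitably interpreted in the framework of Definition~\ref{def:strict_Lyapunov_function} (after a small modification to neutralise the $\Lie_{\Gamma_s y}V_y$ term, or in the weaker stability-in-probability framework discussed in the remark following Lemma~\ref{lemma:uniform_stability}), $V_y$ acts as a strict Lyapunov function for the $y$-subsystem treated with $z$ as an exogenous input. Lemma~\ref{lemma:total_stability} then yields a.s.\ total stability of the $y$-subsystem, so whenever $z_t$ stays a.s.\ small, $y_t$ stays a.s.\ small and tends to zero. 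Conversely, since the nominal $z$-equation is a.s.\ asymptotically stable at $z=0$, small $y_t$ keeps the actual $z$-equation a small perturbation of its nominal form, and $z_t$ stays a.s.\ small.

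Iterating these two one-sided estimates on a shrinking sequence of nested neighbourhoods closes the loop and produces the claimed a.s.\ asymptotic stability of $(0,0)$. The main obstacle is precisely this bidirectional coupling, which prevents a direct application of Lemma~\ref{lemma:stability_time_varying}, whose hypotheses include an autonomous $z$-subsystem. Bridging it requires either building a strict Lyapunov function for the nominal $z$-equation via a converse theorem, delicate in the a.s.\ setting, or, more cleanly, a path-wise continuity-in-parameters argument: for almost every $\omega$, continuous dependence of the governing ODE on its initial data forces $z_t(\omega)$ to remain close to its nominal asymptotically stable trajectory whenever $y_t(\omega)$ is small, letting the Lyapunov-based bound on the $y$-side and the a.s.\ stability on the $z$-side reinforce each other.
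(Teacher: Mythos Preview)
Your approach is genuinely different from the paper's, and the gaps you yourself flag are real. The paper does not attempt a perturbation/total-stability argument at all: it invokes the stochastic center manifold theory of Boxler. It first decomposes the nominal $z$-subsystem $\dot z_t=\hat f(\xi_t,0,z_t)$ into its Oseledec center and stable subspaces $E_c,E_s$, obtains a stochastic center manifold $z^s=\pi_1(\xi_t,z^c)$, and uses the reduction principle to conclude that the reduced dynamics on this manifold is a.s.\ asymptotically stable (because the full $z$-subsystem is, by hypothesis). For the coupled system~\eqref{eq:system_general_internal_dynamics} the center manifold is $z^s=\chi_1(\xi_t,z^c)$, $y=\chi_2(\xi_t,z^c)$. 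The structural hypotheses on $\gamma$ force the linearisation of the $y$-equation at the origin to be exactly $\dot y_t=\hat A y_t$, so all $y$-directions carry strictly negative Lyapunov exponents and lie in the stable subspace; hence $\chi_2\equiv 0$, which in turn forces $\chi_1=\pi_1$. The reduced dynamics of the full system therefore coincides with that of the $z$-subsystem alone, and the reduction principle yields a.s.\ asymptotic stability of $(0,0)$ directly. The bidirectional coupling that blocks your argument is dissolved by the center-manifold reduction rather than by any iterative small-gain estimate.

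Against this, your route has two concrete obstructions you have not closed. First, the strict Lyapunov function of Definition~\ref{def:strict_Lyapunov_function} requires $\Lie_{f_s}V=0$, which fails for $V_y(y)=y^\top P y$ with diffusion $\Gamma_s(y,z)y$ unless $\Gamma_s\equiv 0$; retreating to the stability-in-probability remark downgrades the conclusion and no longer matches the lemma's ``almost surely'' claim. Second, even granting total stability on the $y$-side, the feedback from $y$ into $\hat f$ means the $z$-equation is not autonomous, so Lemma~\ref{lemma:stability_time_varying} does not apply and you are left needing either a converse a.s.\ Lyapunov theorem for the $z$-subsystem (which you correctly note is delicate) or a path-wise continuous-dependence argument that must deliver \emph{asymptotic} stability, not merely boundedness; your ``iterating on a shrinking sequence of neighbourhoods'' sketch establishes at best uniform stability, not attraction. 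The center-manifold route sidesteps both issues in one stroke.
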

\begin{proof}
Consider the stochastic subsystem $\dot{z}_t = f(\xi_t,0,z_t)$. Since this has an almost surely asymptotically stable equilibrium at 0 by assumption, by \cite{Boxler1989} it is possible to decompose the system in the Oseledec subspaces $E_c$ and $E_s$ relative to the zero and negative Lyapunov exponents of the linearised dynamics respectively. Let $z_t^c\in E_c$ be the projection of $z_t$ onto $E_c$ and $z_t^s \in E_s$ the projection of $z_t$ onto $E_s$; then for some $F_c$, $F_s$, $g_c$ and $g_s$
\begin{equation}
\label{eq:internal_dynamics_decoupled}
\begin{aligned}
    \dot{z}_t^c &= F_c(\xi_t)z_t^c + g_c(\xi_t,0, z_t^c, z_t^s),\\
    \dot{z}_t^s &= F_s(\xi_t)z_t^s + g_s(\xi_t,0, z_t^c, z_t^s),\\
\end{aligned}
\end{equation}
where the Lyapunov exponents associated to $F_c$ are zero and those associated to $F_s$ are negative. Let $z_t^s = \pi_1(\xi_t, z_t^c)$ be a stochastic center manifold (see \cite{Boxler1989}) at zero for~\eqref{eq:internal_dynamics_decoupled}. Then, by the reduction principle (\cite[Theorem 7.3]{Boxler1989}), since the subsystem $\dot{z}_t = f(\xi_t,0,z_t)$ has an almost surely asymptotically stable equilibrium at $z=0$ by assumption, then the equilibrium at the origin of
\begin{equation}
\label{eq:reduced_dynamics_1}
   \dot{z}_t^c = F_c(\xi_t)z_t^c + g_c(\xi_t,0,z_t^c, \pi_1(\xi_t,z_t^c)) 
\end{equation}
is asymptotically stable almost surely as well. Consider now the full system~\eqref{eq:system_general_internal_dynamics}, for which a center manifold at zero is described by the pair
\begin{equation}
    z_t^s = \chi_1(\xi_t, z_t^c), \quad  y_t = \chi_2(\xi_t, z_t^c).
\end{equation}
By the reduction principle, the dynamics~\eqref{eq:system_general_internal_dynamics} has an almost surely asymptotically stable equilibrium at $(0,0)$ if the reduced system
\begin{equation}
\label{eq:reduced_dynamics_2}
    \dot{z}_t^c = F_c(\xi_t)z_t^c + g_c(\xi_t,\chi_2(\xi_t,z_t^c),z_t^c, \chi_1(\xi_t,z_t^c))
\end{equation}
has an almost surely asymptotically stable equilibrium at 0. It is easy to see that $\chi_2(\xi_t, z_t^c) = 0$, because, by the properties of $\gamma$, the first approximation of the dynamics of $y_t$ reduces to $\dot{y}_t = Ay_t$, which is asymptotically stable by assumption. Then, since $\chi_1(\xi_t, z_t^c) = \pi_1(\xi_t, z_t^c)$, \eqref{eq:reduced_dynamics_2} reduces to \eqref{eq:reduced_dynamics_1}, which has been proved to have an almost surely asymptotically stable equilibrium at zero. Hence system~\eqref{eq:system_general_internal_dynamics} has an almost surely asymptotically stable equilibrium at $(y,z) = (0,0)$.
\end{proof}

Consider system~\eqref{eq:normal_form_restricted} and the control
\begin{equation}
    u_t^{zn,stab} = -\frac{c_d(\zeta_t, \eta_t) - v(\zeta_t)}{b(\zeta_t, \eta_t)},
\end{equation}
with $c_d$ (and $c_s$) as defined in Section~\ref{subsection:compensating_control} and $v(\zeta_t) = - d_{r-1}\zeta_r - d_{r-2}\zeta_{r-1} - \dots - d_0 \zeta_1$,
with $d_i \in \mathbb{R}$ for $i=1,\dots,r-1$ such that the polynomial~\eqref{eq:characteristic polynomyal} has roots with negative real part. Then the following holds.

\begin{proposition}
\label{proposition:zero_noise_stabilisation}
Consider system~\eqref{eq:normal_form_restricted} and suppose that the equilibrium at $\eta = 0$ of the zero dynamics of system~\eqref{eq:normal_form_restricted} is locally asymptotically stable almost surely. Additionally, suppose that $c_s(0, \eta) \equiv 0$ for $\eta$ in a neighbourhood of zero and that $\frac{\partial c_s}{\partial \zeta}(0, 0) = 0$. Then the control law $u_t^{zn,stab}$ renders the equilibrium $(\zeta,\eta) = (0,0)$ asymptotically stable almost surely.
\end{proposition}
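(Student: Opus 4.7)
The plan is to apply Lemma~\ref{lemma:stability_from_internal_dynamics} directly, after rewriting the closed-loop dynamics in the form the lemma requires and checking its hypotheses. First I would substitute $u_t = u_t^{zn,stab}$ into the stochastic normal form~\eqref{eq:normal_form_restricted}. Since this control exactly cancels the deterministic term $c_d$ and injects the stabilising feedback $v(\zeta_t)$, but does not cancel $c_s(\zeta_t,\eta_t)\xi_t$ (which requires knowledge of the noise), the closed-loop flow reads
\begin{equation}
\dot{\zeta}_t = A\zeta_t + B\, c_s(\zeta_t,\eta_t)\xi_t, \qquad \dot{\eta}_t = p(\xi_t,\zeta_t,\eta_t),
\end{equation}
where $A$ is given by~\eqref{eq:controllability_form} and $B=[0,\dots,0,1]^\top$.

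Next I would match this with the setting of Lemma~\ref{lemma:stability_from_internal_dynamics} by setting $y_t = \zeta_t$, $z_t = \eta_t$, $\hat{A} = A$, $\gamma(\xi_t,y_t,z_t) = B\,c_s(y_t,z_t)\xi_t$ and $\hat{f}(\xi_t,y_t,z_t) = p(\xi_t,y_t,z_t)$. By construction $\gamma$ is affine (in fact linear) in $\xi_t$, and $\hat{f}$ is affine in $\xi_t$ because, as observed after Proposition~\ref{proposition:general_normal_form}, in the class $m\equiv 0$ the coefficients $p_j$ of the normal form are affine in the noise.

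I would then verify the three structural assumptions of Lemma~\ref{lemma:stability_from_internal_dynamics}. The condition $\gamma(\xi_t,0,z)=0$ in a neighbourhood of $z=0$ follows immediately from the hypothesis $c_s(0,\eta)\equiv 0$ near $\eta=0$. The condition $\partial \gamma/\partial y(\xi_t,0,0)=0$ follows from $\tfrac{\partial c_s}{\partial \zeta}(0,0)=0$, since $\partial \gamma/\partial y = B\,\tfrac{\partial c_s}{\partial \zeta}\,\xi_t$. The almost sure asymptotic stability of $\dot{z}_t = \hat{f}(\xi_t,0,z_t)$ at $z=0$ is precisely the almost sure asymptotic stability of the zero dynamics $\dot{\eta}_t = p(\xi_t,0,\eta_t)$, assumed in the statement. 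Finally, the eigenvalues of $A$ have negative real parts by the choice of the coefficients $d_i$ making $\Lambda(s)$ in~\eqref{eq:characteristic polynomyal} Hurwitz. With all hypotheses met, Lemma~\ref{lemma:stability_from_internal_dynamics} yields the almost sure asymptotic stability of the equilibrium $(\zeta,\eta)=(0,0)$, which is the claim.

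The main subtlety is not a hard calculation but the structural observation that the two hypotheses on $c_s$ are exactly what is needed to make the noise-driven cross-coupling $B\,c_s(\zeta,\eta)\xi_t$ qualify as a well-behaved perturbation for the centre-manifold reduction underlying Lemma~\ref{lemma:stability_from_internal_dynamics}: without $c_s(0,\eta)\equiv 0$ the $\zeta$-subsystem would be persistently excited by the noise acting along the internal dynamics, and without $\tfrac{\partial c_s}{\partial \zeta}(0,0)=0$ the linearisation of the coupling at the origin would perturb the spectrum of $A$ and could destroy the Hurwitz property in a stochastic sense. Once these points are highlighted, the rest of the proof is a direct verification.
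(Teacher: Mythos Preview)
Your proposal is correct and follows essentially the same approach as the paper: both substitute $u_t^{zn,stab}$ to obtain the closed-loop system $\dot{\zeta}_t = A\zeta_t + Bc_s(\zeta_t,\eta_t)\xi_t$, $\dot{\eta}_t = p(\xi_t,\zeta_t,\eta_t)$, identify $\gamma(\xi_t,\zeta_t,\eta_t)=Bc_s(\zeta_t,\eta_t)\xi_t$, and invoke Lemma~\ref{lemma:stability_from_internal_dynamics} after checking its hypotheses via the assumptions on $c_s$. Your write-up is in fact more explicit than the paper's in spelling out how each hypothesis of the lemma is verified.
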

\begin{proof}
Observe that, if $u_t= u_t^{zn,stab}$ in~\eqref{eq:normal_form_restricted} then, the closed-loop dynamics is given by
    \begin{equation}
\label{eq:time_varying_closed_loop_system_approximate}
    \dot{\zeta}_t = A\zeta_t + Bc_s(\zeta_t, \eta_t)\xi_t, \quad \dot{\eta}_t = p(\xi_t, \zeta_t, \eta_t)
\end{equation}
where $A$ is given by~\eqref{eq:controllability_form}, which has characteristic polynomial \eqref{eq:characteristic polynomyal}. This system is in the form~\eqref{eq:system_general_internal_dynamics} with $\gamma(\xi_t,\zeta_t,\eta_t) = Bc_s(\zeta_t, \eta_t)\xi_t$ and, by the assumptions on $c_s$, $\gamma$ satisfies the hypotheses of Lemma~\ref{lemma:stability_from_internal_dynamics}. Therefore, the equilibrium at the origin is asymptotically stable almost surely. 
\end{proof}

\begin{remark}
When studying asymptotic stabilisation, the convergence of the states to the equilibrium at the origin contradicts the persistence of excitation condition of Assumption~\ref{assumption:persistence_excitation}. However, following from the discussion in Remark~\ref{remark:persistence_excitation}, if the hypotheses of Proposition~\ref{proposition:zero_noise_stabilisation} are satisfied, one might perform practical asymptotic stabilisation using a control $\hat{u}_t^{stab} = u_t^{zn,stab} + u_t^s$ until the states are in an arbitrarily small neighbourhood of zero, and then switch to just the zero-noise control $u_t^{zn,stab}$. Doing this has the advantage of obtaining state trajectories that are closer, as $\varepsilon$ goes to zero, to the idealistic ones when the system is away from the equilibrium, thus ensuring a more predictable behaviour of the system.    
\end{remark}

\bibliographystyle{IEEEtran}
\bibliography{bibliography}

\begin{IEEEbiography}[{\includegraphics[width=1in,height=1.25in,clip,keepaspectratio]{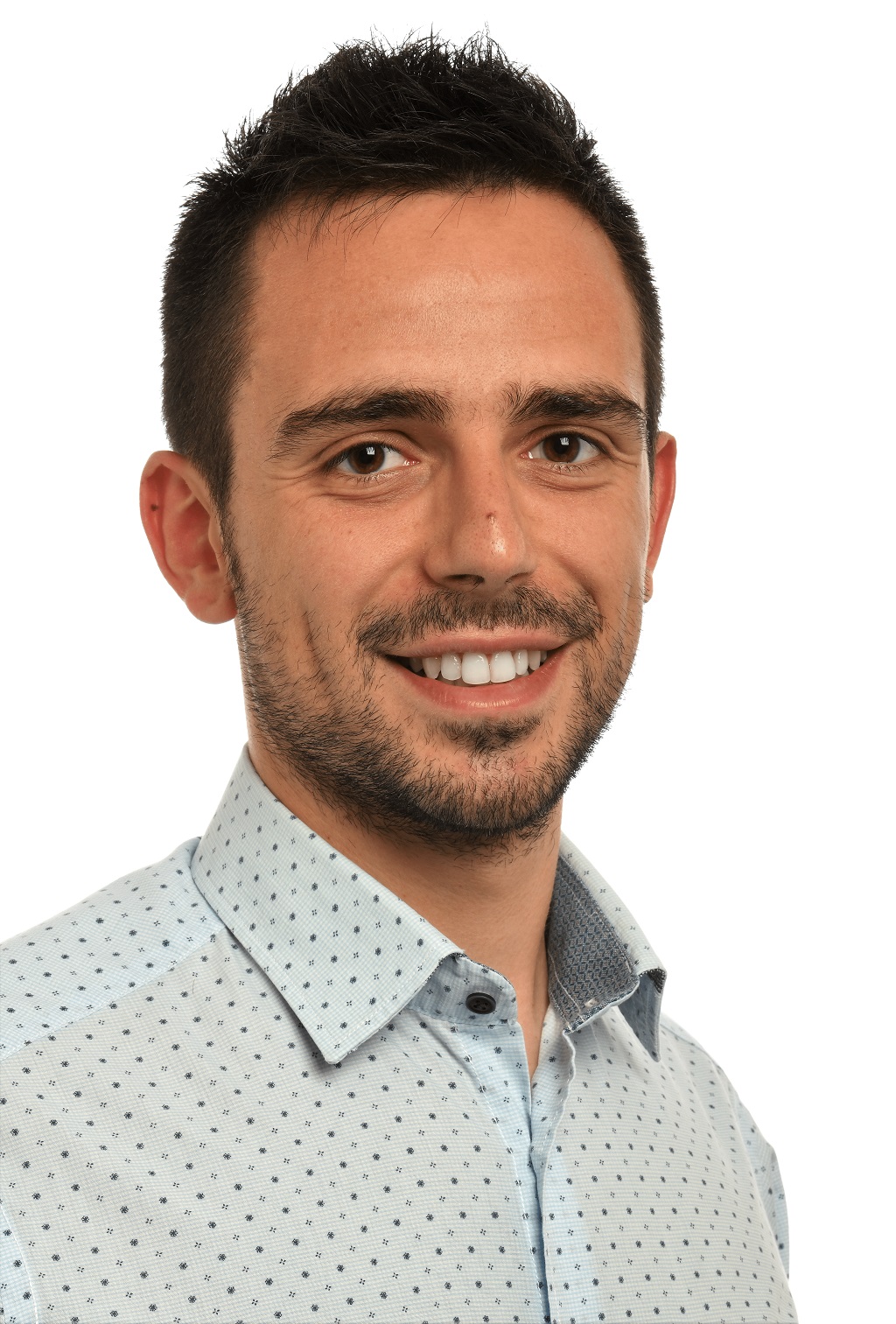}}]{Alberto Mellone} was born in Galatina (Lecce), Italy, in 
1993. He received the B.S. degree in information engineering from the University of Salento, Lecce, Italy in 2015 and the M.S. degree in robotics and automation engineering from 
the University of Pisa, Pisa, Italy in 2018. In 2018 he joined the Control and Power Group, Imperial College London, London, UK, where he he pursued a Ph.D. looking into analysis and control of stochastic systems. He graduated in 2022 with a thesis on path-wise output regulation and nonlinear control of stochastic systems. 
\end{IEEEbiography}

\begin{IEEEbiography}[{\includegraphics[width=1in,height=1.25in,clip,keepaspectratio]{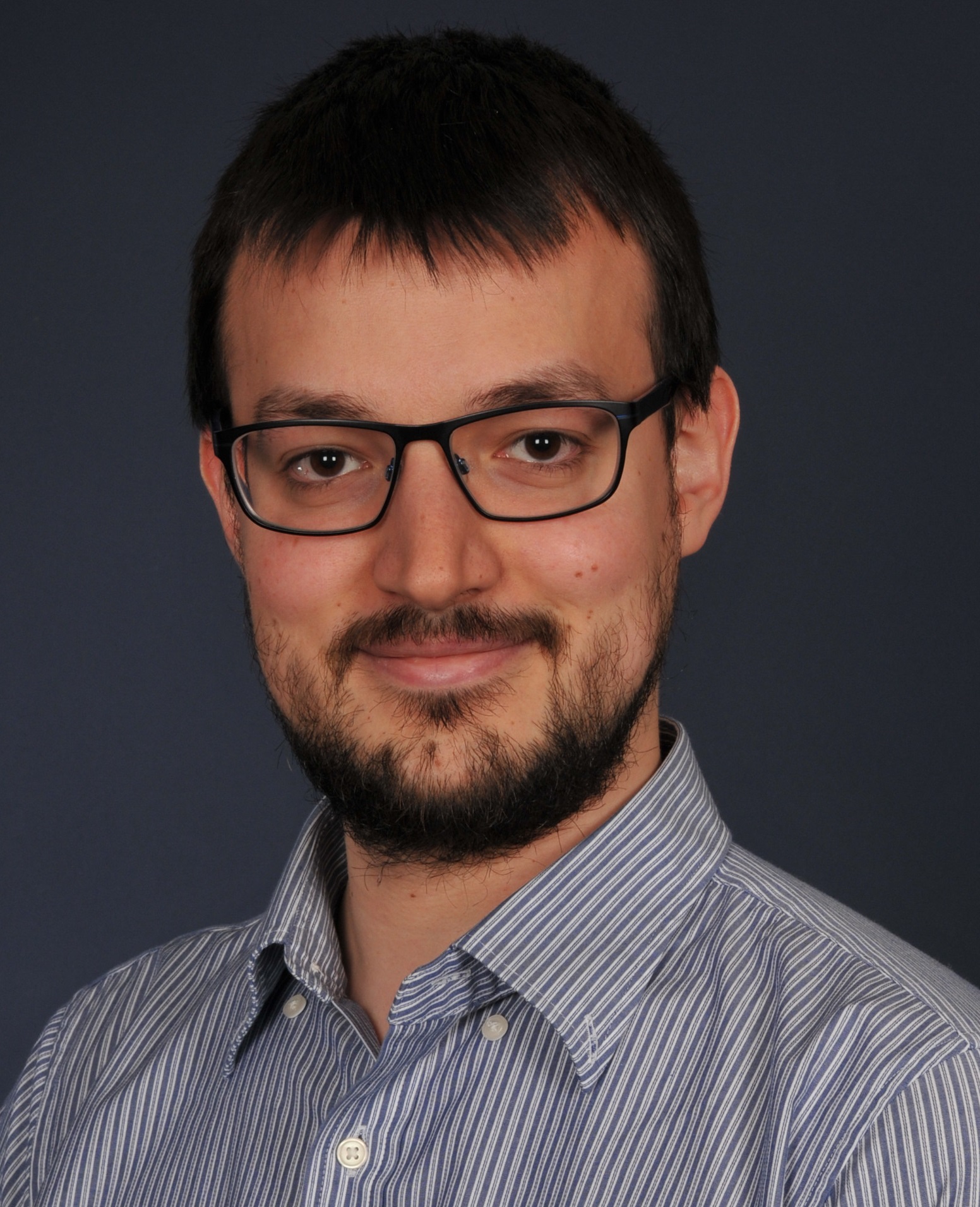}}]{Giordano Scarciotti} (Senior Member, IEEE) was born in Frascati (Rome), Italy, in 1988. He received his B.Sc. and M.Sc. degrees in Automation Engineering from the University of Rome “Tor Vergata”, Italy, in 2010 and 2012, respectively. In 2012 he joined the Control and Power Group, Imperial College London, UK, where he obtained a Ph.D. degree in 2016 with a thesis on approximation, analysis and control of large-scale systems. He also received an M.Sc. in Applied Mathematics from Imperial College in 2020. He is currently a Senior Lecturer in the Control and Power Group. His current research interests are focused on analysis and control of uncertain systems, model reduction and optimal control. He was a visiting scholar at New York University in 2015 and at University of California Santa Barbara in 2016 and he is currently a Visiting Fellow of Shanghai University. He is the recipient of an Imperial College Junior Research Fellowship (2016), of the IET Control \& Automation PhD Award (2016), the Eryl Cadwaladr Davies Prize (2017) and an ItalyMadeMe award (2017). He is a member of the IEEE CSS Conference Editorial Board, of the IFAC and IEEE CSS Technical Committees on Nonlinear Control Systems and has served in the International Programme Committee of multiple conferences. He is the National Organising Committee Chair for the EUCA European Control Conference 2022 and the Invited Session Chair for IFAC Symposium on Nonlinear Control Systems 2022.
\end{IEEEbiography}
\end{document}